
\documentclass[10pt,a4paper]{article}
\usepackage{amssymb}
\usepackage{amsmath}
\usepackage{amsthm}
\usepackage{latexsym}
\usepackage[dvips]{epsfig}
\usepackage{mathrsfs}

\usepackage{bm}
\usepackage{stmaryrd}
\usepackage{authblk}

\usepackage[dvipsnames]{xcolor}
\usepackage{tikz}
\usepackage{enumitem}

\theoremstyle{plain}
\newtheorem{proposition}{Proposition}
\newtheorem{lemma}{Lemma}
\newtheorem{theorem}{Theorem}
\newtheorem{assumption}{Assumption}

\newtheorem*{main}{Theorem}
\newtheorem{definition}{Definition}
\newtheorem{remark}{Remark}

\setlength{\textwidth}{148mm}           
\setlength{\textheight}{235mm}          
\setlength{\topmargin}{-5mm}            
\setlength{\oddsidemargin}{5mm}         
\setlength{\evensidemargin}{5mm}

\allowdisplaybreaks

\def\bma{{\bm a}}
\def\bmb{{\bm b}}
\def\bmc{{\bm c}}
\def\bmd{{\bm d}}
\def\bme{{\bm e}}

\def\bmg{{\bm g}}
\def\bmh{{\bm h}}
\def\bmi{{\bm i}}
\def\bmj{{\bm j}}
\def\bmk{{\bm k}}

\def\bmn{{\bm n}}

\def\bmq{{\bm q}}

\def\bmv{{\bm v}}

\def\bmx{{\bm x}}

\def\bmzero{{\bm 0}}
\def\bmone{{\bm 1}}

\def\bmA{{\bm A}}
\def\bmB{{\bm B}}

\def\bmK{{\bm K}}
\def\bmL{{\bm L}}

\def\bmN{{\bm N}}

\def\bmQ{{\bm Q}}



\def\bmalpha{{\bm \alpha}}
\def\bmbeta{{\bm \beta}}

\def\bmeta{{\bm \eta}}

\def\bmomega{{\bm \omega}}

\def\bmsigma{{\bm \sigma}}

\def\bmell{{\bm \ell}}


\def\bmpartial{{\bm \partial}}

\def\bmell{{\bm \ell}}

\DeclareMathOperator\arctanh{arctanh}
\DeclareMathOperator\sech{sech}

\usepackage{mathtools}
\usepackage{xparse}
\makeatletter
\newcommand{\raisemath}[1]{\mathpalette{\raisem@th{#1}}}
\newcommand{\raisem@th}[3]{\raisebox{#1}{$#2#3$}}
\makeatother
\NewDocumentCommand{\newrbar}{O{0pt} O{0pt}}{
  \ensuremath{\mathrlap{\raisemath{#2}{\hspace*{#1}{\mathchar'26\mkern-9mu}}}r}}

\newcounter{mnotecount}

\newcommand{\mnotex}[1]
{\protect{\stepcounter{mnotecount}}$^{\mbox{\footnotesize $\bullet$\themnotecount}}$ 
\marginpar{
\raggedright\tiny\em
$\!\!\!\!\!\!\,\bullet$\themnotecount: #1} }

\newcounter{mnote}

\usepackage{color}
\usepackage{xcolor}
\usepackage[normalem]{ulem}
\usepackage{soul}
\usepackage[hypertexnames=false]{hyperref}

\begin{document}

\title{\textbf{A comparison of Ashtekar's and Friedrich's formalisms of spatial infinity}}
 
\author[1]{ Mariem Magdy Ali Mohamed  \footnote{E-mail
    address:{\tt m.m.a.mohamed@qmul.ac.uk}}}
\author[1]{Juan A. Valiente Kroon \footnote{E-mail address:{\tt j.a.valiente-kroon@qmul.ac.uk}}}

\affil[1]{School of Mathematical Sciences, Queen Mary, University of London,
Mile End Road, London E1 4NS, United Kingdom.}

\maketitle

\begin{abstract}
Penrose's idea of asymptotic flatness provides a framework for understanding the asymptotic structure of gravitational fields of isolated systems at null infinity. However, the studies of the asymptotic behaviour of fields near spatial infinity are more challenging due to the singular nature of spatial infinity in a regular point compactification for spacetimes with non-vanishing ADM mass. Two different frameworks that address this challenge are Friedrich's cylinder at spatial infinity and Ashtekar's definition of \emph{asymptotically Minkowskian spacetimes at spatial infinity} that give rise to the 3-dimensional \emph{asymptote at spatial infinity $\mathcal{H}$}. Both frameworks address the singularity at spatial infinity although the link between the two approaches had not been investigated in the literature. This article aims to show the relation between Friedrich's cylinder and the asymptote at spatial infinity. To do so, we initially consider this relation for Minkowski spacetime. It can be shown that the solution to the conformal geodesic equations provides a conformal factor that links the cylinder and the asymptote. For general spacetimes satisfying Ashtekar's definition, the conformal factor cannot be determined explicitly. However, proof of the existence of this conformal factor is provided in this article. Additionally, the conditions satisfied by physical fields on the asymptote $\mathcal{H}$ are derived systematically using the conformal constraint equations. Finally, it is shown that a solution to the conformal geodesic equations on the asymptote can be extended to a small neighbourhood of spatial infinity by making use of the stability theorem for ordinary differential equations. This solution can be used to construct a conformal Gaussian system in a neighbourhood of $\mathcal{H}$.
\end{abstract}


\section{Introduction}
\label{Section:Introduction}
The theory of isolated systems plays a central role in astrophysical applications of Einstein's theory of General Relativity. A particularly influential approach to this theory is through Penrose's definition of \emph{asymptotic simplicity} ---see e.g. \cite{PenRin86,CFEBook}:
\begin{definition}
\label{Definition:AsymptoticSimplicity}
A vacuum spacetime $(\tilde{\mathcal{M}},\tilde{\bmg})$ is said to be asymptotically simple if there exists a smooth, oriented, time-oriented and causal spacetime $(\mathcal{M},\bmg)$ and a $C^{\infty}$ function $\Xi$ on $\mathcal{M}$ such that:
\begin{enumerate}[label=(\roman*)]
    \item $\mathcal{M}$ is a manifold with boundary $\mathscr{I} \equiv \partial \mathcal{M}$;
    \item $\Xi > 0$ on $\mathcal{M} \backslash  \mathscr{I}$ and $\Xi = 0$, $\mathbf{d}{\Xi} \neq 0$ on $\mathscr{I}$;
    \item the manifolds $\tilde{\mathcal{M}}$ and $\mathcal{M}$ are related by an embedding $\phi : \tilde{\mathcal{M}} \rightarrow \mathcal{M}$ such that $\phi(\tilde{\mathcal{M}}) = \mathcal{M} \backslash \mathscr{I}$ and 
    \begin{equation}
        \phi^{*} \bmg= \Xi^2 \tilde{\bmg}; \label{conformal-transformation}
    \end{equation}
    \item each null geodesic of $(\tilde{\mathcal{M}}, \tilde{\bmg})$ starts and ends on $\mathscr{I}$.
\end{enumerate}
\end{definition}
Identifying the interior of $\mathcal{M}$ with $\tilde{\mathcal{M}}$ one can write equation \eqref{conformal-transformation}, in a slight abuse of notation, as $\bmg= \Xi^2 \tilde{\bmg}$. Condition \emph{(iv)} is of global nature, which strictly speaking makes black hole spacetimes like the Schwarzschild solution not asymptotically simple but rather \emph{weakly asymptotically simple}.

\medskip
Roughly speaking, a spacetime is said to be asymptotically simple if it admits a conformal extension similar to that of Minkowski spacetime whereby a null hypersurface (\emph{null infinity}) $\mathscr{I}$ is attached to the spacetime. Generically, $\mathscr{I}$ consists of two disjoints components $\mathscr{I}^+$ (\emph{future null infinity}) and $\mathscr{I}^-$ (\emph{past null infinity}) representing, respectively, the endpoints and startpoints of null geodesics. As stressed by Geroch \cite{Ger76}, the central aim of the definition of asymptotically simple spacetimes (and, in fact, any approach to define isolated systems) is to identify universal structures in a wide class of spacetimes which can, in turn, be used to introduce notions of physical interest ---say, conserved quantities, radiation states. The definition tries to strike a balance between ensuring the existence of enough structures to be able to extract useful conclusions and not being too restrictive to avoid that only a handful of exact solutions satisfy it.

\medskip
A central aspect of the definition of asymptotic flatness is the assumed regularity of the conformal boundary. The classic definition of asymptotic flatness assumed smoothness. However, it is now widely accepted that this is an unnecessarily strong requirement --- see e.g. \cite{Fri18}. Although to first sight, the issue of the regularity of the conformal boundary may seem like a technicality, it is in fact, it has great physical content ---an important insight of the conformal approach is that regularity questions in the conformal point of view translate to assertions on the decay of fields. It is now known that many of the key structures supplied by the notion of asymptotic simplicity are present under more relaxed regularity assumptions ---see e.g. \cite{Fri18,ChrMacSin95}.

\medskip
The definition of asymptotical simplicity only postulates the existence of null infinity, $\mathscr{I}$. However, the conformal compactification of the Minkowski spacetime in, say, the Einstein cylinder (see e.g. \cite{CFEBook}, Section 6.2) shows that the conformal boundary of this spacetime contains a further point, \emph{spatial infinity} $i^0$, corresponding to the endpoints of spacelike geodesics. While $\mathscr{I}$ is central in the discussion of radiation properties of isolated bodies in General Relativity, $i^0$ is key in the discussion of conserved quantities ---see e.g. \cite{Ger76,AshRom92}. Also, as evidenced by the stability theorems of Minkowski spacetime, the properties of the gravitational field in a neighbourhood of spatial infinity is closely related to the regularity of null infinity.

\medskip
While in Minkowski spacetime $i^0$ is a regular point of the conformal structure, it is well known that for spacetimes with non-vanishing (ADM) mass, one has a singularity of the conformal structure ---see e.g. \cite{CFEBook}, Chapter 20. This fact makes the analysis of the gravitational field particularly
challenging. To address this difficulty, it is necessary to introduce a different representation of spatial infinity which, in turn, allows to suitably resolve the structure of the gravitational field in this region. In particular, to avoid having to deal with directional dependent limits at spatial infinity, it is natural to blow up the point $i^0$ to a 2-sphere. 

\medskip
\noindent
\textbf{The hyperboloid at spatial infinity.} In an attempt to overcome the difficulties posed by the fact that the classic definition of asymptotic simplicity (see Definition \ref{Definition:AsymptoticSimplicity} above) makes no reference to the behaviour of the gravitational field at spatial infinity, in \cite{AshHan78} Ashtekar \& Hansen introduced a new definition of asymptotic simplicity in both null and spacelike directions ---see also \cite{Ash80}. This definition (\emph{asymptotically empty and flat at null infinity and spatial infinity, AEFANSI}) combines the conditions on the conformal extension given in the definition of asymptotical flatness with further assumptions on the conformal factor at spatial infinity. Essentially, it is further required that spatial infinity admits a so-called \emph{point compactification} ---see, e.g. \cite{CFEBook}, Section~11.6 for a discussion of this notion. The strategy behind the approach put forward in \cite{AshHan78} is to make use of spacetime notions rather than, say, making use of the initial value problem. Their definition of asymptotic flatness allows them to blow up $i^0$ to the timelike unit 3-hyperboloid ---the \emph{hyperboloid at spatial infinity}. The sections of the hyperboloid give the blowing up of $i^0$ to 2-spheres while time direction along the (timelike) generators of the hyperboloid can be, roughly speaking, associated with all the possible ways in which the asymptotic region of a Cauchy hypersurface can be boosted. The definition of AEFANSI spacetimes is geared towards the discussion of asymptotic symmetries and conserved quantities at spatial infinity. 

In a slightly different context, in \cite{BeiSch82,Bei84} the hyperboloid at spatial infinity is used, in conjunction with the vacuum Einstein field equations to obtain asymptotic expansions of the gravitational field near spatial infinity in negative powers of a radial coordinate. The main observation is that the field equations give rise to a hierarchy of linear equations for the coefficients of the expansion. It turns out that this hierarchy can be solved, provided the initial data satisfies certain conditions. This work can be regarded as the first serious attempt to investigate the consistency between the Einstein field equations and geometric notions of asymptotic flatness.  

The notion of the hyperboloid at spatial infinity has been revisited in \cite{AshRom92} where the definition of \emph{asymptotically Minkowskian spacetimes at spatial infinity (AMSI)} has been introduced ---see Definition \ref{Definition:AMSI} in Section \ref{Section:AMSI} of the main body of the present article. As in the analysis of~\cite{AshHan78}, the approach in \cite{AshRom92} makes use of spacetime structures and, thus, it is not geared towards the discussion of initial value problems. However, in contrast to the definition of AEFANSI spacetimes in \cite{AshHan78}, the  definition of AMSI spacetimes focuses entirely on spatial infinity and, thus, it makes no assumptions about the properties of null infinity. A consequence of the concept of AMSI spacetimes is the existence of a 3-dimensional manifold $\mathcal{H}$, the \emph{asymptote at spatial infinity}, which generalises the notion of the hyperboloid at spatial infinity. 

\medskip
\noindent
\textbf{Friedrich's cylinder at spatial infinity.} In \cite{Fri98a}, Friedrich puts forward an alternative conformal representation of the region of spacetime in the neighbourhood of spatial infinity ---\emph{the F-gauge}. The aim of this representation is the formulation of a \emph{regular initial value problem at spatial infinity} for the \emph{conformal Einstein field equations}. This initial value problem is key in the programme to analyse the genericity of asymptotically simple spacetimes and relies heavily on the properties of certain conformal invariants (\emph{conformal geodesics}), and it is such that both the equations and the initial data are regular at the conformal boundary. A central structure in this representation of spatial infinity is the \emph{cylinder at spatial infinity} which, in broad terms, corresponds to the blow-up of the traditional point at spatial infinity to a 2-sphere plus a time dimension ---hence, the cylinder at spatial infinity. In contrast to the hyperboloid at spatial infinity, the cylinder at spatial infinity has a finite extension in the time direction. The endpoints of the cylinder, the \emph{critical sets}, correspond to the points where spatial infinity meets null
infinity. Thus, the F-gauge is ideally suited to the analysis of the connection between the behaviour of the gravitational field at spatial infinity and radiative properties. This idea has been elaborated
in \cite{FriKan00} to express the Newman-Penrose constants in terms of the initial data for the Einstein field equation and to study the behaviour of the Bondi mass as one approaches spatial infinity ---see \cite{Val03b}; also \cite{Val07a}. The key property of the cylinder at spatial infinity, which allows connecting properties of the Cauchy initial data in a neighbourhood of spatial infinity with the behaviour near null infinity, is that the cylinder at spatial infinity is a total characteristic of the conformal Einstein field equations ---that is, the associated evolution equations reduce in its entirety to a system of transport equations on this hypersurface. This property precludes the possibility of prescribing boundary data on the cylinder but allows computing a particular type of asymptotic expansions which makes it possible to understand the role of certain pieces of the initial data have on the regularity properties of the gravitational at null infinity ---the so-called \emph{peeling behaviour}; see \cite{Fri98a,FriKan00,Val04a,Val04d,Val04e,Val05a,GasVal17b}.

\medskip
The F-gauge used in Friedrich's representation of spatial infinity is a gauge prescription based on the properties of conformal geodesics ---i.e. a pair consisting of a curve and a covector along the curve satisfying conformally invariant properties. A remarkable property of conformal geodesics is that in Einstein spaces, they give rise to a canonical factor that has a quadratic dependence on the parameter of the curve ---see \cite{FriSch87,CFEBook}. The resulting expression depends on certain initial data, which can be chosen in such a way that the curves reach the conformal boundary and that no caustics are formed. In this way, one can obtain a conformal Gaussian gauge in which coordinates and a frame defined on an initial hypersurface are propagated along the conformal geodesics. In effect, this procedure provides a \emph{canonical way} of obtaining conformal extensions of Einstein spaces. In the context of the initial value problem for the conformal Einstein equations, one has a gauge in which the location of the conformal boundary is known \emph{a priori}. 

\medskip
\noindent
\textbf{Melrose-type compactifications.} In the context of this introduction, it is worth mentioning the proof of the stability of Minkowski spacetime by Hintz \& Vasy in which the existence of spacetimes with \emph{polyhomogeneous asymptotic expansions} is established ---see \cite{HinVas17}--- makes use of of a compactification of spacetime into a manifold with boundary and corners. This procedure is inspired by the methods of Melrose's geometric scattering theory programme ---see e.g. \cite{Mel95}. This construction has a very strong resemblance to that introduced by Friedrich in \cite{Fri98a}. The precise relation between these two seemingly connected representations of spatial infinity will be elaborated elsewhere. 

\subsubsection*{Main results of this article}
A cursory glance at Ashtekar's and Friedrich's constructions of spatial infinity suggests that they should be closely related. That this is the case has been part of the longstanding folklore of mathematical relativity. It is the purpose of this article to establish, in a rigorous way, the connection between these two representations of spatial infinity.

\medskip
In order to carry out the objective outlined in the previous paragraph, we first analyse this relation for Minkowski spacetime, where everything can be explicitly computed. The outcome of this analysis is that Ashtekar's and Friedrich's representations are related to each other through a conformal transformation which preserves the asymptotic behaviour of the metric along spatial infinity but compactifies the time direction. In particular, the hyperboloid at spatial infinity is compactified into the $1+2$ Einstein Universe in the same way as the de Sitter spacetime is compactified into the $1+3$ Einstein Universe.  The transformation between the two representations can be expressed in terms of properties of conformal geodesics ---although in the case of Minkowski spacetime, because of its simplicity, this does not play an essential role.

\medskip
The intuition gained in the analysis of Minkowski spacetime is, in turn, used to establish the relation between Ashtekar's notion of an asymptote at spatial infinity given by the definition of an AMSI spacetime ---see Definition \ref{Definition:AMSI} in the main text. As the definition of an AMSI spacetime does not imply the existence of null infinity, to carry out our analysis, an extra assumption is required ---see Assumption \ref{Assumption:Extra} in the main text. In essence,  we assume that a sufficiently regular null infinity can be attached to the neighbourhood of spacetime around Ashtekar's asymptote. Our main result, whose proof is based on a stability argument for the solutions to the conformal geodesic equations in a neighbourhood of the asymptote is the following:

\begin{main}
Given an AMSI spacetime that can be conformally extended to null infinity, there exists a sufficiently small neighbourhood of the asymptote at spatial infinity in which is possible to construct  conformal Gaussian coordinates based on curves which extend beyond null infinity. 
\end{main}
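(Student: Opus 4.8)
The plan is to exhibit the desired conformal Gaussian system as the one swept out by a congruence of conformal geodesics issuing orthogonally from a hypersurface transverse to the asymptote $\mathcal{H}$, and to control this congruence with the theory of ordinary differential equations. First I would write the conformal geodesic equations ---together with the transport equations for a Weyl-propagated frame $\{\bme_a\}$ and for the Weyl $1$-form $\bmbeta$ along the curves--- as a first-order autonomous system in the unphysical spacetime supplied by the AMSI structure and Assumption \ref{Assumption:Extra}; in this conformal representation the system is regular up to and across $\mathscr{I}$. Since, near spatial infinity, the physical spacetime is an Einstein space ---as follows from the field equations---, the canonical conformal factor $\Theta$ attached to such a congruence is a quadratic polynomial in the parameter $\tau$, with coefficients determined by the data prescribed on the transverse hypersurface ---see \cite{FriSch87,CFEBook}. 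Guided by the explicit Minkowskian computation carried out above and by the conditions satisfied by the physical fields on $\mathcal{H}$ that follow from the conformal constraint equations, I would then choose the initial point, tangent and covector on a fiduciary hypersurface $\mathcal{S}$ meeting $\mathcal{H}$ so that the reference curves of the congruence reach the zero set of $\Theta$ ---that is, $\mathscr{I}$--- at a finite value of $\tau$ and can be continued to parameters for which $\Theta<0$, with no conjugate point developing in between.

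With this reference congruence at hand I would extend it off $\mathcal{H}$. Thickening $\mathcal{S}\cap\mathcal{H}$ to the full three-dimensional hypersurface $\mathcal{S}$ and prescribing on it data for the conformal geodesic and frame equations that restrict to the reference data along $\mathcal{H}$, the stability theorem for ordinary differential equations ---continuous and differentiable dependence of solutions on their initial data over a fixed compact parameter interval--- gives, for all data in a sufficiently small neighbourhood of the reference data, solutions defined on a common interval $[\tau_-,\tau_+]$ that still contains the value at which the reference factor $\Theta$ changes sign. I would then note that the two properties required for the construction, namely that $\Theta<0$ at $\tau=\tau_+$ and that the Jacobian of the map $\Phi$ sending a point of $\mathcal{S}$ and a parameter value to the corresponding spacetime point be nonsingular, are open conditions fulfilled by the reference configuration, and hence survive a further shrinking of the neighbourhood. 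Arranging injectivity of $\Phi$ by one more shrinking, $\Phi$ becomes a diffeomorphism onto a caustic-free neighbourhood $\mathcal{U}$ of $\mathcal{H}$ through which $\mathscr{I}$ passes.

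The conformal Gaussian gauge is then read off: the coordinates provided by $\Phi^{-1}$ ---the curve parameter $\tau$ together with labels for the conformal geodesics--- the Weyl-propagated frame $\{\bme_a\}$ transported along the curves, and the canonical conformal factor $\Theta$ built from its quadratic expression, together constitute a conformal Gaussian coordinate system on $\mathcal{U}$ in the sense of \cite{CFEBook}. By the previous step the curves underlying this system cross $\mathscr{I}$ and continue into the region where $\Theta<0$, which is precisely the assertion of the theorem.

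I expect the main obstacle to lie not in the ordinary differential equations argument, which is soft, but in the two inputs that precede it: establishing, on $\mathcal{H}$ itself, that the reference conformal geodesics genuinely reach and cross $\mathscr{I}$ with the quadratic conformal factor behaving as in the Minkowskian model, and controlling the regularity of the geometric fields near the critical sets, where $\mathcal{H}$ ---equivalently, Friedrich's cylinder--- touches $\mathscr{I}$. It is here that the conditions on $\mathcal{H}$ obtained from the conformal constraint equations and the extra regularity granted by Assumption \ref{Assumption:Extra} must be combined carefully; once that is done, the propagation into a neighbourhood and the identification of the conformal Gaussian gauge are routine.
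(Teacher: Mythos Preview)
Your overall architecture matches the paper's: a reference congruence on $\mathcal{H}$, ODE stability to thicken it into a neighbourhood, a Jacobian computation to rule out caustics, and then the canonical quadratic conformal factor. The paper carries out exactly these steps in Section~\ref{Section:CGAsymptote}.

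There is, however, a genuine gap in your set-up. You assert that ``in this conformal representation the system is regular up to and across $\mathscr{I}$'', and you locate the anticipated difficulty at the critical sets. In fact the conformal metric supplied by the AMSI structure is singular along \emph{all} of $\mathcal{H}$, not just where it meets null infinity: from \eqref{gcloseH} one has ${\bmg}=-\zeta^{-2}\mathbf{d}\zeta\otimes\mathbf{d}\zeta+(\mathring{\bmq}+\breve{\bmq})$, and after the time-compactifying rescaling the metric $\grave{\bmg}$ in \eqref{bmggrave} inherits the same $\grave\rho^{-2}$ blow-up in the normal direction. Written in coordinate components, the conformal geodesic equations therefore have singular coefficients at $\grave\rho=0$, and the stability theorem for ODEs cannot be invoked as stated. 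The paper's substantive technical step (Section~5.2 and Lemma~\ref{Lemma:RegularityAsymptoteDetails}, with details in Appendix~\ref{Appendix:RegularityAsymptoteDetails}) is to introduce the frame $\{\bmc_{\bma}\}$ with $\bmc_{\bmone}=\grave\rho\,\bmpartial_{\grave\rho}$ and to verify by direct computation that $\grave{g}_{\bma\bmb}$, $\grave{g}^{\bma\bmb}$, $\grave{\Gamma}_{\bma}{}^{\bmb}{}_{\bmc}$ and $\grave{L}_{\bma\bmb}$ all extend smoothly to $\grave\rho=0$; the same device renders the initial covector $\grave\bmbeta_\star=\zeta^{-1}\mathbf{d}\zeta$ regular in frame components. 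Only after this regularisation does the ODE system \eqref{gbarCGEqn1Expanded}--\eqref{gbarCGEqn3Expanded} have smooth coefficients across $\mathcal{H}$, the reference solution of Lemma~\ref{Lemma:CGAsymptote} makes sense, and the stability argument (Lemma~\ref{Lemma:StabilityCG}) goes through.

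A minor point: the frame the paper uses for this purpose is not a Weyl-propagated frame along the curves but a fixed, coordinate-adapted one chosen precisely to absorb the $\grave\rho$-singularity; the Weyl propagation you mention plays no role in establishing regularity of the equations.
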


Associated with the conformal Gaussian system whose existence is ensured by the above statement, there exists a conformal factor that provides the precise relation between Ashtekar's representation of spatial infinity and F-gauge representation. 

\subsection*{Outline of the article}
This article is structured as follows. In Section \ref{Section:ConformalMethods} we provide a brief summary of the conformal methods that are required in the analysis of this article. These include conformal geodesics, conformal Gaussian systems, the conformal Einstein field equations and their constraints. For a full account of the associated literature, the reader is referred to the monograph \cite{CFEBook}. Section \ref{Section:MinkowskiSpacetime} provides a discussion of the various representations of spatial infinity on Minkowski spacetime. In particular, it contains the explicit connection between the F-gauge and Ashtekar's representation of spatial infinity. Furthermore, it also contains a discussion on how this connection can be expressed in terms of conformal geodesics. Section \ref{Section:AMSI} reviews Ashtekar's definition of asymptotically Minkowskian spacetimes at spatial infinity (AMSI) and the associated notion of an asymptote. It further contains a discussion of the consequences of this definition in light of the conformal Einstein field equations and the associated constraint equations on timelike hypersurfaces. Subsection \ref{Section:HyperboloidToCylinder} provides a detailed discussion of the relation between Ashtekar's asymptote and Friedrich's cylinder viewed as intrinsic 3-manifolds. Section \ref{Section:CGAsymptote} contains the main analysis in the article ---namely, the construction of a conformal Gaussian gauge system in a neighbourhood of an asymptote. It also provides a more detailed statement of the main theorem of this article ---Theorem \ref{Theorem:MainPreciseFormulation}.  In addition, the article contains several appendices. Appendix \ref{Appendix:ODETheory} provides the stability theorem for solutions to a system of ordinary differential equations and their existence times which is used in the
analysis of the solutions to the conformal geodesic equations in a neighbourhood of the asymptote. Appendix \ref{Appendix:RegularityAsymptoteDetails} provides details of the proof of a technical lemma (Lemma \ref{Lemma:RegularityAsymptoteDetails}) on the regularity of geometric fields at the asymptote. Finally, Appendix \ref{Appendix:CGMinkowski} provides details on a certain class of conformal geodesics on Minkowski spacetime. 

\subsection*{Notations and conventions}
In what follows $a,\,b,\,c\ldots$ will denote spacetime abstract tensorial indices, while $i,\,j,\,k,\ldots $ are spatial tensorial indices ranging from 1 to 3. By contrast, $\bma,\, \bmb,\,\bmc,\ldots$ and  $\bmi,\, \bmj,\bmk,\ldots$ will correspond, respectively, to spacetime and spatial coordinate indices. 

\smallskip
The signature convention for spacetime metrics is $(+,-,-,-)$. Thus, the induced metrics on spacelike hypersurfaces are negative definite. The analysis of this article involves several conformally related metrics. To differentiate between them we adhere to the following conventions: the \emph{tilde} ($\tilde{\phantom{g}}$) is used to denote metrics satisfying the physical Einstein field equations; the \emph{overline} ($\bar{\phantom{X}}$) is used to denote metrics in the F-gauge; metrics \emph{without an adornment} (e.g. $\bmg$) are in the conformal gauge given by the Definition \ref{Definition:AMSI} of a spacetime asymptotically Minkowskian at spatial infinity (AMSI); metrics in a gauge with compact asymptote are denoted by a \emph{grave accent} ($\grave{\phantom{X}}$). Finally, metrics with a \emph{hat} ($\hat{\phantom{X}}$) correspond to conformal representations in which spatial infinity corresponds to a point.

\smallskip 
An index-free notation will be often used. Given a 1-form ${\bmomega}$ and a vector ${\bmv}$, we denote the action of ${\bmomega}$ on ${\bmv}$ by $\langle {\bmomega},{\bmv}\rangle$. Furthermore, ${\bmomega}^\sharp$ and ${\bmv}^\flat$ denote, respectively, the contravariant version of ${\bmomega}$ and the covariant version of ${\bmv}$ (raising and lowering of indices) with respect to a given Lorentzian metric.  This notation can be extended to tensors of higher rank (raising and lowering of all the tensorial indices). The  conventions for the curvature tensors will be fixed by the relation
\[
(\nabla_a \nabla_b -\nabla_b \nabla_a) v^c = R^c{}_{dab} v^d.
\]
Also, one can write the decomposition of the curvature tensor as 
\begin{eqnarray}
&& R^c{}_{dab} = C^c{}_{dab} + 2 \left( \delta^c{}_{[a} L_{b]d} - g_{d[a} L_{b]}{}^c \right), \label{curvature-decomposition}
\end{eqnarray}
where $C^c{}_{dab}$ is the \emph{Weyl} tensor and $L_{bd}$ is the \emph{Schouten} tensor of the metric $g_{ab}$.


\section{Tools of conformal methods}
\label{Section:ConformalMethods}
The purpose of this section is to introduce the tools of conformal methods that will be required in this article. 

\subsubsection*{General conventions and notation}
In the following assume that $(\tilde{\mathcal{M}}, \tilde{\bmg})$ denotes a spacetime satisfying the vacuum Einstein field equation
\begin{eqnarray}
\tilde{R}_{ab}=0.
\label{Einstein-Field-Equation}
\end{eqnarray}

Following standard usage, we call the pair  $(\tilde{\mathcal{M}}, \tilde{\bmg})$ the \emph{physical spacetime}, while any conformally related spacetime $(\mathcal{M}, \bmg)$ with
\[
\bmg \equiv \Xi^2 \tilde{\bmg},
\]
will be referred to as the \emph{unphysical spacetime}.  

\subsection{Conformal geodesics}
\label{Section:CG}

Given an interval $I \subseteq \mathbb{R} $ and $\tau \in I$, the curve ${x}(\tau)$ is said to be a \emph{conformal geodesic} if there exists a 1-form $\bmbeta(\tau)$ along ${x}(\tau)$ such that 
\begin{subequations}
\begin{eqnarray}
&& \tilde{\nabla}_{\dot{\bm x}} \dot{\bmx} = -2 \langle {\bmbeta},
\dot{\bmx} \rangle
\dot{\bmx} + \tilde{\bmg}(\dot{\bmx},\dot{\bmx}) {\bmbeta}^\sharp, \label{ConformalGeodesic1} \\
&& \tilde{\nabla}_{\dot{\bm x}} {\bmbeta} = \langle {\bmbeta}, \dot{\bm x}
\rangle {\bmbeta} - \tfrac{1}{2} \tilde{\bmg}^\sharp ({\bmbeta},{\bmbeta})
\dot{\bmx}^\flat + \tilde{\bm L}(\dot{\bmx}, \cdot), \label{ConformalGeodesic2}
\end{eqnarray}
\end{subequations}
are satisfied. For spacetimes $(\tilde{\mathcal{M}},\tilde{\bmg})$
satisfying the vacuum Einstein equations \eqref{Einstein-Field-Equation}, one can explicitly determine
a canonical conformal factor given initial data on an initial
hypersurface ---see \cite{CFEBook,Fri95}, Proposition 5.1. Specifically, 

\begin{proposition}
\label{canconical-confromalFactor}
Given an Einstein spacetime $(\mathcal{\tilde{M}}, \tilde{\bmg})$, a solution $(x(\tau), \bmbeta(\tau))$ to the conformal geodesic equations \eqref{ConformalGeodesic1}-\eqref{ConformalGeodesic2} and an unphysical spacetime $\bmg = \Theta^2 \tilde{\bmg}$ defined such that $\bmg(\dot{\bmx},\dot{\bmx})=1$. Then the conformal factor $\Theta$ can be written as a quadratic polynomial in terms of $\tau$, i.e.
\begin{eqnarray}
&& \Theta(\tau) = \Theta_{\star} + \dot{\Theta}_{\star}(\tau - \tau_{\star}) + \frac{1}{2} \ddot{\Theta}_{\star} (\tau - \tau_{\star})^2,
\label{canconical-conformalFactor1}
\end{eqnarray}
with 
\begin{eqnarray}
\dot{\Theta}_{\star} = \langle \bmbeta_{\star},\dot{\bmx}_{\star} \rangle \Theta_{\star}, \; \; \; \; \Theta_{\star} \ddot{\Theta}_{\star} = 2 \tilde{\bmg}^{\sharp}(\bmbeta_{\star},\bmbeta_{\star}) + \frac{1}{6} \lambda,
\label{canconical-conformalFactor2}
\end{eqnarray}
where $\lambda$ is the \emph{cosmological constant}.
\end{proposition}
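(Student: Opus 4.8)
The plan is to restrict the conformal geodesic system \eqref{ConformalGeodesic1}--\eqref{ConformalGeodesic2} to the curve $x(\tau)$, turn it into a third-order ordinary differential equation for $\Theta$ alone, and show that this equation forces $\dddot\Theta=0$; the coefficients in \eqref{canconical-conformalFactor2} then come out by reading off the lower-order data at $\tau=\tau_\star$. The first ingredient is to exploit the Einstein condition. For $\tilde R_{ab}=\lambda\tilde g_{ab}$, the decomposition \eqref{curvature-decomposition} shows that in four dimensions the Schouten tensor is pure trace, $\tilde L_{ab}=\tfrac16\lambda\,\tilde g_{ab}$, so that along the curve the inhomogeneous term in \eqref{ConformalGeodesic2} reduces to $\tilde{\bm L}(\dot\bmx,\cdot)=\tfrac16\lambda\,\dot\bmx^\flat$; with this simplification the two conformal geodesic equations close, as a system of transport equations along $x(\tau)$, on the scalar quantities $\langle\bmbeta,\dot\bmx\rangle$, $\tilde\bmg^\sharp(\bmbeta,\bmbeta)$ and $\tilde\bmg(\dot\bmx,\dot\bmx)$.

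Next I would bring in the normalisation. Differentiating $\tilde\bmg(\dot\bmx,\dot\bmx)$ along the curve and substituting \eqref{ConformalGeodesic1} gives $\tfrac{d}{d\tau}\tilde\bmg(\dot\bmx,\dot\bmx)=-2\langle\bmbeta,\dot\bmx\rangle\,\tilde\bmg(\dot\bmx,\dot\bmx)$. Since $\bmg=\Theta^2\tilde\bmg$ is chosen so that $\bmg(\dot\bmx,\dot\bmx)=1$, one has $\tilde\bmg(\dot\bmx,\dot\bmx)=\Theta^{-2}$, and comparing the two expressions for the derivative yields $\dot\Theta=\langle\bmbeta,\dot\bmx\rangle\,\Theta$, which is the first relation in \eqref{canconical-conformalFactor2}. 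Differentiating this once more, using the Leibniz rule for the covariant derivative along the curve together with \eqref{ConformalGeodesic1}--\eqref{ConformalGeodesic2} and the Einstein simplification to evaluate $\tfrac{d}{d\tau}\langle\bmbeta,\dot\bmx\rangle$, the contributions proportional to $\langle\bmbeta,\dot\bmx\rangle^2$ cancel and one is left with $\Theta\ddot\Theta=\tfrac12\tilde\bmg^\sharp(\bmbeta,\bmbeta)+\tfrac16\lambda$, i.e. the second relation in \eqref{canconical-conformalFactor2}.

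Finally, to close the argument I would differentiate the identity $\Theta\ddot\Theta=\tfrac12\tilde\bmg^\sharp(\bmbeta,\bmbeta)+\tfrac16\lambda$ one more time. This requires $\tfrac{d}{d\tau}\tilde\bmg^\sharp(\bmbeta,\bmbeta)$, which is obtained by contracting \eqref{ConformalGeodesic2} with $\bmbeta^\sharp$ and using the Einstein condition; the result is $\tfrac{d}{d\tau}\tilde\bmg^\sharp(\bmbeta,\bmbeta)=2\langle\bmbeta,\dot\bmx\rangle\big(\tfrac12\tilde\bmg^\sharp(\bmbeta,\bmbeta)+\tfrac16\lambda\big)=2\dot\Theta\ddot\Theta$, where the last equality uses $\dot\Theta=\langle\bmbeta,\dot\bmx\rangle\Theta$ and the identity just established. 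Hence $\dot\Theta\ddot\Theta+\Theta\dddot\Theta=\dot\Theta\ddot\Theta$, so $\dddot\Theta=0$ on the interval where $\Theta\neq0$. Integrating and fixing the constants of integration at $\tau=\tau_\star$ by means of the two relations already derived gives exactly \eqref{canconical-conformalFactor1}--\eqref{canconical-conformalFactor2}.

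The computation is essentially mechanical, so I do not expect a serious obstacle; the points that need care are (i) the dimension-dependent identification of the Schouten tensor of an Einstein metric with its trace part, (ii) the bookkeeping of the musical isomorphisms $\sharp,\flat$ and of the product rule when differentiating the scalars $\langle\bmbeta,\dot\bmx\rangle$ and $\tilde\bmg^\sharp(\bmbeta,\bmbeta)$ along the curve, and (iii) the fact, already implicit in the hypothesis that the unphysical metric with $\bmg(\dot\bmx,\dot\bmx)=1$ is well defined, that $\Theta$ has no zeros on the relevant parameter interval, which is what makes the final division by $\Theta$ legitimate.
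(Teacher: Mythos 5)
Your argument is correct and is essentially the standard proof: the paper itself does not prove this proposition but defers to \cite{Fri95,CFEBook} (Proposition 5.1), where the argument runs exactly as you describe — use $\bmg(\dot\bmx,\dot\bmx)=1$ to get $\dot\Theta=\langle\bmbeta,\dot\bmx\rangle\Theta$, use the conformal geodesic equations with $\tilde L_{ab}=\tfrac16\lambda\tilde g_{ab}$ to get $\Theta\ddot\Theta=\tfrac12\tilde\bmg^\sharp(\bmbeta,\bmbeta)+\tfrac16\lambda$, and differentiate once more to conclude $\dddot\Theta=0$. Your computations (including the cancellation of the $\langle\bmbeta,\dot\bmx\rangle^2$ terms and the identity $\tfrac{d}{d\tau}\tilde\bmg^\sharp(\bmbeta,\bmbeta)=2\dot\Theta\ddot\Theta$) check out, and your caveat that $\Theta\neq0$ on the relevant interval is the right one.
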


\begin{remark}
    In subsequent discussions, the cosmological constant $\lambda$ is taken to be zero.
\end{remark}

\subsubsection*{Conformal Gaussian systems}
Conformal geodesics can be used to construct the so-called \emph{
Conformal Gaussian Systems} in which coordinates and adapted frames
are propagated off an initial hypersurface $\mathcal{S}$. One
constructs a conformal Gaussian system by initially introducing a
$\bmg$-orthonormal Weyl-propagated frame $\{ \bme_{\bma} \}$ along a
congruence of conformal geodesics $(x(\tau),\bmbeta(\tau))$
on $(\tilde{\mathcal{M}},\tilde{\bmg}= \Xi^{-2} \bmg )$ and choosing
the time coordinate such that $\bme_{\bm0} = \partial_{\tau}$. Then if
$(x^i)$ denotes the coordinates of a point $p$ on $\mathcal{S}$, one
can propagate the spatial coordinates off $\mathcal{S}$ by requiring
them to be constant along the conformal geodesic intersecting
$\mathcal{S}$ at $p$. Then the conformal Gaussian system is given by
$(\tau, x^i)$.

\subsection{Conformal field equations}
The vacuum \emph{conformal Einstein field equations} on the unphysical spacetime are given by
\begin{subequations}
\begin{eqnarray}
&& \nabla_a \nabla_b \Xi= - \Xi L_{ab} + s g_{ab}, \label{conformal-field-equations1}\\ 
&& \nabla_a s = - L_{ac} \nabla^c \Xi, \label{conformal-field-equations2}\\
&& \nabla_c L_{db} - \nabla_d L_{cb} = \nabla_a \Xi d^a{}_{bcd}, \label{conformal-field-equation3}\\
&& \nabla_a d^a{}_{bcd} =0, \label{conformal-field-equations4}\\
&& 6 \Xi s - 3 \nabla_c \Xi \nabla^c \Xi = 0, \label{conformal-field-equations5}
\end{eqnarray}
\label{Conformal-field-equations}
\end{subequations}
where $d^a{}_{bcd}=\Xi^{-1}C^a{}_{bcd}$ is the \emph{rescaled
Weyl} tensor and $s$ is the \emph{Friedrich's scalar} given by 
$$
s \equiv \frac{1}{4} \nabla^c\nabla_c\Xi + \frac{1}{24} R \Xi.
$$
The conformal Einstein field equations constitute a set of
differential equations for the fields $\Xi$, $s$, $L_{ab}$ and
$d^a{}_{bcd}$. A solution to the vacuum conformal field equations
\eqref{conformal-field-equations1}-\eqref{conformal-field-equations5}
implies, whenever $\Xi\neq 0$, a solution to the vacuum Einstein field
equations (\ref{Einstein-Field-Equation}) on the physical spacetime.

\subsubsection*{Conformal constraint equations}
For later use, we give here the constraint equations implied by the
conformal Einstein field equations \eqref{conformal-field-equations1}-\eqref{conformal-field-equations5} on a timelike hypersurface $\mathcal{T}$. These equations can be obtained through a projection formalism and details of the derivation can be found in \cite{CFEBook}.

Let $n^{a}$ denote the unit-normal to the timelike hypersurface $\mathcal{T}$. Then, the projection tensor is given by 
\begin{equation*}
    q_{a}{}^{b} = \delta_{a}{}^{b} + n_{a} n^{b}.
\end{equation*}
Then introduce the shorthand notation 
\begin{eqnarray}
&& \Sigma \equiv n^a \nabla_a \Xi. \label{Sigma-definition}
\end{eqnarray}
Given the above, the \emph{conformal Einstein constraint equations} in vacuum can be written in terms of $q_{ab}$, the extrinsic curvature $K_{ab}$, the Schouten tensor $L_{ab}$ and $\omega$, denoting the restriction of the conformal factor $\Xi$ to $\mathcal{T}$
\begin{subequations}
    \begin{align}
        & D_{a} D_{b} \omega + \Sigma K_{ab} + \omega (L_{ab} + 2 n_{(a} L_{b)\perp} + n_{a} n_{b} L_{\perp \perp}) - s q_{ab} =0, \label{ConformalConstraint1} \\
        & D_{a} \Sigma + \omega (L_{a \perp} + n_{a} L_{\perp \perp}) =0, \label{ConformalConstraint2} \\
        & D_{a} s - \Sigma (L_{a \perp} + n_{a} L_{\perp \perp}) =0, \label{ConformalConstraint3} \\
        & D_{a} L_{bc} - D_{b} L_{ac} + \Sigma (d_{ab\perp c} + 2 n_{[a} d_{b] \perp c \perp}) =0, \label{ConformalConstraint4} \\
        & D_{a} L_{b \perp} - D_{b} L_{a \perp} + K_{b}{}^{e} L_{ae} - K_{a}{}^{e} L_{be} + n_{a} K_{b}{}^{e} L_{e \perp} - n_{b} K_{a}{}^{e} L_{e \perp} =0, \label{ConformalConstraint5} \\
        & D^{c} d_{c \perp ab}=0, \label{ConformalConstraint6} \\
        & D^{b} d_{b \perp a \perp } + K^{ed} d_{e \perp ad} - n_{a} K^{ed} d_{e \perp d \perp} =0, \label{ConformalConstraint7} \\
        & 6 \omega s + 3 \Sigma^2=0. \label{ConformalConstraint8}
    \end{align}
    \label{Constraint-equations-asymptote}
\end{subequations}
In the above, the conformal factor $\omega$ is assumed to be constant on $\mathcal{T}$, i.e., $D_{a}\omega = q_{a}{}^{b} \nabla_{b} \omega =0$. Moreover, the symbol ${}_{\bot}$ indicates a contraction with the unit normal $n^a$, so the terms $d_{a \bot b \bot}$ and $d_{a \bot bc}$ correspond, essentially, to the electric and magnetic parts of the rescaled Weyl tensor $d_{abcd}$ with respect to the normal $n^a$ ---respectively. 

\begin{remark}
\emph{
    \begin{enumerate}
        \item The constraint equations \eqref{Constraint-equations-asymptote} can be obtained from the conformal field equations \eqref{Conformal-field-equations} by various contractions with the projection tensor $q_{a}{}^{b}$ and the unit-normal $n^{a}$. For example, equations \eqref{ConformalConstraint1} and \eqref{ConformalConstraint2} are obtained by contractions of equation \eqref{conformal-field-equations1} with $q_{c}{}^{a} q_{d}{}^{b}$ and $q_{c}{}^{a} n^{b}$, respectively.
        \item All tensors intrinsic to $\mathcal{T}$, e.g., $\bmq$ and $\bmK$, will have vanishing components in the normal direction. 
    \end{enumerate}}
\end{remark}


\section{Representations of spatial infinity in Minkowski spacetime}
\label{Section:MinkowskiSpacetime}
In this section, we review several representations of spatial infinity for Minkowski spacetime. This analysis will provide insight and motivate the analysis in curved spacetimes.

\medskip
In the following let $(\mathbb{R}^4,\tilde{\bmeta})$ denote Minkowski spacetime. Let $(\overline{x})=(x^\mu)$ denote the standard Cartesian coordinates and write $x^0=\tilde{t}$, etc. We will also make use of spherical coordinates $(\tilde{t},\tilde{\rho},\theta^{\mathcal{A}})$ where
$(\theta^{\mathcal{A}})$ denotes some choice of spherical coordinates over $\mathbb{S}^2$. In terms of the above, one has that
\begin{eqnarray*}
&& \tilde{\bmeta} = \eta_{\mu\nu} \mathbf{d}x^\mu \otimes
   \mathbf{d}x^\nu, \\
&& \phantom{\tilde{\bmeta}} =
   \mathbf{d}\tilde{t}\otimes\mathbf{d}\tilde{t} -
   \mathbf{d}\tilde{x}\otimes\mathbf{d}\tilde{x}-
   \mathbf{d}\tilde{y}\otimes\mathbf{d}\tilde{y}-
   \mathbf{d}\tilde{z}\otimes\mathbf{d}\tilde{z}, \\
&& \phantom{\tilde{\bmeta}} =
   \mathbf{d}\tilde{t}\otimes\mathbf{d}\tilde{t} - \mathbf{d}\tilde{\rho}\otimes\mathbf{d}\tilde{\rho}-\tilde{\rho}^2 \bmsigma,
\end{eqnarray*}
where $\eta_{\mu\nu} \equiv \mbox{diag}(1,-1,-1,-1)$, and $\bmsigma$
is the standard round metric over $\mathbb{S}^2$.

\subsection{Spatial infinity as point}
\label{Section:MinkowskiPointSpatialInfinity}
In first instance, we consider the standard representation of
spatial infinity as a point. Intuitively, the region in Minkowski spacetime
associated with spatial infinity is contained in the domain
\[
\tilde{\mathcal{D}} \equiv \big\{ p\in \mathbb{R}^4 \;|\;
\eta_{\mu\nu} x^\mu(p)x^\nu(p)<0
\big\},
\]
---the \emph{complement of the light cone through the origin}, see
Figure \ref{Fig:Domain-D}. Now, introducing the \emph{inversion coordinates} $\overline{y}=(y^\mu)$ defined by 
\[
y^\mu =- \frac{x^\mu}{X^2}, \qquad X^2\equiv \eta_{\mu\nu}x^\mu x^\nu,
\] 
it follows that 
\[
\eta_{\mu\nu} \mathbf{d} y^\mu \otimes \mathbf{d} y^\nu = X^{-4}
\eta_{\mu\nu} \mathbf{d} x^\mu\otimes \mathbf{d} x^\nu.
\]
The latter suggests introducing the conformal factor $\Xi\equiv 1/X^2$
so that 
\[
\hat{\bmeta} \equiv \Xi^2 \tilde{\bmeta} = \Xi^2 \mathbf{d}x^\mu\otimes\mathbf{d}x^\nu.
\]
The conformal boundary defined by $\Xi=0$ decomposes into the sets
\begin{eqnarray*}
&&\mathscr{I}^+ \equiv \big\{ p\in\mathbb{R}^4 | y^0(p)>0, \;
   \eta_{\mu\nu} y^\mu(p) y^\nu(p) =0  \big\}, \\
&& \mathscr{I}^- \equiv \big\{ p\in\mathbb{R}^4 | y^0(p)<0, \;
   \eta_{\mu\nu} y^\mu(p) y^\nu(p) =0  \big\}, \\
&&i^0 \equiv \big\{ p\in\mathbb{R}^4 |  (y^\mu(p)) =(0,0,0,0) \}.
\end{eqnarray*}
The sets $\mathscr{I}^+$ ($\mathscr{I}^-$),  can be shown to be the
endpoints of future (past) null geodesics while spatial geodesics end
up in the point $i^0$ ---\emph{spatial infinity}, located in this
representation at the origin. Observe that $\mathscr{I}^+$ and $\mathscr{I}^-$ do not contain the whole of null infinity, only
the part of the conformal boundary close to spatial infinity ---this
is a peculiarity of this conformal representation.  

\begin{remark}
In literature, $\mathscr{I}^+$ and $\mathscr{I}^-$ are usually used to denote the whole of future and past null infinity, respectively. In this setting, in a slight abuse of notation, we use $\mathscr{I}^+$ and $\mathscr{I}^-$ to denote the parts of the conformal boundary close to spatial infinity. 
\end{remark}

It can be verified that
while $\mathbf{d}\Xi|_{\mathscr{I}^\pm}\neq 0$. For $i^0$, it holds
that
\begin{equation}
\Xi(i^0) =0, \qquad \mathbf{d}\Xi(i^0) =0, \qquad \mbox{Hess}\,
\Xi(i^0)\neq 0.
\label{PointCompactificationConditions}
\end{equation}
Introduce spherical coordinates $(t, \rho,\theta^\mathcal{A})$, with
$\rho^2\equiv (y^1)^2 + (y^2)^2 +(y^3)^2$, then the
unphysical spacetime metric can be written as
\begin{eqnarray}
&& \hat{\bmeta} = \mathbf{d}t \otimes \mathbf{d} t -\mathbf{d}\rho\otimes
\mathbf{d}\rho -\rho^2 \bmsigma, \qquad \Xi =t^2-\rho^2.  \label{pointCompactified-metric}
\end{eqnarray}
For future use, it is noticed that
\[
\tilde{t}= - \frac{t}{t^2-\rho^2}, \qquad \tilde{\rho} = - \frac{\rho}{t^2-\rho^2}.
\]

\begin{figure}[t]
\centering
\includegraphics[width=95mm]{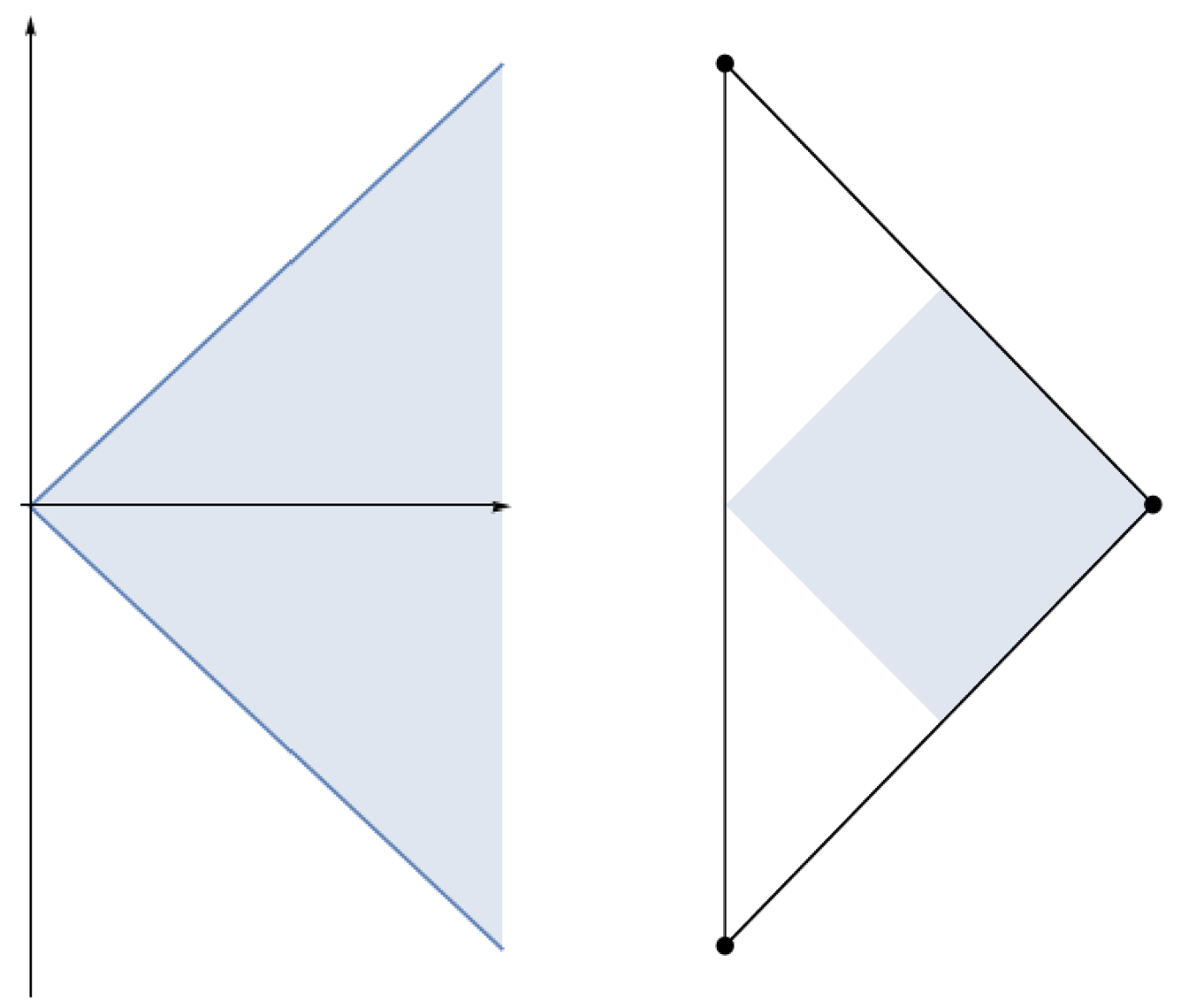}
\put(-265,233){$\tilde{t}$}
\put(-150,110){$\tilde{\rho}$}
\put(-210,90){$\tilde{\mathcal{D}}$}
\put(-60,110){$\tilde{\mathcal{D}}$}
\put(-5,110){$i^0$}
\put(-108,223){$i^+$}
\put(-108,1){$i^-$}
\put(-35,70){$\mathscr{I}^-$}
\put(-35,150){$\mathscr{I}^+$}
\put(-70,-10){(b)}
\put(-210,-10){(a)}
\caption{ (a) Diagrammatic depiction of the domain $\tilde{\mathcal{D}}$ containing spatial infinity, (b) The domain $\tilde{\mathcal{D}}$ on the conformal diagram of Minkowski spacetime.}
\label{Fig:Domain-D}
\end{figure}

\subsection{The cylinder at spatial infinity}
\label{Section:MinkowskiCylinder}
A different representation of spatial infinity can be obtained from
the rescaling
\begin{eqnarray}
&& \bar{\bmeta} \equiv \frac{1}{\rho^2}\hat{\bmeta}. \label{cylinder-pointCompactified-conformal-relation}
\end{eqnarray}
In this representation it is convenient to introduce a new
time coordinate $\tau$ via the relation 
\begin{equation}
t = \rho \tau,
\label{Definition:TauMinkowski}
\end{equation}
so that 
\begin{equation}
\bar{\bmeta} = \mathbf{d}\tau \otimes \mathbf{d}\tau +
   \frac{\tau}{\rho}\big( \mathbf{d}\tau\otimes \mathbf{d}\rho +
   \mathbf{d}\rho \otimes \mathbf{d}\tau \big) -
   \frac{(1-\tau^2)}{\rho^2}\mathbf{d}\rho\otimes\mathbf{d}\rho -\bmsigma \label{Minkowski:Cylinder},
\end{equation}
with the contravariant metric given by
\[
\bar{\bmeta}^\sharp =(1-\tau^2) \bmpartial_\tau \otimes \bmpartial_\tau +
\rho\tau\big(\bmpartial_\tau\otimes\bmpartial_\rho +
\bmpartial_\rho\otimes\bmpartial_\tau) -\rho^2
\bmpartial_\rho\otimes\bmpartial_\rho -\bmsigma^\sharp.
\]

Introducing the coordinate $\varrho \equiv - \ln \rho$, the metrice
$\bar{\bmeta}$ can be rewritten as 
\[
\bar{\bmeta} =\mathbf{d}\tau\otimes \mathbf{d}\tau
-(1-\tau^2)\mathbf{d}\varrho\otimes\mathbf{d}\varrho -\tau \big(
\mathbf{d}\tau\otimes\mathbf{d}\varrho + \mathbf{d}\varrho\otimes\mathbf{d}\tau  \big)-\bmsigma.
\]
From this metric and observing equation \eqref{cylinder-pointCompactified-conformal-relation} and \eqref{pointCompactified-metric}, one can readily see that spatial infinity $i^0$, which corresponds to the condition $\rho=0$, lies at an infinite distance as measured by the metric $\bar{\bmeta}$.

\medskip
It follows from the previous discussion that one can write
\begin{equation}
\bar{\bmeta} =\Theta^2 \tilde{\bmeta}, \qquad \Theta\equiv \rho
(1-\tau^2).
\label{MinkowskiFGaugeHorizontal}
\end{equation}
Consistent with the above one can define the set
\[
\bar{\mathcal{M}}\equiv \big\{ p\in \mathbb{R}^4 \;|\; -1\leq \tau(p) \leq
1,\; \rho(p)\geq 0  \big\},
\]
which gives rise to a conformal extension
$(\bar{\mathcal{M}},\bar{\bmeta})$ of Minkowski spacetime. In this
representation, the following sets play an important role in our discussion:
\begin{equation*}
 I \equiv \big\{ p \in \bar{\mathcal{M}} \; \rvert   \;\;  |\tau(p)|<1, \; \rho(p)=0\big\}, 
\qquad I^{0} \equiv \big\{ p \in \bar{\mathcal{M}}\; \rvert \;
  \tau(p)=0, \; \rho(p)=0\big\},
\end{equation*} 
and 
\begin{equation*}
 I^{+} \equiv \big\{ p\in \bar{\mathcal{M}} \; \rvert \; \tau(p)=1, \; \rho(p)=0
  \big\}, \qquad I^{-} \equiv \big\{p \in \bar{\mathcal{M}}\; \rvert \; \tau(p)=-1, \; \rho(p)=0\big\}.
\end{equation*}
Moreover, future and past null infinity are given by:
\[
\mathscr{I}^\pm \equiv \big\{ p\in \bar{\mathcal{M}} \;\rvert \;
\tau(p) =\pm 1 \big\}.
\]

\begin{remark}
{\em In the following, we call the above conformal representation of the neighbourhood of spatial infinity the \emph{F-gauge horizontal representation}.
}
\end{remark}

\begin{remark}
\label{Remark:EinsteinUniverse}
{\em Although the metric \eqref{Minkowski:Cylinder} is singular on
  $I$, it induces the Lorentzian 3-metric
\begin{equation}
\bar\bmq = \mathbf{d}\tau\otimes \mathbf{d}\tau-\bmsigma,
\label{EinsteinUniverseMetric}
\end{equation}
which is regular. This metric can be regarded as the $1+2$-dimensional
version of Einstein's universe metric. In particular, its Ricci
tensor is proportional to the metric ---i.e. one has an Einstein space.
 }
\end{remark}

\subsubsection{Conformal geodesics}
A central aspect of the conformal representation of Minkowski
spacetime given by $(\bar{\mathcal{M}},\bar{\bmeta})$ is its relation
to conformal geodesics. More precisely, one has the following:

\begin{lemma}
\label{Lemma:CGMinkowski}
The pair $(x(s),\bar{\bmbeta}(s))$, $s\in[-1,1]$ with
\begin{equation}
x(s)=(s,\rho_\star, \theta_\star^{\mathcal{A}}), \qquad \bar{\bmbeta} =
\frac{1}{\rho_\star} \mathbf{d}\rho,
\label{SolutionsCGMinkowskiFgauge}
\end{equation}
for fixed $(\rho_\star, \theta_\star^{\mathcal{A}})\in\mathcal{S}_\star$
constitutes a non-intersecting congruence of conformal geodesics in
$\bar{\mathcal{M}}$. 
\end{lemma}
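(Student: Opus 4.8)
The plan is to verify directly that the given pair $(x(s),\bar\bmbeta(s))$ satisfies the conformal geodesic equations \eqref{ConformalGeodesic1}--\eqref{ConformalGeodesic2} with respect to the \emph{physical} Minkowski metric $\tilde\bmeta$, and then argue separately that the resulting congruence is non-intersecting. The natural strategy is \emph{not} to work with $\bar\bmeta$ itself (which is singular on $I$), but to exploit the conformal invariance of the conformal geodesic equations: a curve/covector pair is a conformal geodesic of $\tilde\bmeta$ if and only if it is one for any metric in the conformal class, after the standard transformation rule for $\bmbeta$ under $\bmg\mapsto\phi^2\bmg$, namely $\bmbeta\mapsto\bmbeta-\phi^{-1}\mathbf{d}\phi$. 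So I would first pass to the cleanest representative. Since $\hat\bmeta = \mathbf{d}t\otimes\mathbf{d}t-\mathbf{d}\rho\otimes\mathbf{d}\rho-\rho^2\bmsigma$ is flat, its Schouten tensor vanishes, and in that gauge the conformal geodesic equations reduce, upon a reparametrisation, to the statement that $x(s)$ is an (unparametrised) geodesic of $\hat\bmeta$ with $\bmbeta$ determined by the reparametrisation. The curve $s\mapsto(t,\rho,\theta^{\mathcal A})=(\rho_\star s,\rho_\star,\theta_\star^{\mathcal A})$ traces the straight timelike line $\rho=\rho_\star$, $\theta^{\mathcal A}=\theta_\star^{\mathcal A}$, which is manifestly a geodesic of $\hat\bmeta$; hence the first half is essentially immediate once the bookkeeping of the conformal transformation of $\bmbeta$ is carried out.

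The substantive computation is then to track $\bmbeta$ through the chain $\tilde\bmeta \to \hat\bmeta \to \bar\bmeta$. Concretely: the $\tilde\bmeta$-conformal geodesic with trivial $\tilde\bmbeta=0$ along the straight Cartesian line must, under $\hat\bmeta=\Xi^2\tilde\bmeta$ with $\Xi = t^2-\rho^2$, pick up a covector $\hat\bmbeta = -\Xi^{-1}\mathbf{d}\Xi$ restricted to the curve; and under the further rescaling $\bar\bmeta = \rho^{-2}\hat\bmeta$ one picks up an additional term $+\rho^{-1}\mathbf{d}\rho$ (with sign fixed by the transformation rule). Along $\rho=\rho_\star$ constant, $\mathbf{d}\Xi$ restricted to the curve involves only $\mathbf{d}t$-type directions in a way that combines with the reparametrisation $t=\rho_\star s$, so that the net covector transverse to the congruence is exactly $\rho_\star^{-1}\mathbf{d}\rho$, matching \eqref{SolutionsCGMinkowskiFgauge}. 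I would present this as a short explicit calculation: write down $\tilde\nabla_{\dot x}\dot x$ and $\tilde\nabla_{\dot x}\bmbeta$ for the candidate pair directly in the $(\tilde t,\tilde r,\theta^{\mathcal A})$ coordinates (using $\tilde t = -t/(t^2-\rho^2)$, $\tilde\rho=-\rho/(t^2-\rho^2)$), and check that both sides of \eqref{ConformalGeodesic1} and \eqref{ConformalGeodesic2} agree, using $\tilde{\bm L}=0$ for the flat metric. Alternatively, and more cleanly, I would invoke Proposition \ref{canconical-confromalFactor}: with $\lambda=0$ and the data $\Theta_\star$, $\dot\Theta_\star$, $\ddot\Theta_\star$ read off from $\Theta=\rho(1-\tau^2)$ at $\tau_\star=0$, the quadratic $\Theta(\tau)=\rho_\star(1-\tau^2)$ is exactly the canonical conformal factor attached to a conformal geodesic, and one checks the consistency relations \eqref{canconical-conformalFactor2} with $\bmbeta_\star=\rho_\star^{-1}\mathbf{d}\rho$; this both confirms the pair is a conformal geodesic and identifies $\bar\bmeta$ as the associated canonical unphysical metric normalised by $\bar\bmeta(\dot x,\dot x)=1$ (which one verifies from \eqref{Minkowski:Cylinder}: indeed $\bar\bmeta(\bmpartial_\tau,\bmpartial_\tau)=1$).

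Finally, the non-intersecting claim: two curves of the family are labelled by $(\rho_\star,\theta_\star^{\mathcal A})$ and parametrised by $s\in[-1,1]$ via $(\tau,\rho,\theta^{\mathcal A})=(s,\rho_\star,\theta_\star^{\mathcal A})$; distinct labels give distinct values of $(\rho,\theta^{\mathcal A})$ for \emph{every} $s$, so the curves are pairwise disjoint and the map $(s,\rho_\star,\theta_\star^{\mathcal A})\mapsto$ point is injective on $\{\rho_\star>0\}$ — there are no caustics in the region $\rho>0$, and on $I$ ($\rho=0$) the congruence degenerates only in the obvious way. This part is trivial once the parametrisation is spelled out. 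The only genuine obstacle is sign- and factor-tracking in the two successive conformal rescalings of $\bmbeta$; I expect that to be the point most likely to hide an error, so I would double-check it by the independent Proposition \ref{canconical-confromalFactor} route described above, with the detailed verification deferred to Appendix \ref{Appendix:CGMinkowski}.
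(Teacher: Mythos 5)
Your high-level strategy---pass to a conformal representative in which the curve is manifestly simple and then track $\bmbeta$ through the rescalings---is sound and genuinely different from the paper's proof, which (Appendix \ref{Appendix:CGMinkowski}) verifies the $\bar\bmeta$-conformal geodesic equations directly by computing the Christoffel symbols and the Schouten tensor of $\bar\bmeta$ and solving for the components of $\bar\bmbeta$. However, your concrete implementation fails at its first step: the curves of the congruence are \emph{not} metric geodesics of the physical metric $\tilde\bmeta$, so the pair consisting of the curve together with $\tilde\bmbeta=0$ is not a solution of \eqref{ConformalGeodesic1}--\eqref{ConformalGeodesic2}, and there is nothing to transform. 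In physical coordinates the curve $\rho=\rho_\star$, $t=\rho_\star s$ is the hyperbola $\big(\tilde\rho-\tfrac{1}{2\rho_\star}\big)^2-\tilde t^{\,2}=\tfrac{1}{4\rho_\star^2}$, and indeed Appendix \ref{Appendix:CGMinkowski} computes the associated physical covector to be $\tilde\bmbeta=\tfrac{2}{\tilde\rho^2-\tilde t^2}\big(\tilde t\,\mathbf{d}\tilde t-\tilde\rho\,\mathbf{d}\tilde\rho\big)\neq 0$. Relatedly, your cancellation claim is false: along the curve $\mathbf{d}\Xi=2t\,\mathbf{d}t-2\rho_\star\,\mathbf{d}\rho$, which always has a $\mathbf{d}\rho$ component, so the chain started from $\tilde\bmbeta=0$ would give $\bar\bmbeta=-\Xi^{-1}\mathbf{d}\Xi+\rho^{-1}\mathbf{d}\rho\neq\rho_\star^{-1}\mathbf{d}\rho$. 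Note also that even in flat space the conformal geodesic equations do not reduce to (reparametrised) metric geodesics once $\bmbeta$ has a component orthogonal to $\dot{\bmx}$ ---which is precisely the situation for this congruence.

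The correct one-step version of your own idea is that the vanishing covector lives in the \emph{hat} gauge, not in the physical one. In the inversion coordinates the curve reads $t=\rho_\star s$, $\rho=\rho_\star$, an affinely parametrised timelike geodesic of the flat metric $\hat\bmeta$ of \eqref{pointCompactified-metric}; since $\hat{\bm L}=0$, the pair $\big(x(s),\hat\bmbeta=0\big)$ solves the $\hat\bmeta$-conformal geodesic equations with parameter $s$. The single rescaling $\bar\bmeta=\rho^{-2}\hat\bmeta$, which leaves the parameter unchanged, then yields $\bar\bmbeta=\hat\bmbeta+\rho^{-1}\mathbf{d}\rho=\rho_\star^{-1}\mathbf{d}\rho$ along the curve, which is the statement of the lemma; your non-intersection argument is fine as it stands. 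Finally, invoking Proposition \ref{canconical-confromalFactor} cannot ``confirm the pair is a conformal geodesic'': that proposition presupposes a solution of the conformal geodesic equations and only determines the form of the associated conformal factor, so the agreement of $\Theta=\rho(1-\tau^2)$ with the relations \eqref{canconical-conformalFactor2} is a necessary consistency check, not a sufficient one.
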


The details of the calculations providing the proof of the above lemma
can be found in Appendix \ref{Appendix:CGMinkowski}.

\begin{remark}
{\em In particular, the pair as given by the relations in
  \eqref{SolutionsCGMinkowskiFgauge} is a solution to the
  $\bar\bmg$-conformal geodesic equations.}
\end{remark}

\begin{remark}
{\em  In the following, in a slight abuse of notation, we identify the
affine parameter $s$ of the conformal geodesics with the time coordinate
$\tau$. The conformal geodesics given by Lemma \ref{Lemma:CGMinkowski}
have tangent vector given by $\bmpartial_\tau$.}
\end{remark}

The conformal factor $\Theta$ given in equation
\eqref{MinkowskiFGaugeHorizontal} can be deduced from the solution
to the conformal geodesic equation given by Lemma
\ref{Lemma:CGMinkowski} and Proposition
\ref{canconical-confromalFactor}. More precisely, writing the canonical conformal factor in terms of the parameter $s$ as
\[
\Theta(s) = \Theta_{\star} + \dot{\Theta}_{\star} s + \frac{1}{2} \ddot{\Theta}_{\star} s^2,
\]
with $\dot{\Theta}_{\star}$ and $\ddot{\Theta}_{\star}$ given by the
relations in \eqref{canconical-conformalFactor2}. From Lemma
\ref{Lemma:CGMinkowski}, it readily follows that 
\[
\dot{\Theta}_{\star}
= 0. 
\] 
Moreover, 
\[
\Theta_\star = \rho, \qquad \ddot{\Theta}_\star =-2 \rho. 
\]
Thus, to recover the conformal factor in
\eqref{MinkowskiFGaugeHorizontal} one identifies the parameters $s$ and $\tau$.

\begin{remark}
\label{Remark:ChoiceInitialConformalFactor}
{\em While on the one hand one has $\Theta_\star|_{\rho=0}=0$, on the
  other hand $\mathbf{d}\Theta|_{\rho=0}\neq 0$. Thus, the choice of
  $\Theta_\star$ in this conformal representation is not that one of a
  point compactification. More generally, if $\Omega$ is a 
  conformal factor giving rise to a point compactification of an
  asymptotic end of an asymptotically Euclidean manifold (cf. the
  conditions in \eqref{PointCompactificationConditions}) the
  prescription of $\Theta_\star$ is of the form $\kappa^{-1}\Omega$
  with $\kappa$ a smooth function of the form $\kappa =\varkappa \rho$
  and $\varkappa(i^0)=1$. 
}
\end{remark}

\subsubsection{Gauge freedom}
The conformal representation of the Minkowski spacetime described in
the previous subsections can be generalised to obtain a description in
which null infinity does not coincide with hypersurfaces of constant
$\tau$. For this, instead of relation \eqref{Definition:TauMinkowski}
one rather considers
\[
t = \kappa \tau, \qquad \kappa =\varkappa \rho, \qquad \varkappa =O(\rho^0),
\]
with $\varkappa$ a smooth function of the spatial coordinates. This
leads to the conformal factor 
\[
\Theta = \frac{\rho}{\varkappa} \big( 1-\varkappa^2 \tau^2\big),
\]
with associated metric $\bar\bmeta =\Theta^2 \tilde\bmeta$
\begin{equation}
\bar\bmeta = \mathbf{d}\tau\otimes\mathbf{d}\tau + \frac{\tau
  \kappa'}{\kappa}(\mathbf{d}\tau\otimes\mathbf{d}\rho +
\mathbf{d}\rho\otimes\mathbf{d}\tau) -
\frac{1}{\kappa^2}(1-\tau^2\kappa^{\prime 2})
\mathbf{d}\rho\otimes\mathbf{d}\rho -\frac{\rho^2}{\kappa^2}\bmsigma,
\qquad \kappa' \equiv \frac{\partial \kappa}{\partial \rho}.
\label{Minkowski:FGaugeGeneral}
\end{equation}
In this case, the neighbourhood of spatial infinity is given by
\[
\bar{\mathcal{M}}\equiv \big\{ p\in \mathbb{R}^4 \;|\; -\frac{1}{\varkappa(p)}\leq \tau(p) \leq
\frac{1}{\varkappa(p)},\; \rho(p)\geq 0  \big\},
\]
while null infinity is described by the sets
\[
\mathscr{I}^\pm \equiv \bigg\{ p\in \bar{\mathcal{M}} \;\rvert \;
\tau(p) =\pm \frac{1}{\varkappa(p)} \bigg\}.
\]

\begin{remark}
{\em The general F-gauge representation of Minkowski spacetime is
  related to the horizontal representation associated with the line
  element \eqref{Minkowski:Cylinder} via a conformal transformation
  which is the identity on $I$. Moreover, also notice that the
  parameter of the conformal geodesics in both representations do not
  coincide and are related to each other via a reparametrisation
  (M\"obius transformation). }
\end{remark}

\begin{remark}
{\em A key difference between the horizontal representation line
  element \eqref{Minkowski:Cylinder} and the more general F-gauge line
element \eqref{Minkowski:FGaugeGeneral} is that the former metric
loses rank (i.e. degenerates) all along null infinity ($\mathscr{I}^+$) while the
latter does it only at the critical sets ($I^\pm$). Thus, horizontal
representation provides a slightly more singular representation of the neighbourhood of spatial infinity. This singular behaviour does not
play a role in the subsequent discussion of this article. Accordingly,
given its relative analytic simplicity, all further discussions of the
F-gauge are done in the horizontal representation.}
\end{remark}

\subsection{The hyperboloid at spatial infinity}
\label{Section:UnitHyperboloid}

\begin{figure}[t]
\centering
\includegraphics[width=120mm]{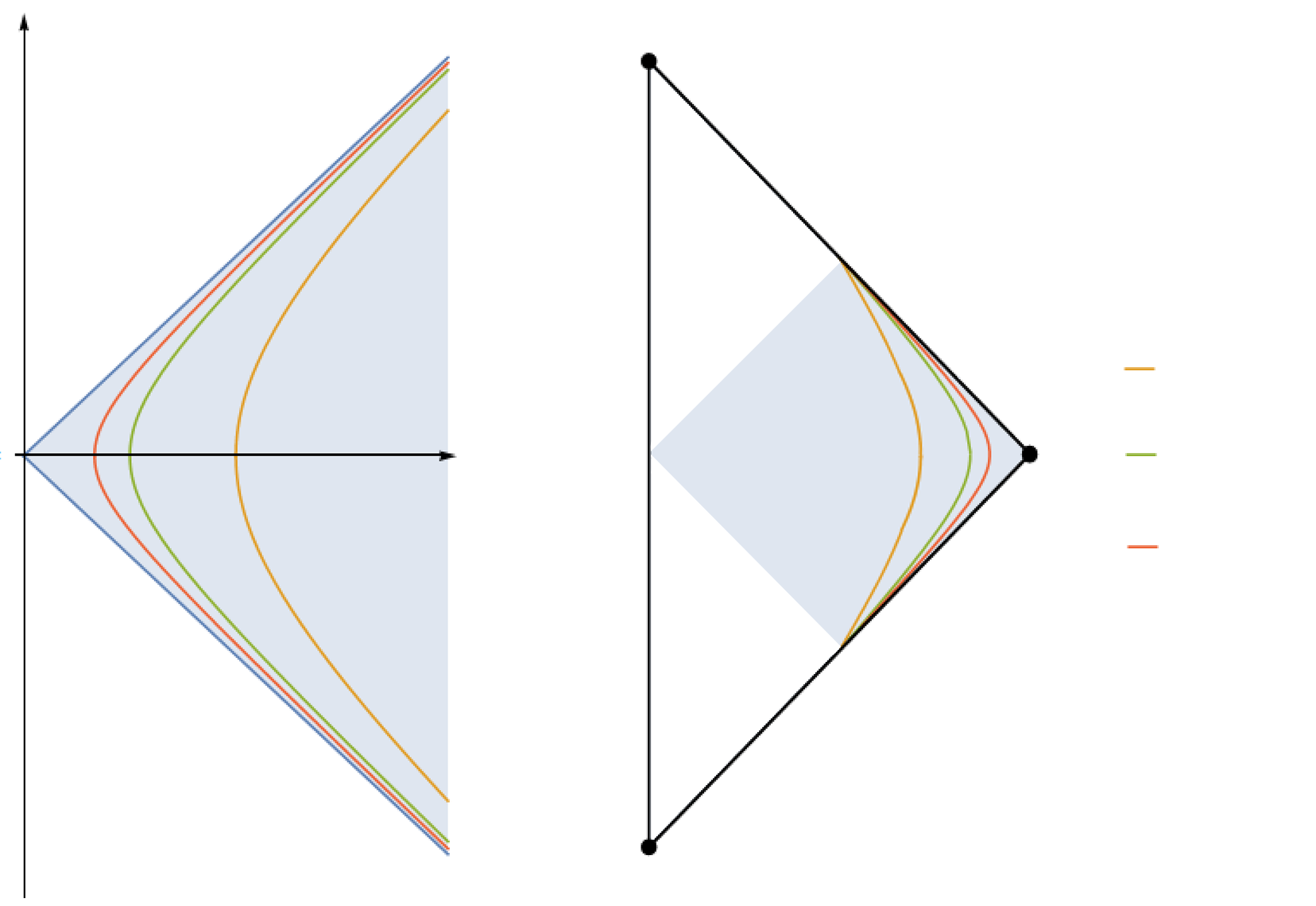}
\put(-30,137){$\psi=2$}
\put(-30,115){$\psi=4$}
\put(-30,93){$\psi=6$}
\put(-215,115){$r(t)$}
\put(-337,240){$t$}
\put(-67,115){$i^0$}
\put(-175,228){$i^+$}
\put(-175,1){$i^-$}
\put(-125,50){$\mathscr{I}^-$}
\put(-120,170){$\mathscr{I}^+$}
\put(-128,-10){(b)}
\put(-280,-10){(a)}
\caption{(a) The timelike hyperboloids in the domain $\tilde{\mathcal{D}}$ of the Minkowski spacetime used in the construction of Ashtekar's hyperboloid at spatial infinity. (b) The timelike hyperboloids shown on the conformal diagram of Minkowski spacetime.}
\label{Fig:timelike-hyperboloids}
\end{figure}

Following the discussion in \cite{AshRom92} a different (albeit
related) representation of spatial infinity as an extended set can be
obtained by considering \emph{hyperbolic coordinates}
$(\psi,\chi,\theta,\varphi)$, related to the standard Cartesian coordinates $(x^{\mu})$ via the relations
\begin{subequations}
\begin{eqnarray}
&& x^0 = \psi \sinh \chi, \label{HyperbolicCoords1}\\
&& x^1 = \psi \cosh \chi \sin\theta \cos\varphi, \label{HyperbolicCoords2}\\
&& x^2 = \psi \cosh \chi \sin \theta \sin \varphi, \label{HyperbolicCoords3}\\
&& x^3 = \psi \cosh \chi \cos \theta. \label{HyperbolicCoords4}
\end{eqnarray}
\end{subequations}
It readily follows that 
\[
\tilde{\rho}^2 -\tilde{t}^2 =\psi^2.
\]
Accordingly, the hypersurfaces of constant $\psi$ are timelike hyperboloids ---see Figure \ref{Fig:timelike-hyperboloids}. As in Section
\ref{Section:MinkowskiPointSpatialInfinity} one has that 
\[
X^2 = \eta_{\mu\nu}x^\mu x^\nu = \tilde{t}^2-\tilde{\rho}^2 =- \psi^2,
\]
so that, as before, spatial infinity is contained in the region
\[
\tilde{\mathcal{D}} \equiv \{ p \in \mathbb{R}^{4} \; | \; X^2(p) <0  \}. 
\]
It follows that the hyperbolic coordinates only cover the domain
$\tilde{\mathcal{D}}$ ---see Figure \ref{Fig:Domain-D}. In hyperbolic coordinates, one can write $\tilde{\bmeta}$ as
\[
\tilde{\bmeta}= - \mathbf{d}\psi \otimes \mathbf{d}\psi +
   \psi^2 \bmell,
\]
where 
\begin{equation}
\bmell \equiv \mathbf{d}\chi\otimes \mathbf{d}\chi -\cosh^2\chi \bmsigma,
\label{hyperboloicmetric}
\end{equation}
is the (Lorentzian) metric of the \emph{unit hyperboloid}. This metric
can be obtained from the pullback of $\tilde{\bmeta}$ to the
hypersurface defined by the condition
$\tilde{\rho}^2-\tilde{t}^2=1$. 

\subsubsection{Standard representation of spatial infinity}
In order to discuss the behaviour near
spatial infinity it is convenient to introduce a new coordinate
\[
\zeta \equiv \frac{1}{\psi}.
\] 
It follows then that 
\begin{equation}
\tilde{\bmeta} = -\frac{1}{\zeta^4}\mathbf{d}\zeta \otimes
\mathbf{d}\zeta + \frac{1}{\zeta^2}\bmell.
\label{Minkowski:ZetaHyperbolic}
\end{equation}
Thus, it seems natural to introduce a conformal factor of the form
\[
\Omega \equiv \zeta^2,
\]
so as to obtain
\begin{eqnarray*}
&& \hat{\bmeta} \equiv \Omega^2 \tilde{\bmeta}\\
&& \phantom{\hat{\bmeta}} = - \mathbf{d}\zeta \otimes
\mathbf{d}\zeta + \zeta^2\bmell.
\end{eqnarray*}
This leads to the \emph{standard} representation of spatial infinity
as a point. This is shown
by noting that the metric of the 2-surfaces of constant $\zeta$ is
given by $\zeta^2 \bmell$. Hence, the set given by $\zeta=0$ has  zero
volume and is forced to be a single point by the choice of conformal
factor $\Omega = \zeta ^2$. Moreover, one can confirm that at $\zeta
=0$, we have $\Omega=0$, $\mathbf{d}{\Omega} =0$, $\mbox{Hess}\,\Omega = -2 \hat{\bmeta}$.

\subsubsection{The hyperboloid at spatial infinity}
\label{Subsubsection:MinkowskiUnitHyperboloid}
In the present case, it is better to define the conformal factor
\[
H \equiv \zeta,
\]
so that 
\begin{eqnarray}
&& {\bmeta} \equiv H^2 \tilde{\bmeta}, \nonumber\\
&& \phantom{\check{\bmeta}} = -\frac{1}{\zeta^2}\mathbf{d}\zeta
   \otimes\mathbf{d}\zeta + \bmell. \label{ConformalMinkowski:AshthekarHyperboloid}
\end{eqnarray}
This particular rescaling readily connects with Friedrich's framework
already discussed in Section \ref{Section:Relating-Ashtekar-F-Gauge-Minkowski}.

\medskip
Consider in the following the timelike hyperboloids defined by the condition
$\zeta=\zeta_\bullet$ where $\zeta_\bullet$ is a constant. 
The key observation in \cite{AshRom92} is that although the conformal
metric \eqref{ConformalMinkowski:AshthekarHyperboloid} is singular at
$\zeta=0$,  
 the conformal 3-metric
\[
{\bmq} = H^2 \tilde{\bmq} =\bmell,
\]
is well defined. Observe also, that the contravariant 4-dimensional metric
\[
{\bmeta}^\sharp = -\zeta^2 \bmpartial_\zeta \otimes
\bmpartial_\zeta + \bmell^\sharp,
\]
is well defined ---although, it losses rank at the set where $\zeta=0$. In addition to the above, observe that
\[
(\mathbf{d}\zeta)^\sharp = \tilde{\bmeta}^\sharp( \mathbf{d}\zeta) =
-\zeta^4\bmpartial_\zeta.
\]
The later suggests introducing a \emph{rescaled unit normal vector}
through the relation
\[
{\bmN} = H^{-4} (\mathbf{d}\zeta)^\sharp = - \bmpartial_\zeta.
\]

\begin{remark}
{\em Starting directly from the (singular) metric
  $\bmeta$ one readily finds that the unit normal covector is
  given by $\zeta^{-1}\mathbf{d}\zeta$. Moreover, one has that 
\[
\big( \zeta^{-1}\mathbf{d}\zeta  \big)^\sharp =-\zeta \bmpartial_\zeta,
\]
which, despite being well defined at $\zeta=0$ vanishes. As it will be
seen in the following, vectors with this behaviour at infinity play a key role in
Friedrich's framework of spatial infinity.}
\end{remark}

\begin{remark}
{\em In the following, the timelike hyperboloid $\mathcal{H}$ described by the condition
$\zeta=0$ together with the induced metric $\bmq=\bmell$ will be known as \emph{the hyperboloid at spatial infinity of Minkowski spacetime}. This hyperboloid is an example of the general
notion of \emph{asymptote at spatial infinity} introduced in the
definition of an asymptotically Minkowskian spacetime at spatial
infinity (AMSI) ---see Definition \ref{Definition:AMSI} in Section \ref{Section:AMSI}. }
\end{remark}


\subsection{Relating the Ashtekar and F-gauge construction in the Minkowski spacetime}
\label{Section:Relating-Ashtekar-F-Gauge-Minkowski}

In this subsection, we obtain the explicit relation between the
description of spatial infinity in terms of the hyperboloidal
coordinates and that based on the F-gauge. The procedure for relating
these two representations will serve as a template for an analogous
computation in more general classes of spacetimes.

\subsubsection{From the hyperboloid at infinity to the cylinder at
  infinity}
In order to relate the pair $(\mathcal{H},\bmell)$ corresponding to
Ashtekar's hyperboloid at spatial infinity to the pair $(I,\bar\bmq)$
with $\bar\bmq$ as given by equation \eqref{EinsteinUniverseMetric}, it
is observed while the former can be thought of as a $1+2$ version of
the \emph{de Sitter spacetime}, the latter is a $1+2$ version of the
Einstein static Universe ---see Remark
\ref{Remark:EinsteinUniverse}. As both metrics $\bmell$ and $\bar\bmq$
are Einstein, their Cotton tensor vanishes and thus, they are
conformally flat. Accordingly, $\bmell$ and $\bar\bmq$ are conformally
related. 

\medskip
In order to find the conformal factor relating $\bmell$ and $\bar\bmq$, we follow the same procedure used to show that the (4-dimensional) de Sitter spacetime
can be conformally embedded in the (4-dimensional) Einstein static
Universe ---see e.g. Section 6.3 in \cite{CFEBook}. More precisely,
starting from the metric $\bmell$ of the unit hyperboloid, introduce
the coordinate transformation given by the relation
\[
\mathbf{d}\chi = \cosh \chi \mathbf{d}\tau.
\]
It follows readily that 
\[
\bmell = \cosh^2 \chi \big( \mathbf{d}\tau \otimes \mathbf{d}\tau -\bmsigma  \big),
\]
from where one can indeed see that  $\bmell$ and $\bar\bmq$ are
conformally related. In particular, it can be shown that $\cosh \chi
=\sec \tau$. 

\begin{remark}
{\em It follows from the previous discussion that Friedrich's cylinder
at spatial infinity is indeed a time compactified version of
Ashtekar's hyperboloid. In particular, the critical sets $I^\pm$
correspond to the limits $\chi\rightarrow \pm \infty$.}
\end{remark}

\subsubsection{Relating the neighbourhoods of spatial infinity}
\label{Subsection:RelatingSpatialInfinityMinkowski}
Now, to relate the two constructions away from spatial infinity we recall that
\begin{equation}
 \bar{\bmeta} = \Theta^2 \tilde\bmeta, \qquad {\bmeta} =H^2
 \tilde\bmeta.
\label{ConformalTransformationsMinkowski}
\end{equation}
Now, as $\zeta=1/\psi$, a direct computation then
gives that 
\begin{eqnarray}
&& \rho = \frac{1}{\psi}\cosh\chi, \qquad \tau =\tanh \chi, \label{coodrinate-transformation}
\end{eqnarray}
so that
\[
\zeta = \rho\sech\chi.
\]
Moreover, from the rescalings in
\eqref{ConformalTransformationsMinkowski} it follows that 
\[
\bmeta = \varpi^2 \bar\bmeta, \qquad \varpi \equiv H\Theta^{-1}.
\]
In terms of coordinates, one has
\[
\varpi = \frac{\zeta}{\rho(1-\tau^2)} =\cosh\chi.
\]
Observing that $\chi =\arctanh \tau$ it follows that
\[
\varpi = \frac{1}{\sqrt{1-\tau^2}}.
\]
Notice that $\varpi|_{\mathcal{S}_\star}=1$. Moreover,
$\varpi\rightarrow \infty$ as $\tau\rightarrow\pm 1$. Thus, $\varpi$
gives rise to a conformal representation of Minkowski
spacetime that does not include null infinity.

\begin{figure}[t]
\centering
\includegraphics[width=80mm]{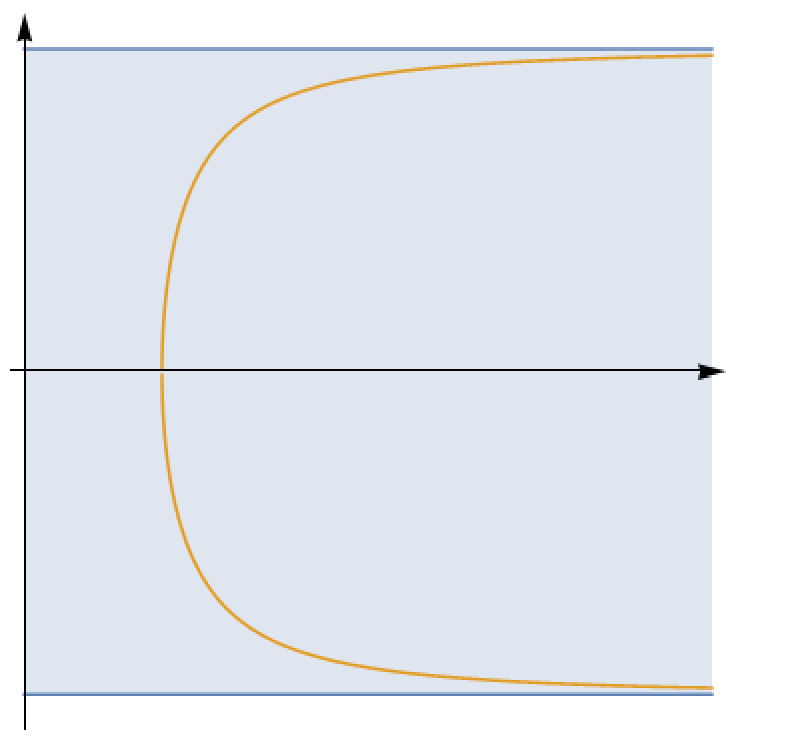}
\put(-130,205){$\tau=1$}
\put(-130,5){$\tau=-1$}
\put(-225,215){$\tau$}
\put(-15,105){$\rho$}
\caption{Example of one of the timelike hyperboloids used in
  Ashtekar's representation of spatial infinity for the Minkowski as
  seen from the point of view of the F-gauge. Observe that the
  hyperboloid asymptotes the sets described by the conditions
  $\tau=\pm 1$. Notice, however, that in this description the
  hyperboloid at spatial infinity is compact and corresponds to the
  portion of the vertical axis between the values $\tau=-1$ and
  $\tau=1$. }
\label{Fig:constant-psi-hyperboloids}
\end{figure}

\begin{remark}
{ \em Using equation (\ref{coodrinate-transformation}) and setting $\psi=\psi_\bullet$ where $\psi_\bullet$ is a constant. One has that $(\tanh \chi, \psi_\bullet^{-1}\cosh\chi)$ describes a curve in the $(\tau,\rho)$-plane ---see the Figure \ref{Fig:constant-psi-hyperboloids}. Observe, in particular, that 
\[
 (\tanh \chi, \psi_\bullet^{-1}\cosh\chi)\longrightarrow (\pm
 1,\infty) \qquad \mbox{as} \qquad \chi\rightarrow \pm \infty.
\]
Accordingly, the hyperboloids of constant $\psi$ never reach the
conformal boundary ---i.e. $\mathscr{I}^\pm$; see Figure
\ref{Fig:constant-psi-hyperboloids}}. Hence, this representation of
spatial infinity cannot be used to study the effects of gravitational
radiation and the relation between asymptotic charges at null infinity
and conserved quantities at spatial infinity.
\end{remark}

\subsubsection{Conformal geodesics and construction of a conformal Gaussian system}

The unphysical metric $\bar{\bmeta}$ is conformally related to the
physical metric $\tilde{\bmeta}$ via the conformal factor $\Theta=
\rho (1-\tau^2)$. Thus, a solution to the conformal geodesic equations
on $\bar\bmeta$ implies a solution to the conformal geodesic equation on
the physical metric $\tilde{\bmeta}$. Recalling that the pair given by
$x(\tau) = (\tau, \rho_{\star}, \theta^{\mathcal{A}}_{\star})$ and
$\bar\bmbeta =\mathbf{d}{\rho}/\rho$ satisfies the conformal geodesic
equations for the metric $\bar\bmeta$, it follows, using the coordinate
transformation to hyperbolic coordinates given by
\eqref{HyperbolicCoords1}-\eqref{HyperbolicCoords4}, that the tangent
vector to the curve $x(\tau)$ is given in the
$(\zeta,\chi,\theta^A)$ coordinates by
\[
\dot{\bmx} = -\frac{\rho_{\star} \tau}{\sqrt{1-\tau^2}} \bmpartial_{\zeta} + \frac{1}{1-\tau^2} \bmpartial_{\chi}.
\]
Then, $(\dot{\bmx}, {\bmbeta})$ provides a solution to the
conformal geodesic equation, where ${\bmbeta}$ is given by
\begin{eqnarray*}
&& {\bmbeta} = \bar\bmbeta - \varpi^{-1} \mathbf{d}\varpi \\
&& \phantom{\bmbeta} =
   \frac{1}{\zeta} \mathbf{d}\zeta =\mathbf{d}\ln H, 
\end{eqnarray*}
where it has been used that 
\[
\bar\bmbeta = \frac{1}{\zeta}\mathbf{d}\zeta +\tanh \chi \mathbf{d}\chi,
\qquad \mathbf{d}\ln \varpi = \tanh\chi \mathbf{d}\chi.
\]

\medskip
\noindent
\textbf{Setting up a conformal Gaussian system.} To conclude this discussion, we show how to compute a conformal Gaussian system on top of Ashtekar's conformal representation of
spatial infinity in Minkowski spacetime. As we have already shown that we have a non-intersecting congruence of conformal geodesics, one can invoke Proposition
\ref{canconical-confromalFactor} in Section \ref{Section:CG} so that one can associate a
conformal factor 
\[
\Lambda (s) = {\Lambda}_\star
+\dot{\Lambda}_\star s +\frac{1}{2}\ddot{\Lambda}_\star s^2,
\]
along each of the curves of the congruence and where $s$ is the
natural 
parameter of the curves. The coefficients in the
above expression satisfy the relations
\[
 \dot{\Lambda}_\star = \langle \tilde\bmbeta_\star, \dot\bmx_\star\rangle\Lambda_\star, \qquad
 \Lambda_\star\ddot{{\Lambda}}_\star =
 \frac{1}{2}\tilde{\bmeta}^\sharp
 (\tilde\bmbeta_\star,\tilde\bmbeta_\star). 
\]

Consistent with the conformal metric
\eqref{ConformalMinkowski:AshthekarHyperboloid}, for $\Lambda_\star$
one prescribes $\Lambda_\star =\zeta$. For the covector $\tilde\bmbeta$ one would like to prescribe a behaviour of the form
$\tilde\bmbeta_\star=2\mathbf{d}\tilde{\rho}/\tilde{\rho}$. Note that the
above form of $\tilde{\bmbeta}$ is consistent with the conformal geodesic
solution obtained on the F-gauge construction with $\bar{\bmbeta}
\propto 1/\rho$. Also, this type of behaviour near spatial infinity is obtained from the
conformal factor $\Omega=\zeta^2=1/\psi^2\sim1/r^2$, which renders the
\emph{standard} point compactification of spatial infinity. Consistent with this
discussion set
\[
\tilde\bmbeta =\mathbf{d}\ln \Omega =\frac{2}{\zeta}\mathbf{d}\zeta. 
\]
it follows from the above discussion that $\dot{{\Lambda}}_\star =0$
as $\langle \mathbf{d}\zeta,\bmpartial_\tau\rangle =0$. Also, notice
that
\[
\tilde{\bmeta}^\sharp(\tilde\bmbeta_\star,\tilde\bmbeta_\star) = \tilde\bmh^\sharp(\tilde\bmbeta_\star,\tilde\bmbeta_\star),
\]
where $\tilde{\bmh}$ is the metric on hypersurfaces of constant $\chi$. A quick computation gives that 
\[
\tilde{\bmh} =-\frac{1}{\zeta^4}\mathbf{d}\zeta\otimes\mathbf{d}\zeta
-\frac{1}{\zeta^2}\bmsigma, \qquad \tilde{\bmh}^\sharp = -\zeta^4
\bmpartial_\zeta\otimes\bmpartial_\zeta +\zeta^2 \bmsigma^\sharp,
\]
so that $\tilde\bmh^\sharp(\tilde\bmbeta_\star,\tilde\bmbeta_\star)
=-4\zeta^2$. From the latter it follows then that $\ddot{\Lambda}_\star =-2\zeta$. Accordingly, the conformal factor associated to the congruence of conformal
geodesics has the form
\[
\Lambda= \zeta(1-s^2). 
\]
Finally, defining $\breve\bmeta \equiv \Lambda^2 \tilde\bmeta$
 so that 
\[
\breve\bmbeta =\tilde\bmbeta -\mathbf{d}\ln \Lambda,
\]
one concludes that $\breve{\bmbeta}_\star =
\mathbf{d}\zeta/\zeta$. This shows the consistency of the prescription
of initial data for the covector $\tilde\bmbeta$ given above.

\section{Asymptotically Minkowskian spacetimes at spatial infinity}
\label{Section:AMSI}

In \cite{AshRom92} the notion of spacetimes which are
\emph{asymptotically Minkowskian at spatial infinity} has been introduced. We want to analyse this definition in light of Friedrich's framework. Hence, we briefly review the relevant definitions and properties.

\subsection{Definitions}
Following the discussion in \cite{AshRom92}, in the following we will consider the following definition:
\begin{definition}
\label{Definition:AMSI}
A vacuum spacetime $(\tilde{\mathcal{M}},\tilde{\bmg})$ is said to \textbf{possess an asymptote at spatial infinity} if there exists a manifold with boundary
$\mathcal{H}$, a smooth function $\Omega$ defined on $\mathcal{M}$ and a diffeomorphism from $\tilde{\mathcal{M}}$ to $\mathcal{M}\setminus\mathcal{H}$ (which is used to identify
$\tilde{\mathcal{M}}$ with its image in $\mathcal{M}$; in particular $\Omega$ is non-vanishing in $\tilde{\mathcal{M}}$) such that:
\begin{itemize}
    \item[(i)] $\Omega = 0$ and $\mathbf{d}\Omega \neq 0$ on $\mathcal{H}$;
    \item[(ii)] the fields
    \[
    \bmq \equiv \Omega^2\big( \tilde{\bmg} + \Omega^{-4}
    \mathbf{d}\Omega\otimes \mathbf{d}\Omega\big)
    \]
    and 
    \[
    \bmN \equiv \Omega^{-4} \tilde{\bmg}^\sharp(\mathbf{d}\Omega,\cdot)
    \]
    admit smooth limits to $\mathcal{H}$. In particular, the pullback of
    $\bmq$ (to be denoted again by $\bmq$) to $\mathcal{H}$ is also well defined and has signature $(+--)$;
    \item $\lim_{\Omega \to 0} \Omega^{-1} \tilde{G}_{ab} =0$ on $\mathcal{H}$.
\end{itemize} 
In addition, if $\mathcal{H}$ has the topology of $\mathbb{R}\times\mathbb{S}^2$ then $(\tilde{\mathcal{M}},\tilde{\bmg})$ is said to be \textbf{asymptotically flat at spatial infinity}. Moroever, if $\mathcal{H}$ is geodesically complete with respect to $\bmq$ we say that the spacetime is \textbf{asymptotically Minkowskian at spatial infinity (AMSI).} 
\end{definition}

\medskip
\noindent
\textbf{Notation.} In the following, for convenience, we will make use of the symbol $\simeq$ to denote equality on $\mathcal{H}$. With this notation the conditions in point \emph{(i)} of the Definition \ref{Definition:AMSI} are written as
\[
\Omega \simeq 0, \qquad \mathbf{d}\Omega\not\simeq 0.
\]

\begin{remark}
{\em The above definition involves some conformal gauge freedom in the sense that if $(\mathcal{M}, \Omega)$ satisfy Definition \ref{Definition:AMSI} and $\alpha$ is a smooth function which is a non-zero constant  on $\mathcal{H}$ and non-vanishing in $\tilde{\mathcal{M}}$, then $(\mathcal{M}, \Omega'=\alpha \Omega)$ also satisfy the definition.
} 
\end{remark}

\subsection{Properties and consequences of the definition of an AMSI spacetime}
In the following, we explore some direct consequences of Definition \ref{Definition:AMSI}, which will be used repeatedly in the rest of this article. Accordingly, throughout, we assume that one has a vacuum
AMSI spacetime  $(\tilde{\mathcal{M}},\tilde{\bmg})$. 

\medskip
For conceptual clarity, let $\zeta$ denote a coordinate such that in a neighbourhood of $\mathcal{H}$ one has
\[
\Omega =\zeta. 
\]
It follows then that the metric 
\[
\bmq \equiv \zeta^2\big( \tilde{\bmg} + \zeta^{-4}
\mathbf{d}\zeta\otimes \mathbf{d}\zeta\big),
\]
has a smooth limit as $\zeta\rightarrow 0$. From this assumption, it follows that $\bmq$ can be written as
\begin{equation}
\bmq =\mathring{\bmq} +\breve{\bmq},
\label{qcloseH}
\end{equation}
with $\mathring{\bmq}$ independent of $\zeta$ and, $\mathring{\bmq}(\bmpartial_\zeta,\cdot)=0$. The tensor $\breve{\bmq}$ has smooth components such that 
\[
\breve{\bmq}=o(\zeta).
\]

\begin{remark}
{\em Recall that the notation $f(\zeta) =
  o(\zeta^\alpha)$ means that $f(\zeta)/\zeta^\alpha\rightarrow 0$ as
  $\zeta\rightarrow 0$.}
\end{remark}
The statement \eqref{qcloseH} can be regarded as a \emph{zeroth-order} Taylor expansion of the metric $\bmq$ with respect to $\zeta$. From the above, it follows that the physical metric $\tilde{\bmg}$ has, close to $\mathcal{H}$, the form
\[
\tilde{\bmg} = -\frac{1}{\zeta^4} \mathbf{d}\zeta \otimes
  \mathbf{d}\zeta + \frac{1}{\zeta^2} \big(\mathring{\bmq} +\breve{\bmq}\big).
\]
Following the analogy of Minkowski spacetime, define the conformal metric 
\[
{\bmg} \equiv \Omega^2 \tilde{\bmg}, 
\]
so that one has
\begin{equation}
{\bmg} = -\frac{1}{\zeta^2} \mathbf{d}\zeta\otimes
\mathbf{d}\zeta + \big( \mathring{\bmq}+\breve{\bmq}),
\label{gcloseH}
\end{equation}
---cf. the line element \eqref{ConformalMinkowski:AshthekarHyperboloid} for the conformal Minkowski spacetime in hyperboloidal coordinates.
\begin{remark}
{\em As in the case of Minkowski spacetime,  the conformal metric \eqref{gcloseH} is singular at $\zeta=0$. Dealing with this singular behaviour will be the main challenge in the subsequent analysis of this section. }
\end{remark}
\begin{remark}
{\em As it will be seen, the metric \eqref{gcloseH} can be further specialised by choosing coordinates on $\mathcal{H}$ so that 
\[
\mathring{\bmq} =\bmell = \mathbf{d}\chi\otimes \mathbf{d}\chi - \cosh^2
\chi \bmsigma,
\]
the metric of the unit timelike hyperboloid.}
\end{remark}

\subsection{The conformal constraint equations on $\mathcal{H}$}
In this section, we discuss the implications of Definition \ref{Definition:AMSI} on the conformal Einstein constraint equations \eqref{ConformalConstraint1} \eqref{ConformalConstraint8} when evaluated on the asymptote $\mathcal{H}$. In this way, we systematically recover the conditions satisfied by the gravitational field on the hyperboloid as discussed in~\cite{AshRom92}.

\medskip
In order to evaluate the conformal Einstein constraints on $\mathcal{H}$, we first consider the equations \eqref{ConformalConstraint1}-\eqref{ConformalConstraint8} on timelike hypersurfaces for which the conformal factor $\Omega = \zeta$ is constant. On these hypersurfaces, one has, in adapted coordinates, that $D_a \zeta=0$ and $D_a D_b \zeta=0$. Following the discussion in \cite{AshRom92}, define 
\[
\digamma\equiv  N^a \nabla_a \Omega.
\]
Note that the unit normal is related to Ashtekar's normal by $n^a = \nu N^a$ with $\nu\equiv \Omega \digamma^{- \frac{1}{2}} = O(\Omega)$ i.e. as $\zeta \rightarrow 0$, we have $|\nu| \leq M |\Omega|$, where $M$ is a positive constant. From (\ref{Sigma-definition}), we have $\Sigma = \nu \digamma$. Taking the limit of
equations \eqref{ConformalConstraint1}-\eqref{ConformalConstraint8} as $\zeta\rightarrow 0$, one obtains the following equations on $\mathcal{H}$:
\begin{subequations}
\begin{eqnarray}
&& D_a \digamma \simeq 0, \label{ConfCostraintHyp2}\\
&& D_a s \simeq 0, \label{ConfCostraintHyp8}\\
&& D_a L_{bc} - D_b L_{ac} \simeq 0 ,\label{ConfCostraintHyp3}\\
&& D_a L_{b \bot} - D_b L_{a \bot} - K_a{}^c L_{bc} + K_b{}^c L_{ac} - n_b K_a{}^c L_{c \bot} + n_a K_b{}^c L_{c \bot} \simeq 0, \label{ConfCostraintHyp4}\\
&&  D^c d_{c \bot ab} \simeq 0, \label{ConfCostraintHyp5}\\
&& D^c d_{c \bot a \bot} \simeq 0. \label{ConfCostraintHyp6} \\
&& s \simeq 0, \label{ConfCostraintHyp7}
\end{eqnarray}
\end{subequations}
As in the previous section, the symbol $\simeq$ is used to indicate an equality which holds at $\mathcal{H}$, and these equations are to be understood as the limit as $\zeta \rightarrow 0$. For example, equation \eqref{ConfCostraintHyp2} is to be understood as
$$
\lim_{\zeta \rightarrow 0} D_a \digamma =0. 
$$
The conformal field equations imply an additional constraint $D_a D_b \Omega \simeq 0$, which is satisfied identically on $\mathcal{H}$. In addition to the above relations, one can also consider the Gauss-Codazzi and Codazzi-Mainardi equations on $\mathcal{H}$. Assuming vacuum, these give the relations
\begin{subequations}
\begin{eqnarray}
&& q_a{}^c q_b{}^d L_{cd} \simeq l_{ab} + \tfrac{1}{4} q_{ab} \left( K_{cd} K^{cd} - K^2 \right) + K K_{ab} - K_{a}{}^c K_{bc}, \label{Gauss-Codazzi}\\
&& D_a K_{bc} - D_b K_{ac} \simeq 0, \label{Codazzi-Mainardi}
\end{eqnarray}
\end{subequations}
where $l_{ab}$ denotes the \emph{Schouten tensor} of the metric
$q_{ab}$ and one uses $q^{ab}$ to raise and lower indices on $\mathcal{H}$.

\medskip
A direct consequence of the above relations is that $\digamma$ is
constant on $\mathcal{H}$. Moreover, from the above relations, one obtains the following:
\begin{lemma}
\label{Lemma:TheHyperboloidIsEinstein}
For a spacetime  $(\tilde{\mathcal{M}},\tilde{\bmg})$ satisfying Definition \ref{Definition:AMSI}, it follows that on the asymptote $\mathcal{H}$, one has the relation
\[
r_{ab} \simeq 2 \digamma q_{ab}.
\]
In other words, $(\mathcal{H},\bmq)$ is an Einstein space.
\end{lemma}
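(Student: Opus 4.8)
The plan is to combine the Gauss--Codazzi relation \eqref{Gauss-Codazzi} with the conformal constraint \eqref{ConformalConstraint1} evaluated at $\mathcal{H}$, using the extra information that $\Omega$ is constant on the timelike hypersurfaces under consideration (so $D_i\zeta = 0$, $D_iD_j\zeta = 0$) and that $s\simeq 0$ by \eqref{ConfCostraintHyp7}. First I would look at equation \eqref{ConformalConstraint1}, namely $D_iD_j\omega + \Sigma K_{ij} + \omega L_{ij} - sq_{ij} = 0$. Since $\omega = \zeta$ is constant on these hypersurfaces the first term vanishes identically, and $\omega\simeq 0$ kills the third; since $s\simeq 0$ the last term also drops. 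One is left with $\Sigma K_{ij}\simeq 0$, and since $\Sigma = \nu\digamma$ with $\nu = O(\Omega)$ this degenerates; the right way to read it is to divide by $\nu$ before taking the limit, i.e. to use $\Sigma = \nu\digamma$ together with the $o(\zeta)$-expansions and extract the leading behaviour, which forces $K_{ij}\simeq 0$ on $\mathcal{H}$. (Equivalently, one can argue directly from the AMSI expansion \eqref{qcloseH}: since $\bmq = \mathring{\bmq} + \breve{\bmq}$ with $\breve{\bmq} = o(\zeta)$, the extrinsic curvature of the $\zeta = \text{const}$ surfaces computed with $\bmg$ in \eqref{gcloseH} vanishes in the limit.) The key intermediate fact is therefore $K_{ij}\simeq 0$.

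With $K_{ij}\simeq 0$ in hand, the Gauss--Codazzi relation \eqref{Gauss-Codazzi} collapses: all the $K$-quadratic terms and the $KK_{ij}$ term disappear, leaving
\[
q_i{}^a q_j{}^b L_{ab} \simeq l_{ij},
\]
so the tangential projection of the spacetime Schouten tensor coincides with the intrinsic Schouten tensor $l_{ij}$ of $(\mathcal{H},\bmq)$. Next I would evaluate the remaining conformal constraints to pin down $l_{ij}$. Combining \eqref{ConformalConstraint1} once more — now keeping the next order in $\zeta$, or more cleanly combining \eqref{ConformalConstraint2} ($D_i\Sigma + \omega L_{i\bot}\simeq 0$) and the trace of \eqref{ConformalConstraint1} with \eqref{ConformalConstraint8} ($6\Omega s - 3D_i\omega D^i\omega - 3\Sigma^2\simeq 0$, which at $\mathcal{H}$ with $D_i\omega\simeq 0$ and $\omega\simeq 0$ relates $\Sigma$ to $\digamma$) — one shows that the tangential part of $L_{ab}$ is pure trace on $\mathcal{H}$. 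Concretely, $\Sigma K_{ij} + \omega L_{ij} - sq_{ij} = -D_iD_j\omega$; expanding each side to first order in $\zeta$ and using $D_iD_j\omega$ (now with $\omega$ the genuine conformal factor, not restricted to be constant) one gets the Hessian of $\zeta$ contributing a term proportional to $\digamma q_{ij}$. Matching the traceless and trace parts then yields $q_i{}^a q_j{}^b L_{ab}\simeq c\, q_{ij}$ for a function $c$, and the scalar relations force $c\simeq \digamma$ (the factor $\digamma$ entering through $\Sigma \to \nu\digamma$ and the normalisation \eqref{ConformalConstraint8}).

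Finally, I would assemble the pieces. From $l_{ij} \simeq q_i{}^a q_j{}^b L_{ab} \simeq \digamma\, q_{ij}$ and the standard relation between the Schouten tensor and the Ricci tensor in dimension $3$, namely $l_{ij} = r_{ij} - \tfrac14 r\, q_{ij}$ (so $r_{ij} = l_{ij} + \tfrac14 r\, q_{ij}$ with $r = 4\operatorname{tr} l$ after taking the trace of the first identity in dimension $3$), one obtains $r_{ij}\simeq 2\digamma q_{ij}$, which is the claim; taking the trace shows $r\simeq 6\digamma$, consistent, and since $\digamma$ is constant on $\mathcal{H}$ by the consequence of \eqref{ConfCostraintHyp2} noted just above the lemma, $(\mathcal{H},\bmq)$ is a genuine Einstein $3$-manifold. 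The main obstacle I anticipate is the careful bookkeeping of the degenerate limits: several constraints, read naively at $\zeta = 0$, become $0 = 0$, and the real content is in the \emph{coefficient} of the leading power of $\zeta$ (equivalently in the first-order term of the $o(\zeta)$ expansion of $\bmq$). Getting the factor $2\digamma$ right — rather than $\digamma$ or $3\digamma$ — requires tracking both the $\nu = \Omega\digamma^{-1/2}$ substitution in $\Sigma$ and the normalisation constraint \eqref{ConformalConstraint8}, and it is there that a dimension-dependent numerical factor is most easily mishandled.
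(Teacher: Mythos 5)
Your overall strategy --- determine the extrinsic curvature of the $\Omega=\mathrm{const}$ hypersurfaces on $\mathcal{H}$, feed it into the Gauss--Codazzi relation \eqref{Gauss-Codazzi}, and fix the trace part of the projected Schouten tensor from the constraint equations --- is viable, but it is not the paper's argument and, as written, two of your steps do not go through. The paper obtains the tangential projection of $L_{ab}$ not from the constraints but from the conformal transformation law of the Schouten tensor together with the vacuum condition $\tilde{L}_{ab}=0$, then substitutes the umbilicity relation $K_{ij}\simeq-\digamma^{1/2}q_{ij}$ (not $K_{ij}\simeq 0$) into \eqref{Gauss-Codazzi}, arriving at $l_{ij}\simeq\digamma q_{ij}$, i.e. $r_{ij}\simeq 4\digamma q_{ij}$, and finally rescales $\digamma$ by the conformal gauge freedom to present the result as $2\digamma q_{ij}$. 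Your claim $K_{ij}\simeq 0$ is correct for the extrinsic curvature defined with the unphysical metric $\bmg$ of \eqref{gcloseH} (in the form $-\zeta^{-2}\mathbf{d}\zeta\otimes\mathbf{d}\zeta+\mathring{\bmq}+\breve{\bmq}$ the level sets are asymptotically totally geodesic), but your first justification of it is circular: in \eqref{ConformalConstraint1} the terms $\omega L_{ij}$ and $s q_{ij}$ are of the \emph{same} order $O(\Omega)$ as $\Sigma K_{ij}$ (indeed $s=\tfrac12\Omega\digamma$ follows from \eqref{ConformalConstraint8} with $D_i\omega=0$), so after dividing by $\Omega$ the constraint reads $\digamma^{1/2}K_{ij}+q_i{}^aq_j{}^bL_{ab}-(s/\Omega)\,q_{ij}\simeq 0$; it determines the projection of $L_{ab}$ once $K_{ij}$ is known, it does not force $K_{ij}\simeq0$. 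Only your parenthetical argument (direct computation from \eqref{qcloseH}--\eqref{gcloseH}, using the smoothness of $\breve{\bmq}$) establishes $K_{ij}\simeq 0$, and note this does not actually contradict the paper: its $K_{ij}\simeq-\digamma^{1/2}q_{ij}$ refers in effect to the rescaled physical extrinsic curvature $\Omega\tilde{K}_{ij}$, related to the $\bmg$-one by the conformal transformation law of $K$, so you must be explicit that the $K_{ij}$ you insert into \eqref{Gauss-Codazzi} is the one that equation is written for.

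The second gap is the trace part and the final arithmetic. You assert, without derivation, that ``the scalar relations force $c\simeq\digamma$''; carrying the computation out, \eqref{ConformalConstraint8} on the $\Omega$-level sets gives $6\Omega s=3\Sigma^2=3\Omega^2\digamma$, hence $s/\Omega=\tfrac12\digamma$, and the order-$\Omega$ reading of \eqref{ConformalConstraint1} with $K_{ij}\simeq0$ then yields $q_i{}^aq_j{}^bL_{ab}\simeq\tfrac12\digamma q_{ij}$, not $\digamma q_{ij}$. Moreover your last step is internally inconsistent: from $l_{ij}\simeq\digamma q_{ij}$ and $l_{ij}=r_{ij}-\tfrac14 r q_{ij}$ one gets $r\simeq12\digamma$ and $r_{ij}\simeq4\digamma q_{ij}$, not $r\simeq6\digamma$ and $2\digamma q_{ij}$; your stated conclusion requires $l_{ij}\simeq\tfrac12\digamma q_{ij}$. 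With the corrected constant your route does close, and in fact delivers $r_{ij}\simeq2\digamma q_{ij}$ directly, without the redefinition of $\digamma$ the paper invokes at the end --- but as submitted the factor-of-two bookkeeping, which you yourself flagged as the danger point, is exactly where the argument breaks.
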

\begin{proof}
Starting from the transformation rule for the Schouten tensor
\[
L_{cd} - \tilde{L}_{cd} = - \frac{1}{\Xi} \nabla_c \nabla_d \Xi + \frac{1}{2 \Xi^2} \nabla_e \Xi \nabla^e \Xi g_{cd},
\]
contracting with $q_a{}^c q_b{}^d$ and using $K_{ab} = - q_a{}^c q_b{}^d \nabla_c n_d$ and $\nabla_c \Xi \nabla^c \Xi = \Xi ^{2} \digamma $, the relation between $L_{ab}$ and $\tilde{L}_{ab}$ can be written as
\[
q_a{}^c q_b{}^d L_{cd} = q_a{}^c q_b{}^d \tilde{L}_{cd} + \digamma ^{\frac{1}{2}} K_{ab} + \frac{1}{2} \digamma q_{ab}.
\]
The first term on the right-hand side vanishes for a spacetime satisfying Definition \ref{Definition:AMSI}. Substituting into the Gauss-Codazzi equation \eqref{Gauss-Codazzi} and using $K_{ab} \simeq - \digamma^{\frac{1}{2}} q_{ab}$, we get
\[
l_{ab} = \digamma q_{ab}. 
\]
Then, defining the 3-dimensional Schouten tensor as $l_{ab} \equiv r_{ab} - \frac{1}{4} r q_{ab}$, we can confirm that $r_{ab} = 4 \digamma q_{ab}$. Making use of the conformal gauge freedom, $\digamma$ can be redefined so that $r_{ab} = 2 \digamma q_{ab}$. 
\end{proof}
In addition, one has the following \emph{peeling-type} behaviour for
the components of the Weyl tensor:
\begin{lemma}
\label{Lemma:WeylVanishesOnH}
For a spacetime  $(\tilde{\mathcal{M}},\tilde{\bmg})$ satisfying Definition \ref{Definition:AMSI}, the electric $E_{ab}$ and magnetic $B_{ab}$ parts of the Weyl tensor satisfy
\begin{eqnarray*}
&& E_{ab} \simeq 0, \\
&& B_{ab} \simeq 0.
\end{eqnarray*}
\end{lemma}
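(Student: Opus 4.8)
The idea is to recover the electric and magnetic parts of the Weyl tensor from the Gauss and Codazzi--Mainardi equations on $\mathcal{H}$ and then to feed into them everything that is already known about $\mathcal{H}$: that $(\mathcal{H},\bmq)$ is an Einstein $3$-manifold (Lemma \ref{Lemma:TheHyperboloidIsEinstein}), hence a space of constant curvature since $\dim \mathcal{H}=3$; that it is totally umbilic with $K_{ij}\simeq -\digamma^{\frac12}q_{ij}$; that $\digamma$ is constant on $\mathcal{H}$; that $s\simeq 0$; and that the tangential projection $q_i{}^a q_j{}^b L_{ab}$ of the Schouten tensor of $\bmg$ is pinned down on $\mathcal{H}$ (its value is read off from the identity displayed in the proof of Lemma \ref{Lemma:TheHyperboloidIsEinstein}). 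Schematically, $E_{ij}$ is the normal--normal projection of the Weyl tensor, which through the curvature decomposition \eqref{curvature-decomposition} and the Gauss equation becomes a combination of the intrinsic curvature of $\mathcal{H}$, terms quadratic in $K_{ij}$, and components of $L_{ab}$ and $s$; similarly $B_{ij}$ is, up to Hodge duality on $\mathcal{H}$, the antisymmetrised tangential derivative of $K_{ij}$ governed by \eqref{Codazzi-Mainardi}. The claim is that, once all of this data is substituted, each expression vanishes identically.

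The magnetic part is the easy case. Combining the Codazzi--Mainardi relation \eqref{Codazzi-Mainardi}, $D_i K_{jk}-D_j K_{ik}\simeq 0$, with the curvature decomposition, $B_{ij}$ on $\mathcal{H}$ is controlled by $D_{[k}K_{l]j}$ (together with a term in $L_{i\bot}$ which, using $\Sigma=\nu\digamma$, $\nu=O(\Omega)$ and the constraint \eqref{ConformalConstraint2}, is seen to reduce on $\mathcal{H}$ to a multiple of $D_i\digamma$). Since $K_{ij}\simeq -\digamma^{\frac12}q_{ij}$ and $D_i\digamma\simeq 0$ (equation \eqref{ConfCostraintHyp2}), one has $D_k K_{lj}\simeq -q_{lj}\,D_k\digamma^{\frac12}\simeq 0$ whenever $\digamma\neq 0$, while the case $\digamma\simeq 0$ forces $K_{ij}\simeq 0$; in all cases $D_k K_{lj}\simeq 0$, and hence $B_{ij}\simeq 0$.

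For the electric part one works through the fully tangential projection of the Riemann tensor of $\bmg$. By the Gauss equation this equals the intrinsic Riemann tensor $r_{ijkl}$ of $(\mathcal{H},\bmq)$ plus terms quadratic in $K_{ij}$; by the curvature decomposition \eqref{curvature-decomposition} it also equals a ``tangential Weyl'' term plus terms built from $q_i{}^aq_j{}^bL_{ab}$ and its trace. Contracting with $q^{ik}$ and using the trace-freeness of the Weyl tensor isolates $E_{jl}$ on one side and, on the other, an explicit expression in $r_{ij}$, $K_{ij}$, $q_i{}^aq_j{}^bL_{ab}$ and $s$. Now substitute $r_{ij}\simeq 2\digamma q_{ij}$ (so that $r_{ijkl}$ is the constant-curvature tensor), $K_{ij}\simeq-\digamma^{\frac12}q_{ij}$, the value of $q_i{}^aq_j{}^bL_{ab}$ fixed in the proof of Lemma \ref{Lemma:TheHyperboloidIsEinstein}, and $s\simeq 0$ together with the remaining constraints on $\mathcal{H}$: the resulting combination of $\digamma\,q_{ij}$-type terms cancels. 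A structural way to see that the cancellation must occur is that the intrinsic metric $\bmell$ and the extrinsic curvature $-\digamma^{\frac12}\bmell$, together with the $L_{ab}$-- and $s$--data just listed, are exactly those induced on a hyperboloid of constant $\psi$ in the Minkowski spacetime (Subsection \ref{Subsubsection:MinkowskiUnitHyperboloid}), where the Weyl tensor vanishes identically; since the Gauss and Codazzi--Mainardi equations determine $E_{ij}$ and $B_{ij}$ from precisely this data, they must vanish here as well. Thus $E_{ij}\simeq 0$.

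The main obstacle is not the algebra of the cancellation but the degeneracy of the conformal metric $\bmg$ in \eqref{gcloseH} at $\mathcal{H}$ --- its $\mathbf{d}\zeta\otimes\mathbf{d}\zeta$ component blows up like $\zeta^{-2}$ --- so one must verify that every projected curvature quantity entering the Gauss and Codazzi--Mainardi identities possesses a well-defined limit as $\zeta\to 0$. This is exactly where the regularity built into Definition \ref{Definition:AMSI} is used: the smoothness of $\bmq$ and $\bmN$, the decomposition $\bmq=\mathring{\bmq}+\breve{\bmq}$ with $\breve{\bmq}=o(\zeta)$, and the consequences of the definition already extracted above ensure that the limiting forms of the relevant equations are precisely \eqref{Gauss-Codazzi}--\eqref{Codazzi-Mainardi} and the constraints \eqref{ConfCostraintHyp2}--\eqref{ConfCostraintHyp7}. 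Granting this, the computations above yield $E_{ab}\simeq 0$ and $B_{ab}\simeq 0$.
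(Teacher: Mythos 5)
Your proposal is correct in substance, and for the magnetic part it is essentially the paper's argument: the paper writes $B_{ab} = -\epsilon_a{}^{cd}\left( D_{[d}K_{c]b} + \tfrac{1}{2}( g_{db}L_{ec}n^e - g_{bc}L_{ed}n^e) \right)$ and kills both terms in the limit, exactly as you do via \eqref{Codazzi-Mainardi} and the vanishing of $L_{i\bot}$ (which the paper gets from the conformal transformation law of the Schouten tensor with $\tilde{L}_{ab}=0$ rather than from \eqref{ConformalConstraint2}, but the content is the same). For the electric part, however, you take a genuinely different route. The paper uses the normal--normal projection of the decomposition \eqref{curvature-decomposition}, obtaining $E_{ab} = \mathcal{L}_{n}K_{ab} + \digamma^{\frac{1}{2}}D_aD_b\digamma^{-\frac{1}{2}} - K_b{}^cK_{ac} + q_a{}^cq_b{}^dL_{cd} - q_{ab}n^cn^dL_{cd}$, and then evaluates each term on $\mathcal{H}$ using $K_{ij}\simeq-\digamma^{\frac{1}{2}}q_{ij}$ and the Schouten transformation law. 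You instead go through the fully tangential Gauss equation combined with the trace-freeness of the Weyl tensor, which expresses $E_{jl}$ algebraically in $r_{ij}$, $K_{ij}$ and $q_i{}^aq_j{}^bL_{ab}$; this buys you that the awkward quantities $\mathcal{L}_nK_{ab}$ and $D_aD_b\digamma^{-\frac{1}{2}}$ never appear, at the cost that the final cancellation of the $\digamma\,q_{ij}$ terms is asserted rather than displayed --- though your Minkowski-hyperboloid comparison does legitimately close this, since the relation is a pointwise algebraic identity evaluated on data that coincides with the hyperboloid data, where the Weyl tensor vanishes. Two small cautions: the displayed relation \eqref{Gauss-Codazzi} already has the Weyl contribution eliminated, so your derivation must use the Gauss equation with the electric term retained (the same limiting regularity applies, but it is not literally one of the displayed equations); and the intermediate claim $D_kK_{lj}\simeq 0$, obtained by differentiating the limiting value $K_{ij}\simeq-\digamma^{\frac{1}{2}}q_{ij}$, is not justified as stated --- fortunately only the antisymmetrised combination is needed, and that is precisely \eqref{Codazzi-Mainardi}, which you also invoke.
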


\begin{proof}
Starting with the definition of the electric and magnetic part
of the Weyl tensor
$$
E_{ab} = C_{acbd} n^c n^d, \; \; \; B_{ab} = {}^\ast C_{acbd} n^c n^d,
$$
where $^\ast C_{acbd}$ is the left Hodge dual of $C_{acbd}$. Using the
decomposition of the curvature tensor~\eqref{curvature-decomposition} and after a lengthy calculation, one
can write
\begin{eqnarray*}
&& E_{ab} = \mathcal{L}_{n} K_{ab} + \digamma^{\frac{1}{2}} D_a D_b \digamma ^{-\frac{1}{2}} - K_b{}^c K_{ac} + q_a{}^c q_b{}^d L_{cd} - q_{ab} n^c n^d L_{cd}, \\
&& B_{ab} = - \epsilon_a{}^{cd} \left( D_{[d} K_{c] b} + \frac{1}{2} \left( g_{db} L_{ec} n^e - g_{bc} L_{ed} n^e \right) \right)
\end{eqnarray*}          
where $\mathcal{L}_n$ denotes the Lie derivative in the direction of
the unit normal $n^{a}$. Then making use of the transformation law of the
Schouten tensor and the fact that $\tilde{L}_{ab}=0$ and taking the
limit as $\zeta \rightarrow 0$ and again using $K_{ab} \simeq -\digamma^{\frac{1}{2}} q_{ab} $, it can be shown that $\lim_{\zeta
\rightarrow0} E_{ab} =0$ and $\lim_{\zeta\rightarrow0} B_{ab} =0$.
\end{proof}

\subsection{Relating a general asymptote to Friedrich's cylinder}
\label{Section:HyperboloidToCylinder}
The purpose of this section is to show that, as in the case of Minkowski spacetime, the 3-dimensional Lorentzian manifold $(\mathcal{H},\bmq)$ is conformally related to the 3-dimensional Einstein Universe (Friedrich's cylinder) $(\mathbb{R}\times\mathbb{S}^2,
\bar{\bmq}\equiv \mathbf{d}\tau\otimes\mathbf{d}\tau -\bmsigma)$.

\medskip
In the following, we assume one is given an asymptote $\mathcal{H}$ as given by Definition \ref{Definition:AMSI}. From Lemma \ref{Lemma:TheHyperboloidIsEinstein}, it follows that the intrinsic Ricci tensor $r_{ij}$ can be written as
\[
r_{ij} \simeq 2 \digamma q_{ij}
\] 
where $\digamma$ is constant on $\mathcal{H}$. In order to find the conformal transformation between $\bmq$ and $\bar\bmq$, we proceed in two steps: (i) First, we show that the intrinsic metric $q_{ij}$ is conformally related to the $1+2$-dimensional Minkowski spacetime metric $\mathring{q}_{ij}$; and (ii) use the fact that the $1+2$-dimensional Minkowski spacetime is conformally related to the 3-dimensional Einstein cylinder. The composition of these two conformal rescalings gives the relation between $\bmq$ and $\bar\bmq$ ---see Figure \ref{Fig:ConformalFactors2}. 
\begin{figure}[t]
\centering
\includegraphics[width=\textwidth]{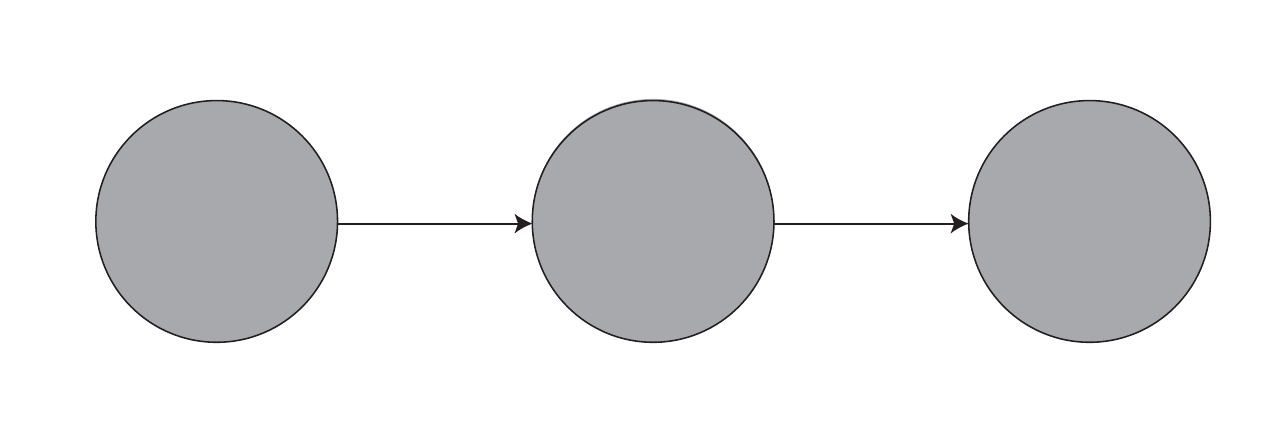}
\put(-360,85){3-dim}
\put(-383,72){Einstein metric}
\put(-360,59){$({\mathcal{H}},{\bmq})$}
\put(-218,85){1+2}
\put(-229,72){Minkowski}
\put(-222,59){$(\mathcal{M},\mathring{\bmq})$}
\put(-85,85){Cylinder at}
\put(-95,72){Spatial infinity}
\put(-85,59){$(\mathbb{R}\times\mathbb{S}^2,\bar{\bmq})$}
\put(-285,75){$\varsigma$}
\put(-139,75){$\omega$}
\caption{Schematic summary of the argument showing that the an asymptote $\mathcal{H}$ satisfying Definition \ref{Definition:AMSI} and the $1+2$ Einstein static Universe are conformally related.}
\label{Fig:ConformalFactors2}
\end{figure}

\subsubsection{From the Einstein metric to $1+2$-Minkowski spacetime}
We begin by observing that if $\bmq$ is conformally related to the $1+2$-Minkowski metric $\mathring{\bmq}$ via a rescaling  of the form $\bmq = \varsigma^{2} \mathring{\bmq}$, then the Schouten tensors of $\bmq$ and $\mathring{\bmq}$ are related via 
\begin{equation}
l_{ij} -\mathring{l}_{ij}= - \frac{1}{\varsigma} D_i D_j \varsigma +
\frac{1}{2 \varsigma^2} D_k \varsigma D^k\varsigma q_{ij}. 
\label{Schouten-transformation}
\end{equation}
Now, defining  
\[
\alpha_i \equiv \varsigma^{-1} D_{i}\varsigma,
\]
then we can rewrite equation \eqref{Schouten-transformation} as 
\[
l_{ij} - \mathring{l}_{ij}= - D_i \alpha_j - \alpha_i \alpha_j + \frac{1}{2} \alpha_k \alpha^k q_{ij}.
\]
Given that $\mathring{l}_{ij} =
0$, one finds that 
\begin{equation}
D_{i} \alpha_j = - l_{ij} - \alpha_i \alpha_j + \frac{1}{2} \alpha_k \alpha^k q_{ij}. \label{alpha-PDE1}
\end{equation}
Multiplying by $q^{il}$ and expanding the covariant derivative in local coordinates
$\underline{x}=(x^\alpha)$, one ends up with the expression
\begin{equation}
\partial_\alpha \alpha^\beta= - l_\alpha{}^\beta - \alpha_\alpha \alpha^\beta + \frac{1}{2} \alpha_\gamma \alpha^\gamma \delta_\alpha^\beta - \Gamma_\alpha{}^\beta{}_\gamma{} \alpha^\gamma. \label{alpha-PDE2}
\end{equation}
This is an overdetermined partial differential equation for the
components of the covector $\bmalpha$. To ensure the existence of a solution to this equation, we make use of \emph{Frobenius theorem} ---see
e.g. \cite{Dun10}, Appendix C. Specifically, for a system of partial
differential equations of $N$ dependent variables given in terms of $n$ independent variables
\[
\frac{\partial \alpha^{\mathcal{A}}}{\partial x^{i}} = \psi_{i}^{\mathcal{A}}(\underline{x},{\bmalpha}), \qquad i=1,\dots,n, \qquad \mathcal{A}=1,...,N,
\]
the \emph{necessary and sufficient condition} to find a unique solution
$\alpha^{\mathcal{A}}=\alpha^{\mathcal{A}}(\underline{x})$ is given by
\begin{eqnarray}
&& \frac{\partial \psi_{\alpha}^{\mathcal{A}}}{\partial x^\beta} - \frac{\partial \psi_{\beta}^{\mathcal{A}}}{\partial x^\alpha }+ \sum_{\mathcal{B}} \left( \frac{\partial \psi _{\alpha}^{\mathcal{A}}}{\partial \alpha^{\mathcal{B}}} \psi_{\beta}^{\mathcal{B}} - \frac{\partial \psi _{\beta}^{\mathcal{A}}}{\partial \alpha^{\mathcal{B}}} \psi_{\alpha}^{\mathcal{B}} \right) = 0. \label{integrability-condition}
\end{eqnarray}
In the case of equation \eqref{alpha-PDE2}, one has that $n=3$ and $N=3$. Thus, making the identification $\psi_\alpha^{\mathcal{A}} \mapsto \partial_\alpha \alpha^{\mathcal{A}}$ and after a lengthy calculation, we find that equation \eqref{integrability-condition} can be rewritten as
\begin{eqnarray}
&& r^l{}_{kji} \alpha_l - 2 \alpha_{[i} D_{j]} \alpha_k + 2 \alpha_l{} q_{k[i} D_{j]} \alpha^l =0  \label{integ-cond}
\end{eqnarray} 
with $r^l{}_{kji}$ denoting the components of the Riemann tensor of $\bmq$. Now, in 3 dimensions, the Riemann tensor is completely determined by the Schouten tensor ---more precisely
\[
r^{k}{}_{l j i} = 2 \left( \delta^k{}_{[j} l_{i]l} - q_{l[j} l_{i]}{}^k \right).
\]
Using this expression for the Riemann tensor, together with the condition $l_{ij}= \digamma q_{ij}$, valid for an Einstein space, in equation \eqref{integ-cond}, we find that the integrability condition is automatically satisfied. Accordingly, there exists a solution $\alpha_\alpha$ to equation \eqref{alpha-PDE2}. From equation \eqref{alpha-PDE1}, one can show that $D_{[i} \alpha_{j]}=0$, so $\alpha_j$ is a closed covector. Hence, it is locally exact. Thus, one can guarantee the existence of a conformal factor locally relating the 3-dimensional Lorentzian metric $\bmq$ and the $1+2$-Minkowski metric
$\mathring{\bmq}$.

\subsubsection{From the $1+2$-Minkowski spacetime to Friedrich's cylinder}
The next step is to find the conformal factor relating the $1+2$-Minkowski to the cylinder at spatial infinity. Starting with the $1+2$-Minkowski metric in the form 
\[
\mathring{\bmq} = \mathbf{d}t \otimes \mathbf{d}t - \mathbf{d}r \otimes \mathbf{d}r - r^2 \mathbf{d}\theta \otimes \mathbf{d}\theta,
\] 
we can write this metric in terms of double null coordinates $u \equiv t-r$ and $v\equiv t+r$ as 
\[
\mathring{\bmq} = \frac{1}{2} \left( \mathbf{d}{v} \otimes \mathbf{d}{u} + \mathbf{d}{u} \otimes \mathbf{d}{v} \right) - \frac{1}{4} (v-u)^2 \mathbf{d}{\theta} \otimes \mathbf{d}{\theta}.
\]
If we introduce the two coordinate transformations
\[
u = \tan U, \qquad v=\tan{V},
\]
and
\[
R= V-U, \qquad T=V+U,
\]
the $1+2$-Minkowski metric transforms to 
\begin{eqnarray*}
&& \mathring{\bmq} = \omega^{-2} \left( \mathbf{d}{T} \otimes \mathbf{d}{T} - \mathbf{d}{R} \otimes \mathbf{d}{R} - \sin^2 R \mathbf{d}{\theta} \otimes \mathbf{d}{\theta} \right), \\
&&\phantom{tilde{\bmeta}} = \omega^{-2} \left( \mathbf{d}{T} \otimes \mathbf{d}{T} - \bmsigma \right), \\
&& \phantom{tilde{\bmeta}} =  \omega^{-2} \bar{\bmq},
\end{eqnarray*}
where $\bar{\bmq}$ is the metric of Friedrich's cylinder and $\omega$ is given by 
\[
\omega \equiv 2 \cos U \cos V \equiv \cos T + \cos R.
\]

\subsubsection{Summary}

The discussion of the previous subsections can be summarised as
follows:

\begin{proposition}
The metric $\bmq$ of an asymptote $\mathcal{H}$ satisfying Definition
\ref{Definition:AMSI} is conformally related to the standard metric of
Friedrich's cylinder $\mathbb{R}\times \mathbb{S}^2$.
\end{proposition}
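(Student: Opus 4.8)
The plan is to obtain the conformal equivalence as the composition of two conformal rescalings, following the template established for the Minkowski spacetime and summarised in Figure~\ref{Fig:ConformalFactors2}: first relate $\bmq$ to the $1+2$-dimensional Minkowski metric $\mathring{\bmq}$, and then relate $\mathring{\bmq}$ to Friedrich's cylinder metric $\bar{\bmq}$.

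For the first rescaling I would argue as follows. By Lemma~\ref{Lemma:TheHyperboloidIsEinstein}, $(\mathcal{H},\bmq)$ is a three-dimensional Einstein space, so its Schouten tensor satisfies $l_{ij}=\digamma q_{ij}$ with $\digamma$ constant on $\mathcal{H}$. In three dimensions the Weyl tensor vanishes identically, and the remaining conformal obstruction --- the Cotton tensor $D_{[i}l_{j]k}$ --- vanishes because $l_{ij}$ is a constant multiple of the parallel tensor $q_{ij}$; hence $\bmq$ is conformally flat. To exhibit the conformal factor I would run the Frobenius argument already set up above: writing $\bmq=\varsigma^{2}\mathring{\bmq}$ with $\mathring{\bmq}$ flat and introducing $\alpha_i\equiv\varsigma^{-1}D_i\varsigma$, the problem reduces to the overdetermined system~\eqref{alpha-PDE2}, whose integrability condition~\eqref{integ-cond} is satisfied identically once the three-dimensional Riemann tensor is expressed through $l_{ij}$ and $l_{ij}=\digamma q_{ij}$ is inserted. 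Equation~\eqref{alpha-PDE1} then shows that $\bmalpha$ is closed, hence locally exact, which yields a local conformal factor $\varsigma$ relating $\bmq$ to a flat metric that we identify (at least locally) with the $1+2$-Minkowski metric $\mathring{\bmq}$.

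For the second rescaling I would simply transcribe the explicit computation above: passing to double null coordinates $u=t-r$, $v=t+r$ and then applying $u=\tan U$, $v=\tan V$, $R=V-U$, $T=V+U$ gives $\mathring{\bmq}=\omega^{-2}\bar{\bmq}$ with $\bar{\bmq}=\mathbf{d}T\otimes\mathbf{d}T-\bmsigma$ and $\omega=\cos T+\cos R$. Composing the two rescalings then produces $\bmq=(\varsigma/\omega)^{2}\,\bar{\bmq}$, which is the assertion.

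The step I expect to be the main obstacle is upgrading the \emph{local} conformal factor $\varsigma$ delivered by the Frobenius argument to a statement valid on all of $\mathcal{H}$. One natural route uses the topology assumed in Definition~\ref{Definition:AMSI}: since $H^{1}_{\mathrm{dR}}(\mathbb{R}\times\mathbb{S}^{2})=0$, the closed covector $\bmalpha$ is globally exact, so $\varsigma$ is globally defined and positive and $\varsigma^{-2}\bmq$ is a globally defined flat Lorentzian metric on the simply connected manifold $\mathcal{H}$. Identifying this flat manifold with $1+2$-Minkowski space --- rather than with a proper open piece of it --- is the delicate point, since a conformal rescaling need not preserve geodesic completeness and so the completeness built into the AMSI hypothesis does not transfer directly to $\varsigma^{-2}\bmq$; I would handle this via the developing map of the flat conformal structure, checking that it is a global diffeomorphism onto $1+2$-Minkowski space. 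Absent such a global argument, the Proposition should be read --- consistently with the local statement in the discussion preceding it --- as a local conformal equivalence, which is all that is needed for the subsequent analysis.
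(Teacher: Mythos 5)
Your proposal follows essentially the same route as the paper: Lemma \ref{Lemma:TheHyperboloidIsEinstein} plus the Frobenius/integrability argument for the system \eqref{alpha-PDE2} gives a (locally exact) closed covector $\bmalpha$ and hence a local conformal factor to the $1+2$-Minkowski metric, which is then composed with the explicit null-coordinate rescaling $\mathring{\bmq}=\omega^{-2}\bar{\bmq}$. Your closing remark about upgrading the local factor to a global one is a fair observation, but the paper itself only claims (and only needs) the local statement, so your argument matches its proof.
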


In the following, we make this relation more precise by recasting the neighbourhood of the asymptote $\mathcal{H}$ in terms of the F-gauge
in which the cylinder at spatial infinity is described.


\section{Conformal Gaussian gauge systems in a neighbourhood of an asymptote}
\label{Section:CGAsymptote}
In this section, we provide the main analysis of the article: the construction of a conformal Gaussian system in a (spacetime) neighbourhood of the asymptote $\mathcal{H}$ satisfying Definition \ref{Definition:AMSI}. More precisely, we show that Definition
\ref{Definition:AMSI} provides enough regularity in the conformal geometric fields to run a stability argument to show that the neighbourhood of the asymptote $\mathcal{H}$ can be covered by a non-intersecting congruence of conformal geodesics extending up to null infinity (and beyond). This congruence is used, in turn, to build on top a representation of spatial infinity \emph{\`{a} la} Friedrich.

\medskip
The construction proceeds in various steps and mimics, to some extent,
the analysis of the conformal extension of static and stationary
spacetimes given in \cite{Fri04,AceVal11}. 

\subsection{A compactified version of the asymptote $\mathcal{H}$}

The construction of a conformal Gaussian system is based on the conformal representation of the asymptotic region of the spacetime $(\tilde{\mathcal{M}},\tilde{\bmg})$ given by the metric \eqref{gcloseH}. Moreover, using arguments similar to those used in Section \ref{Section:HyperboloidToCylinder} it can be assumed, without loss of generality, that the metric $\mathring{\bmq}$ is, in fact, the metric of the unit hyperboloid $\bmell$. Accordingly, in the following, we consider a metric of the form 
\begin{equation}
{\bmg} = -\frac{1}{\zeta^2} \mathbf{d}\zeta\otimes
\mathbf{d}\zeta + \big( \bmell+\breve{\bmq}), \qquad
\breve{\bmq}=o(\zeta). 
\label{UnphysicalAshtekarMetric}
\end{equation}
\medskip
In order to construct a conformal Gaussian system in a neighbourhood of the asymptote $\mathcal{H}$, it is convenient to consider a representation in which the time dimension has a compact extension.  There are several ways of doing this; however, for the
present purposes, probably the most convenient approach is to mimic the discussion of the relation between the F-gauge and hyperboloid representations of spatial infinity for Minkowski spacetime given in Section \ref{Subsection:RelatingSpatialInfinityMinkowski}.

\medskip
In the following introduce new coordinates $(\grave{\tau},\grave{\rho},\grave{\theta}^A)$ in the line element \eqref{UnphysicalAshtekarMetric} via the relations
\[
\grave{\rho} = \zeta \cosh \chi, \qquad \grave{\tau} =\tanh\chi.
\] 
This coordinate transformation is formally identical to the one used in Section \ref{Subsection:RelatingSpatialInfinityMinkowski}, but the geometric interpretation does not follow through as the vector field $\bmpartial_\tau$ is no longer tangent to a congruence of conformal geodesics. Observe that $\grave{\tau}=0$ if and only if $\chi=0$. A computation then shows that 
\[
{\bmg} = \varpi^2\bar\bmeta + \breve{\bmq}, \qquad \varpi \equiv \frac{1}{\sqrt{1-\grave{\tau}^2}}.
\] 
This suggests introducing a new conformal metric $\grave{\bmg}$ via
\[
\grave{\bmg} \equiv \varpi^{-2} \bmg,
\]
so that
\begin{equation}
\grave{\bmg} = \bar\bmeta + \varpi^{-2} \breve{\bmq},
\label{bmggrave}
\end{equation}
where $\bar\bmeta$ is the expression for the Minkowski metric in the F-gauge, equation \eqref{Minkowski:Cylinder}, with the replacements $\tau \mapsto \grave{\tau}$ and $\rho\mapsto \grave{\rho}$. Observe that 
\begin{eqnarray*}
&& \varpi^{-2} \breve{\bmq} = (1-\grave{\tau}^2) \,\breve{\bmq}, \\
&& \phantom{\varpi^{-2} \breve{\bmq}} = \sech^2 \chi \,\breve{\bmq}.
\end{eqnarray*}
In view of the above expression, we make further assumptions independently of Definition \ref{Definition:AMSI}:
\begin{assumption}
\label{Assumption:Extra}
{\em
\begin{itemize}
\item[(i)] The field 
\[
(1-\grave{\tau}^2)\, \breve{\bmq} \qquad (= \sech^2\chi \, \breve{\bmq})
\]
has a suitably regular limit as $\grave{\tau} \rightarrow \pm 1$ (i.e. $\chi \rightarrow
\pm \infty$). 

\item[(ii)] Moreover, it is required that 
\[
(1-\grave{\tau}^2)
\breve{\bmq}(\mathbf{d}\grave{\tau},\mathbf{d}\grave{\tau})
\rightarrow 0, \qquad \mbox{as} \qquad \grave{\tau}\rightarrow \pm 1.
\]
\end{itemize}
 }
\end{assumption}

\begin{remark}
{\em The above assumption is, in fact, a statement about the regularity of the conformal metric in the sets where spatial infinity meets null infinity. A programme to analyse this issue has been started in \cite{Fri98a} ---see also \cite{CFEBook}, Chapter 20 for further discussions on the subject.}
\end{remark}

\begin{remark}
{\em Condition (ii) above ensures that $\grave{\bmg}(\mathbf{d}\grave{\tau},\mathbf{d}\grave{\tau})=0$ for $\tau=\pm 1$, so that the conformal boundary is null as it is to be expected. \emph{Suitably regular in the present context means that the limit is assumed to be sufficiently regular for the subsequent argument to hold.} The analysis of how stringent these conditions are or how this can be encoded goes beyond the scope of this article.} 
\end{remark}

\subsection{Regularisation of the conformal fields}
The metric $\grave{\bmg}$ shares with $\bmg$ the property of being singular at the asymptote $\mathcal{H}$. In order to deal with this difficulty, we follow  the spirit of the analysis of \cite{Fri04}, Section 6, and consider a frame
$\{ \bmc_\bma \}$, $\bma =\bmzero, \, \bmone,\, +,\, -$, with
\[
\bmc_\bmzero \equiv \bmpartial_{\grave{\tau}}, \qquad \bmc_\bmone \equiv 
\grave{\rho} \bmpartial_{\grave{\rho}}, \qquad \bmc_\bmA
\equiv \bmpartial_A = (\bmpartial_+, \bmpartial_-),
\]
where $(\bmpartial_+,\bmpartial_-)$ is a complex null frame on $\mathbb{S}^2$. The associated coframe $\{ \bmalpha^\bma \}$ is given
by
\[
\bmalpha^\bmzero \equiv \mathbf{d}\grave{\tau}, \qquad \bmalpha^\bmone \equiv
\frac{1}{\grave{\rho}}\mathbf{d}\grave{\rho}, \qquad \bmalpha^\bmA =\mathbf{d}\grave{\theta}^A =(\bmomega^+,\bmomega^-).
\]
In terms of this frame, the standard round metric of $\mathbb{S}^2$ is given by
$$
\bmsigma = 2 \left( \bmomega^+ \otimes \bmomega^- + \bmomega^- \otimes \bmomega^+ \right).
$$
Clearly, $\langle \bmalpha^\bma, \bmc_\bmb\rangle = \delta_\bmb{}^\bma$. However, notice that $\{ \bmc_\bma \}$ is not $\grave\bmg$-orthogonal. In the following, we ignore the complications arising from the fact that there is no globally defined basis over
$\mathbb{S}^2$ as there are standard methods to deal with them ---see e.g. \cite{Fri03b,AceVal11,GasVal20}. The components $\grave{g}_{\bma\bmb}\equiv
\grave{\bmg}(\bmc_\bma,\bmc_\bmb)$ of $\grave{\bmg}$ with respect to the frame $\{ \bmc_\bma \}$ are given by 
\[
\grave{g}_{\bma\bmb} = \bar{\eta}_{\bma\bmb} + (1-\grave{\tau}^2) \breve{q}_{\bma\bmb},
\]
where 
\[
( \bar{\eta}_{\bma\bmb} ) =\left(
\begin{array}{cccc}
1 & \grave{\tau} & 0 & 0 \\
\grave{\tau} & -(1-\grave{\tau}{}^2) & 0 & 0 \\
0& 0 & 0 & -2 \\
0 & 0 & -2 & 0
\end{array}
\right), \qquad
\breve{q}_{\bma\bmb} = o(\grave{\rho}).
\]
Similarly, defining $\grave{g}^{\bma\bmb}\equiv
\grave{\bmg}{}^\sharp(\bmalpha^\bma,\bmalpha^\bmb)$ one finds that
\[
\grave{g}^{\bma\bmb} = \bar{\eta}^{\bma\bmb}+Q^{\bma\bmb},
\]
where 
\[
( \bar{\eta}^{\bma\bmb} ) =\left(
\begin{array}{cccc}
(1-\grave{\tau}{}^2) & \grave{\tau} & 0 & 0 \\
\grave{\tau} & -1 & 0 & 0 \\
0& 0 & 0 & -\tfrac{1}{2} \\
0 & 0 &- \tfrac{1}{2} & 0
\end{array}
\right),
\qquad Q^{\bma\bmb}=o(\grave{\rho}^3).
\]

\medskip
In the following, let $\grave{\Gamma}_\bma{}^\bmb{}_\bmc \equiv \langle \bmomega^\bmb ,\grave{\nabla}_\bma \bmc_\bmc \rangle$ denote the connection coefficients of the Levi-Civita connection of the metric $\grave{\bmg}$ with respect to the frame $\{ \bmc_\bma \}$, and $\grave{L}_{\bma\bmb}$ the components of the associated Schouten
tensor. A detailed computation leads to the following:

\begin{lemma}
\label{Lemma:RegularityAsymptoteDetails}
The fields $\grave{g}_{\bma\bmb}$, $\grave{g}{}^{\bma\bmb}$, $\grave{\Gamma}_\bma{}^\bmb{}_\bmc$ and
$\grave{L}_{\bma\bmb}$ extend smoothly to $\mathcal{H}$.  
\end{lemma}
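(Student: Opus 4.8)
\textbf{Proof strategy for Lemma \ref{Lemma:RegularityAsymptoteDetails}.}
The plan is to establish the four smoothness claims in order of increasing difficulty, starting from the metric components and working up through the inverse metric, the connection coefficients, and finally the Schouten tensor. First I would treat $\grave{g}_{\bma\bmb}$: by construction $\grave{g}_{\bma\bmb} = \bar{\eta}_{\bma\bmb} + (1-\grave{\tau}^2)\breve{q}_{\bma\bmb}$, where $\bar{\eta}_{\bma\bmb}$ is an explicit matrix with polynomial entries in $\grave{\tau}$, hence manifestly smooth up to $\mathcal{H}$ (i.e.\ up to $\grave{\rho}=0$). The term $(1-\grave{\tau}^2)\breve{q}_{\bma\bmb}$ is controlled by Assumption \ref{Assumption:Extra}(i), which guarantees that $\sech^2\chi\,\breve{\bmq}$ has a suitably regular limit; combined with the frame rescalings built into $\bmc_\bmone = \grave{\rho}\bmpartial_{\grave{\rho}}$ — which convert the $o(\grave{\rho})$ decay of $\breve{\bmq}$ into $o(\grave{\rho})$ components $\breve{q}_{\bma\bmb}$ with well-behaved derivatives — this yields smooth extension of $\grave{g}_{\bma\bmb}$ to $\grave{\rho}=0$. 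The key structural point, which must be verified from the asymptotic form of $\breve{\bmq}$ implied by Definition \ref{Definition:AMSI} together with Assumption \ref{Assumption:Extra}, is that differentiating $\breve{q}_{\bma\bmb}$ with respect to $\grave{\rho}$ (or $\bmpartial_{\grave{\rho}}$ acting through $\bmc_\bmone$) does not destroy regularity; this is where the choice of the $\grave{\rho}\bmpartial_{\grave{\rho}}$ frame vector is essential, as it is the natural ``conformally weighted'' radial derivative.

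Next I would address $\grave{g}^{\bma\bmb}$. Since $\det(\bar{\eta}_{\bma\bmb})$ is a nonvanishing smooth function on a neighbourhood of $\mathcal{H}$ (one checks directly that its determinant is bounded away from zero there), and $\grave{g}_{\bma\bmb}$ is a smooth perturbation of $\bar{\eta}_{\bma\bmb}$ of order $o(\grave{\rho})$, the full matrix $\grave{g}_{\bma\bmb}$ remains invertible near $\mathcal{H}$ by continuity of the determinant. The inverse is then given by Cramer's rule as a rational function of smooth quantities with nonvanishing denominator, hence smooth; moreover, expanding the Neumann series $\grave{g}^{\bma\bmb} = \bar{\eta}^{\bma\bmb} - \bar{\eta}^{\bma\bmc}\breve{q}_{\bmc\bmd}(1-\grave{\tau}^2)\bar{\eta}^{\bmd\bmb} + \cdots$ and tracking the $o(\grave{\rho})$ factors shows $Q^{\bma\bmb}=o(\grave{\rho}^3)$, as claimed (the extra powers of $\grave{\rho}$ arise because raising indices with $\bar{\eta}^{\bma\bmb}$, which has entries $O(\grave{\rho}^0)$ in this frame, combined with the specific $o(\grave{\rho})$ structure of $\breve{q}_{\bma\bmb}$, compounds the decay). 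For the connection coefficients $\grave{\Gamma}_\bma{}^\bmb{}_\bmc$, I would use the standard frame formula expressing $\grave{\Gamma}_\bma{}^\bmb{}_\bmc$ in terms of $\grave{g}_{\bma\bmb}$, $\grave{g}^{\bma\bmb}$, the frame derivatives $\bmc_\bma(\grave{g}_{\bmb\bmc})$, and the commutators (structure coefficients) $[\bmc_\bma,\bmc_\bmb]$. The commutators are explicitly computable: $[\bmc_\bmzero,\bmc_\bmone]=0$, $[\bmc_\bmone,\bmc_\bmA]=0$, and the $\mathbb{S}^2$ commutators are those of the round sphere, all smooth. Since each ingredient has already been shown to extend smoothly to $\mathcal{H}$, so does $\grave{\Gamma}_\bma{}^\bmb{}_\bmc$.

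Finally, for $\grave{L}_{\bma\bmb}$ I would compute the Ricci tensor of $\grave{\bmg}$ from the connection coefficients via the usual frame expression $\grave{R}_{\bma\bmb} = \bmc_\bmc(\grave{\Gamma}_\bma{}^\bmc{}_\bmb) - \bmc_\bma(\grave{\Gamma}_\bmc{}^\bmc{}_\bmb) + \grave{\Gamma}$-quadratic terms $+$ structure-coefficient terms, then form the Schouten tensor $\grave{L}_{\bma\bmb} = \tfrac{1}{2}\grave{R}_{\bma\bmb} - \tfrac{1}{12}\grave{R}\,\grave{g}_{\bma\bmb}$ in four dimensions. Since this is an algebraic-differential expression in quantities already known to be smooth up to $\mathcal{H}$, smoothness of $\grave{L}_{\bma\bmb}$ follows. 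I expect the main obstacle to be the first step: showing that the $o(\grave{\rho})$ (equivalently $o(\zeta)$) control on $\breve{\bmq}$ supplied by Definition \ref{Definition:AMSI} — which a priori is only a statement about $\breve{\bmq}$ itself, not its derivatives — is enough to guarantee that $\bmc_\bma(\breve{q}_{\bmb\bmc})$ and higher frame derivatives also extend continuously (indeed smoothly) to $\grave{\rho}=0$. This requires either strengthening the interpretation of the $o(\zeta)$ condition to an asymptotic-expansion statement (so that term-by-term differentiation is legitimate), or invoking Assumption \ref{Assumption:Extra} together with elliptic/transport regularity inherited from the conformal constraint equations derived earlier; the detailed bookkeeping of which frame derivatives preserve which powers of $\grave{\rho}$ is precisely what is relegated to Appendix \ref{Appendix:RegularityAsymptoteDetails}.
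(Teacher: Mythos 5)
There is a genuine gap: you correctly identify the crux of the lemma --- that the $o(\zeta)$ (equivalently $o(\grave{\rho})$) control on $\breve{\bmq}$ coming from Definition \ref{Definition:AMSI} does not by itself control frame derivatives of $\breve{q}_{\bma\bmb}$ --- but you then defer exactly this bookkeeping instead of carrying it out, so the proposal remains a strategy rather than a proof. The paper's appendix does the work you postpone: it fixes the residual coordinate freedom by introducing Gaussian coordinates based on the asymptote, propagating $(\grave{\tau},\theta^A)$ off $\mathcal{H}$ via $\grave{\nabla}_{\bmpartial_{\grave{\rho}}}\bmpartial_\bmi=0$, which forces $\grave{\Gamma}_\bmone{}^\bma{}_\bmi=0$, and then computes the remaining $\grave{\Gamma}_\bma{}^\bmb{}_\bmc$ from the Koszul-type formula term by term, exhibiting explicit orders in $\grave{\rho}$ for each component. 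The paper states that this further gauge specification is required for the analysis of the connection coefficients; your argument, which invokes only the general frame formula with smooth ingredients, skips this step and therefore does not reproduce (or replace) the explicit verification that the individual components stay bounded as $\grave{\rho}\rightarrow 0$.

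Your route to $\grave{L}_{\bma\bmb}$ also differs from the paper's in a way that makes the unresolved regularity issue worse rather than better. You propose computing the Ricci tensor of $\grave{\bmg}$ from the connection, which requires \emph{second} frame derivatives of the metric components, i.e. derivative control on $\breve{\bmq}$ one order beyond what is needed for $\grave{\Gamma}_\bma{}^\bmb{}_\bmc$. The paper instead exploits the vacuum equations: since $\tilde{L}_{ab}=0$, the conformal transformation law gives
\[
\grave{L}_{ab} = -\frac{1}{\Theta}\grave{\nabla}_a\grave{\nabla}_b\Theta
+ \frac{1}{2\Theta^2}\grave{\nabla}_c\Theta\,\grave{\nabla}^c\Theta\,\grave{g}_{ab},
\]
with the \emph{explicit} conformal factor $\Theta=\grave{\rho}(1-\grave{\tau}^2)$ relating $\tilde{\bmg}$ and $\grave{\bmg}$. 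Regularity then follows because the frame derivatives $\bmc_\bma(\Theta)$ and $\bmc_\bma(\bmc_\bmb(\Theta))$ are all $O(\grave{\rho})$ and $\grave{\nabla}_c\Theta\,\grave{\nabla}^c\Theta=O(\grave{\rho}^2)$, so the apparently singular $1/\Theta$ and $1/\Theta^2$ prefactors cancel; the only geometric input beyond $\Theta$ is the already-established regularity of $\grave{\Gamma}_\bma{}^\bmb{}_\bmc$ (first-order data only). If you wish to keep your curvature-based route, you must first strengthen the first step of your argument to genuine smooth extension of $\breve{q}_{\bma\bmb}$ together with all relevant frame derivatives, which is precisely the point you leave open; otherwise, the paper's field-equation shortcut is the step your proposal is missing.
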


The details of the proof of the above result can be found in Appendix
\ref{Appendix:RegularityAsymptoteDetails}. 

\subsection{Analysis of the $\grave{\bmg}$-conformal geodesic equations}
In the following, we consider the conformal geodesic equations for
the metric $\grave{\bmg}$ in terms of the basis $\{ \bmc_\bma\}$ and
study their solution in a neighbourhood of the asymptote $\mathcal{H}$
with the aim of establishing the existence of a conformal Gaussian
system.  

\subsubsection{The equations}
The conformal geodesic for the metric $\grave{\bmg}$ is given by a spacetime curve
$\big(x^\mu(\tau)\big) =\big( \grave{\tau}(\tau), \grave{\rho}(\tau), \grave{\theta}^A(\tau)
\big)$ with tangent vector $\bmv(\tau)$ and a 1-form
$\grave{\bmbeta}(\tau)$ along the curve such that:
\begin{subequations}
\begin{eqnarray}
&& \dot{\bmx} = \bmv, \label{gbarCGEqn1} \\
&& \grave{\nabla}_{\bmv} \bmv = -2 \langle \grave\bmbeta, \bmv \rangle
   \bmv +
   \grave\bmg(\bmv,\bmv)\grave{\bmbeta}^\sharp, \label{gbarCGEqn2}\\
&& \grave{\nabla}_\bmv \grave{\bmbeta} = \langle \grave\bmbeta, \bmv \rangle \grave\bmbeta
   -\frac{1}{2}\grave\bmg (\grave\bmbeta,\grave\bmbeta) \bmv^\flat +
   \grave{L}(\bmv, \cdot), \label{gbarCGEqn3}
\end{eqnarray}
\end{subequations}
where equation \eqref{gbarCGEqn1} gives the definition of the tangent vector. In order to analyse these equations, consider the expansions
\begin{eqnarray*}
&& \bmv = v^\bma \bmc_\bma, \\
&& \grave\bmbeta = \grave{\beta}_\bma \bmalpha^\bma, \\
&& \grave\bmg = \grave{g}_{\bma\bmb} \bmalpha^\bma \otimes \bmalpha^\bmb,
   \qquad \grave{\bmg}^\sharp = \grave{g}{}^{\bma\bmb} \bmc_\bma \otimes
   \bmc_\bmb, \\
&& \grave{\bmL} = \grave{L}_{\bma\bmb} \bmalpha^\bma \otimes \bmalpha^\bmb.
\end{eqnarray*}
Using the above expansions, equation \eqref{gbarCGEqn1} gives the components
\begin{equation}
\frac{\mathrm{d}\grave{\tau}}{\mathrm{d}\tau} =v^\bmzero, \qquad
\frac{\mathrm{d}\grave{\rho}}{\mathrm{d}\tau} =\grave{\rho} v^\bmone, \qquad
\frac{\mathrm{d}\grave{\theta}^A}{\mathrm{d}\tau} = v^A.
\label{gbarCGEqn1Expanded}
\end{equation}
From equations \eqref{gbarCGEqn2} and \eqref{gbarCGEqn3} one gets
\begin{subequations}
\begin{eqnarray}
&& \frac{\mathrm{d}v^\bma}{\mathrm{d}\tau} +
   \grave\Gamma{}_\bmb{}^\bma{}_\bmc v^\bmb v^\bmc = - 2\grave\beta_\bmd
   v^\bmd v^\bma + \grave{g}{}_{\bmb\bmc} v^\bmb v^\bmc
   \grave{g}{}^{\bma\bmd}\grave{\beta}{}_\bmd, \label{gbarCGEqn2Expanded}
  \\
&& \frac{\mathrm{d}\grave\beta{}_\bma}{\mathrm{d}\tau} -
   \grave\Gamma{}_\bmb{}^\bmc{}_\bma v^\bmb \grave\beta{}_\bmc =
   \grave\beta{}_\bmc v^\bmc \grave\beta{}_\bma +
   \frac{1}{2}\grave{g}{}^{\bmc\bmd} \grave\beta{}_\bmc \grave\beta{}_\bmd
   \grave{g}_{\bma\bmb} v^\bmb + \grave{L}{}_{\bmb\bma} v^\bmb. \label{gbarCGEqn3Expanded}
\end{eqnarray}
\end{subequations}

\subsubsection{The initial data}
\label{Subsubsection:CGData}
Following the discussion of the conformal geodesics on Minkowski spacetime, in the sequel, the initial data of the curve $\big(x^\mu(\tau)\big)$ is chosen so that $\dot{\bmx}$ is $\grave{\bmg}$-normalised and orthogonal to the hypersurface
\[
\mathcal{S}_\star \equiv \{  \grave{\tau} =0 \}.
\]
The normal covector to this hypersurface is given by $\mathbf{d}\grave{\tau}$. Then, it follows that the unit normal vector is given by
\begin{eqnarray*}
    &&\grave{\bmn} \equiv \frac{1}{\sqrt{\grave{\bmg}^{\sharp}(\bmd{\grave{\tau}},\bmd{\grave{\tau}})}} \grave{\bmg}^{\sharp}(\bmd{\grave{\tau}},.), \\
   &&  \phantom{\grave{\bmn}} = \frac{1}{\sqrt{(1-\grave{\tau}^2)(1+\bmQ(\bmd{\grave{\tau}},\bmd{\grave{\tau}}))}} \left( (1-\grave{\tau}^2) \bmpartial_{\grave{\tau}} + \grave{\rho} \grave{\tau} \bmpartial_{\grave{\rho}} + \bmQ(\bmd{\grave{\tau}},.) \right),
\end{eqnarray*}
Thus, on $\mathcal{S}_\star$, one has that 
\begin{equation*}
    \grave{\bmn}_{\star} =
    \frac{1}{\sqrt{1+\bmQ_{\star}(\bmd{\grave{\tau}},\bmd{\grave{\tau}})}}\left(
    \bmpartial_{\grave{\tau}} + \bmQ_{\star}(\bmd{\grave{\tau}},.) \right).
\end{equation*}
From the above, one can readily compute
\[
v^\bmzero_\star \equiv \langle \bmalpha^\bmzero_{\star}, \grave{\bmn}_\star
\rangle, \qquad v^\bmone_\star \equiv \langle \bmalpha^\bmone_{\star},
\grave{\bmn}_\star\rangle, \qquad v^A_\star \equiv \langle \bmalpha^A_{\star},
\grave{\bmn}_\star \rangle.
\]
From the above discussion, it follows then that the initial data for $(x^{\mu}(\tau))$ and $\bmv$ takes the form
\begin{subequations}
\begin{eqnarray}
&& ( x^\mu_\star) = \big(0, \grave{\rho}_\star, \grave{\theta}_\star^A \big), \label{gbarCGEqn1Data} \\
&& v^\bmzero_\star = \frac{1}{\sqrt{1 + \bmQ_{\star}(\bmd{\grave{\tau}},\bmd{\grave{\tau}})}}, \label{gbarCGEqn2aData} \\
&&  v^\bmone_\star =\frac{\bmQ_{\star}(\bmd{\grave{\tau}},\bmd{\grave{\rho}})}{\grave{\rho}_{\star} \sqrt{1+\bmQ_{\star}(\bmd{\grave{\tau}},\bmd{\grave{\tau}})}}, \label{gbarCGEqn2bData} \\
&&  v^A_\star =\frac{\bmQ_{\star}(\bmd{\grave{\tau}},\bmd{\grave{\theta}^{\mathcal{A}}})}{\sqrt{1+\bmQ_{\star}(\bmd{\grave{\tau}},\bmd{\grave{\tau}})}}. \label{gbarCGEqn2cData}
\end{eqnarray}
\label{gbarCGEqnData}
\end{subequations}
As $\bmQ_{\star} = o(\grave{\rho}^3)$, it follows then that
\[
v^\bmzero_\star \rightarrow 1, \qquad  v^\bmone_\star,\;  v^A_\star
\rightarrow 0, \qquad \mbox{as} \quad \grave{\rho}_{\star} \rightarrow 0.
\]
To prescribe initial data for the 1-from $\grave{\bmbeta}$, note that the metric $\grave{\bmg}$ is related to $\tilde{\bmg}$ by the conformal factor $\Theta \equiv \grave{\rho} (1-\grave{\tau}^2)$. Then, $\grave{\bmbeta}_{\star}$ can be written in terms of $\Theta_{\star} \equiv \grave{\rho}_{\star}$ as
\begin{eqnarray*}
    && \grave{\bmbeta}_{\star} = \Theta_{\star}^{-1} \bmd{\Theta_{\star}}, \\
    && \phantom{\grave{\bmbeta}_{\star}} = \frac{1}{\grave{\rho}_{\star}} \bmd{\grave{\rho}}.
\end{eqnarray*}
Thus, $\grave{\bmbeta}_{\star}$ is singular at $\grave{\rho}_{\star}=0$. However, the components $\grave{\beta}_{\bma}$ with respect to $\{ \bmalpha^{\bma} \}$ are regular. In particular, one can show that $\grave{\bmbeta}_{\star} = \bmalpha_{\star}^{\bmone}$, i.e.
\begin{equation}
    \grave{\beta}_{\bmzero \star} =0, \qquad \grave{\beta}_{\bmone \star} =1, \qquad \grave{\beta}_{\bmA \star} =0.
    \label{gbarCGEqn3Data}
\end{equation}
Accordingly, the data described by conditions \eqref{gbarCGEqnData} and \eqref{gbarCGEqn3Data} are regular at $\mathcal{H}$. 

\subsubsection{The solution to the conformal geodesic equations in a neighbourhood of $\mathcal{H}$}
In this section, we make use of the information gathered in the previous sections to show the existence in a neighbourhood of the asymptote $\mathcal{H}$ of congruence of non-intersecting conformal geodesics extending smoothly beyond null infinity. This congruence gives rise, in a natural way, to a conformal Gaussian gauge system. The argument in this section makes use of the theory of perturbations of ordinary differential equations.

\medskip
The first step in the analysis is the observation that the system \eqref{gbarCGEqn1Expanded} and
\eqref{gbarCGEqn2Expanded}-\eqref{gbarCGEqn3Expanded} with initial data given by \eqref{gbarCGEqnData} and \eqref{gbarCGEqn3Data} can be solved exactly on $\mathcal{H}$. 

\begin{lemma}
\label{Lemma:CGAsymptote}
The unique solution to the conformal geodesic equations \eqref{gbarCGEqn1Expanded} and
\eqref{gbarCGEqn2Expanded}-\eqref{gbarCGEqn3Expanded} with initial data given by \eqref{gbarCGEqnData} and \eqref{gbarCGEqn3Data} for $\grave\rho_{\star}=0$ is given by 
\begin{eqnarray*}
&& \big(x^\mu(\tau)\big) = (\tau, 0, \grave{\theta}^A_{\star}), \qquad \grave{\tau}(\tau) = \tau, \quad \grave{\rho}(\tau) = \grave{\rho}_{\star}=0, \quad \grave{\theta}^{\bmA}(\tau) = \grave{\theta}^{\bmA}_{\star}, \\
&& v^\bmzero(\tau) =1, \qquad v^\bmone(\tau)=0, \qquad v^A(\tau) =0,
  \\
&& \grave{\beta}_\bmzero(\tau) =0, \qquad \grave{\beta}_\bmone(\tau)
   =1, \qquad \grave{\beta}_A(\tau) = 0. 
\end{eqnarray*}
This solution is smooth for all $\tau \in \mathbb{R}$. In particular,
it extends smoothly beyond the interval $[-1,1]$. 
\end{lemma}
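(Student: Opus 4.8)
The plan is to verify directly that the proposed ansatz solves the system \eqref{gbarCGEqn1Expanded} and \eqref{gbarCGEqn2Expanded}-\eqref{gbarCGEqn3Expanded} when restricted to $\grave\rho=0$, and then to appeal to the standard uniqueness theorem for ordinary differential equations to conclude it is the \emph{only} solution with the given initial data. The key observation is that, by Lemma \ref{Lemma:RegularityAsymptoteDetails}, all the structure functions $\grave{g}_{\bma\bmb}$, $\grave{g}^{\bma\bmb}$, $\grave{\Gamma}_\bma{}^\bmb{}_\bmc$ and $\grave{L}_{\bma\bmb}$ extend smoothly to $\mathcal{H}$, so that at $\grave\rho=0$ they coincide with the corresponding quantities of the background Minkowski metric $\bar\bmeta$ in the F-gauge (the $\breve{\bmq}$-corrections, being $o(\grave\rho)$, drop out, together with their first derivatives along the $\bmc_\bmone=\grave\rho\,\bmpartial_{\grave\rho}$ direction). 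In particular, at $\grave\rho=0$ one has $\grave{g}_{\bma\bmb}=\bar\eta_{\bma\bmb}$, $\grave{g}^{\bma\bmb}=\bar\eta^{\bma\bmb}$, and the only potentially troublesome terms are those involving $\grave\Gamma$ and $\grave L$, which I would read off from the explicit Minkowski computation already carried out in Appendix \ref{Appendix:CGMinkowski} (this is exactly the content of Lemma \ref{Lemma:CGMinkowski}: the pair $x(\tau)=(\tau,\rho_\star,\theta^A_\star)$, $\bar\bmbeta=\rho_\star^{-1}\mathbf{d}\rho$ solves the $\bar\bmeta$-conformal geodesic equations).

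First I would substitute $v^\bmzero=1$, $v^\bmone=v^A=0$, $\grave\beta_\bmzero=\grave\beta_A=0$, $\grave\beta_\bmone=1$ into \eqref{gbarCGEqn1Expanded}: the first equation gives $\mathrm{d}\grave\tau/\mathrm{d}\tau=1$, hence $\grave\tau=\tau$; the second gives $\mathrm{d}\grave\rho/\mathrm{d}\tau=\grave\rho\cdot 0=0$, which is consistent with $\grave\rho\equiv 0$; the third gives $\mathrm{d}\theta^A/\mathrm{d}\tau=0$, hence $\theta^A$ constant. So the curve component is immediate. Then I would check \eqref{gbarCGEqn2Expanded} for each value of $\bma$: the left-hand side has $\mathrm{d}v^\bma/\mathrm{d}\tau=0$ and $\grave\Gamma_\bmb{}^\bma{}_\bmc v^\bmb v^\bmc=\grave\Gamma_\bmzero{}^\bma{}_\bmzero$; the right-hand side is $-2\grave\beta_\bmzero v^\bma+\grave g_{\bmzero\bmzero}\grave g^{\bma\bmd}\grave\beta_\bmd = 0 + 1\cdot\grave g^{\bma\bmone}$ (using $\grave g_{\bmzero\bmzero}=1$, $\grave\beta_\bmzero=0$, $\grave\beta_\bmone=1$). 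At $\grave\rho=0$, $\grave g^{\bma\bmone}=\bar\eta^{\bma\bmone}$, which equals $\grave\tau=\tau$ for $\bma=\bmzero$, $-1$ for $\bma=\bmone$, and $0$ for $\bma=A$. So the required identity is $\grave\Gamma_\bmzero{}^\bma{}_\bmzero = -\bar\eta^{\bma\bmone}$ at $\grave\rho=0$; this is precisely a contraction of the Minkowski connection coefficients in the $\{\bmc_\bma\}$ frame and follows from Lemma \ref{Lemma:CGMinkowski}, since the curve $x(\tau)=(\tau,\rho_\star,\theta^A_\star)$ with $\bar\bmbeta=\rho_\star^{-1}\mathbf{d}\rho=\bmalpha^\bmone$ is a conformal geodesic of $\bar\bmeta$ for every $\rho_\star>0$ and this persists in the limit $\rho_\star\to 0$ by continuity of the (smoothly extended) structure functions. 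An entirely analogous substitution into \eqref{gbarCGEqn3Expanded} reduces to a linear combination of Minkowski connection and Schouten components at $\grave\rho=0$ which again vanishes by Lemma \ref{Lemma:CGMinkowski}.

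Having checked that the ansatz is a solution, uniqueness follows from the Picard--Lindel\"of theorem: the right-hand sides of \eqref{gbarCGEqn1Expanded} and \eqref{gbarCGEqn2Expanded}-\eqref{gbarCGEqn3Expanded}, restricted to the invariant submanifold $\grave\rho=0$ (which is invariant precisely because $\mathrm{d}\grave\rho/\mathrm{d}\tau=\grave\rho v^\bmone$ vanishes there), are smooth functions of $(\grave\tau,\theta^A,v^\bma,\grave\beta_\bma)$ by Lemma \ref{Lemma:RegularityAsymptoteDetails}, hence locally Lipschitz, so the initial value problem \eqref{gbarCGEqn1Data}-\eqref{gbarCGEqn3Data} has a unique solution. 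Finally, the solution exhibited is manifestly global: $\grave\tau=\tau$ ranges over all of $\mathbb{R}$, and $v^\bma$, $\grave\beta_\bma$ are constant, so there is no blow-up and the solution extends smoothly past $\tau=\pm 1$ (which, recall, corresponds to $\chi\to\pm\infty$, i.e.\ the critical sets $I^\pm$). I expect the main obstacle to be purely bookkeeping: carefully confirming that the contracted Minkowski connection coefficient $\grave\Gamma_\bmzero{}^\bma{}_\bmzero$ and the relevant Schouten components take exactly the values forced by the equations at $\grave\rho=0$, which amounts to organizing the output of Appendix \ref{Appendix:CGMinkowski} in the $\{\bmc_\bma\}$ frame. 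There is no genuine analytic difficulty here—the work is entirely in Lemma \ref{Lemma:RegularityAsymptoteDetails} and Lemma \ref{Lemma:CGMinkowski}, which this lemma simply combines.
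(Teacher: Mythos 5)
Your proposal is correct and follows essentially the same route as the paper: the key observation that on $\mathcal{H}$ the frame components of $\grave{\bmg}$ (and hence of $\grave{\Gamma}$, $\grave{\bm L}$) coincide with those of $\bar{\bmeta}$, so the solution is the one of Lemma \ref{Lemma:CGMinkowski} read off in the frame $\{\bmc_\bma\}$, with uniqueness and global smoothness immediate; your explicit Picard--Lindel\"of/invariant-submanifold remarks merely spell out what the paper leaves implicit. One minor slip: the identity forced by \eqref{gbarCGEqn2Expanded} is $\grave{\Gamma}_\bmzero{}^\bma{}_\bmzero = +\bar{\eta}^{\bma\bmone}$ (i.e.\ $\tau$, $-1$, $0$), not $-\bar{\eta}^{\bma\bmone}$, as your own evaluation of the right-hand side already shows.
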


\begin{proof}
The proof of the lemma follows from the observation that on $\mathcal{H}$ (i.e. $\grave{\rho} =0$), the metric $\grave{\bmg}$ coincides with the metric $\bar{\bmeta}$. As discussed in Section \ref{Section:MinkowskiCylinder}, Lemma \ref{Lemma:CGMinkowski}, the solution to the conformal geodesic equations with the given data is given by
\[
\big( x^\mu(\tau) \big)=(\tau,0,\grave{\theta}^A_\star), \qquad
\dot{\bmx} = \bmpartial_{\grave{\tau}}, \qquad \grave{\bmbeta}= \frac{\mathbf{d}\grave{\rho}}{\grave{\rho}_\star}.
\]
Contracting with the frame $\{ \bmc_\bma \}$ and coframe $\{ \bmalpha^\bma \}$, one obtains the result.
\end{proof}
From the above result, making use of the regularity of the fields appearing in the conformal geodesic equations \eqref{gbarCGEqn1Expanded} and \eqref{gbarCGEqn2Expanded}-\eqref{gbarCGEqn3Expanded} and the initial conditions \eqref{gbarCGEqnData} and \eqref{gbarCGEqn3Data} in a neighbourhood of $\mathcal{H}$, one obtains the following:

\begin{lemma}
\label{Lemma:StabilityCG}
There exists $\grave{\rho}_\bullet>0$ such that if $\grave\rho_\star\in[0,\grave\rho_\bullet)$ then the system \eqref{gbarCGEqn1Expanded} and \eqref{gbarCGEqn2Expanded}-\eqref{gbarCGEqn3Expanded} with initial conditions \eqref{gbarCGEqnData} and \eqref{gbarCGEqn3Data} has a unique smooth solution with existence interval extending beyond $[-1,1]$ ---e.g. $\tau\in [-\tfrac{3}{2},\tfrac{3}{2}]$. 
\end{lemma}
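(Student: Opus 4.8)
The plan is to view the system \eqref{gbarCGEqn1Expanded} and \eqref{gbarCGEqn2Expanded}-\eqref{gbarCGEqn3Expanded} together with the data \eqref{gbarCGEqn1Data}-\eqref{gbarCGEqn3Data} as a $\grave\rho_\star$-dependent family of initial value problems for a first-order ODE system in the unknowns $\big(\grave\tau,\grave\rho,\theta^A, v^\bma, \grave\beta_\bma\big)$, and to apply the standard stability theorem for ODEs stated in Appendix \ref{Appendix:ODETheory}. The point is that Lemma \ref{Lemma:CGAsymptote} already supplies the solution for the parameter value $\grave\rho_\star=0$, and this solution exists and is smooth on the whole real line, in particular on a closed interval strictly containing $[-1,1]$, say $[-\tfrac{3}{2},\tfrac{3}{2}]$. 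The stability theorem then guarantees that solutions for nearby parameter values $\grave\rho_\star$ also exist on $[-\tfrac{3}{2},\tfrac{3}{2}]$, are unique, and depend smoothly on the parameter and on $\tau$.

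The first step is to check that the right-hand sides of the evolution equations are smooth functions of the dependent variables on a neighbourhood of the reference solution. This is where the earlier work pays off: by Lemma \ref{Lemma:RegularityAsymptoteDetails} the frame components $\grave g_{\bma\bmb}$, $\grave g^{\bma\bmb}$, $\grave\Gamma_\bma{}^\bmb{}_\bmc$ and $\grave L_{\bma\bmb}$ extend smoothly to $\mathcal{H}$ (i.e. to $\grave\rho=0$), and Assumption \ref{Assumption:Extra} ensures the requisite regularity at the critical sets $\grave\tau=\pm1$. Hence, in the region $\grave\rho\in[0,\grave\rho_\bullet)$, $\grave\tau\in[-\tfrac{3}{2},\tfrac{3}{2}]$, all the coefficient fields appearing in \eqref{gbarCGEqn2Expanded}-\eqref{gbarCGEqn3Expanded} are smooth, so the vector field defining the ODE system is smooth there; the evolution equation for $\grave\rho$ in \eqref{gbarCGEqn1Expanded} has the form $\mathrm{d}\grave\rho/\mathrm{d}\tau=\grave\rho\, v^\bmone$, which is smooth and in particular keeps $\{\grave\rho=0\}$ invariant, consistently with Lemma \ref{Lemma:CGAsymptote}. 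Second, one records that the initial data \eqref{gbarCGEqn1Data}-\eqref{gbarCGEqn3Data} depend smoothly on $\grave\rho_\star$ (using $\breve{\bmq}_\star=o(\grave\rho)$ and smoothness of $\breve{\bmq}$) and that at $\grave\rho_\star=0$ they reduce to the data of Lemma \ref{Lemma:CGAsymptote}.

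The third step is the actual invocation of the perturbation result. The reference solution from Lemma \ref{Lemma:CGAsymptote} has maximal existence interval equal to $\mathbb{R}$, which strictly contains the compact interval $J\equiv[-\tfrac{3}{2},\tfrac{3}{2}]$. By the continuous (indeed smooth) dependence of maximal existence times on initial data and parameters --- the version stated in Appendix \ref{Appendix:ODETheory} --- there is $\grave\rho_\bullet>0$ such that for every $\grave\rho_\star\in[0,\grave\rho_\bullet)$ the corresponding solution exists on all of $J$, is unique, and is a smooth function of $(\tau,\grave\rho_\star)$. This yields exactly the claimed statement.

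The main obstacle is really the verification underlying the first step, but this has been discharged in advance: the genuinely delicate point is the smooth extendibility of the conformal fields \emph{across} the critical sets $\grave\tau=\pm1$, which is not automatic from Definition \ref{Definition:AMSI} alone and is precisely what Assumption \ref{Assumption:Extra} is introduced to secure; the extendibility up to $\mathcal{H}$ (i.e. $\grave\rho\to0$) is handled by Lemma \ref{Lemma:RegularityAsymptoteDetails}. Granting those two inputs, the proof is a routine application of ODE stability theory, and no further estimates are needed; one only has to be mildly careful about the fact that the frame $\{\bmc_\bma\}$ is not globally defined on $\mathbb{S}^2$, which is dealt with by the standard patching arguments referenced after Lemma \ref{Lemma:RegularityAsymptoteDetails}, or alternatively by noting that the relevant estimates are local in $\theta^A$ and the reference congruence stays at fixed $\theta^A_\star$.
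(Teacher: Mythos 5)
Your proposal is correct and follows essentially the same route as the paper: the paper's own proof simply cites the regularity of $\grave{g}_{\bma\bmb}$, $\grave{g}^{\bma\bmb}$ and $\grave{L}_{\bma\bmb}$ near $\mathcal{H}$ and up to and beyond $\tau=\pm 1$ (resting on Lemma \ref{Lemma:RegularityAsymptoteDetails} and Assumption \ref{Assumption:Extra}), together with the exact solution on $\mathcal{H}$ from Lemma \ref{Lemma:CGAsymptote} as the reference solution, and then invokes Theorem \ref{Theorem:ODEStability}. You merely spell out the parameter-dependent formulation, the smooth dependence of the data on $\grave\rho_\star$, and the invariance of $\{\grave\rho=0\}$, which are details the paper leaves implicit.
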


\begin{proof}
The result follows from the stability theory of ordinary differential
equations. In particular, the regularity of the components
$\grave{g}_{\bma\bmb}$, $\grave{g}^{\bma\bmb}$ and
$\grave{L}_{\bma\bmb}$ in a neighbourhood of the asymptote
$\mathcal{H}$ and up to and beyond the sets given by the conditions
$\tau=\pm 1$ allows making use of Theorem \ref{Theorem:ODEStability} in
Appendix \ref{Appendix:ODETheory}.
\end{proof}

\begin{remark}
{\em By varying the starting point $p\in\mathcal{S}_\star$ subject to
  the condition $\grave\rho_\star(p)\in[0,\grave\rho_\bullet)$ ensures
that a neighbourhood of $\mathcal{H}$ can be covered by conformal
geodesics. In the next subsection, it will be shown that, possibly by
reducing $\grave\rho_\bullet$ this
congruence is non-intersecting.}
\end{remark}

\begin{remark}
{\em The curves of the congruence extend beyond $\mathscr{I}^\pm$ as their existence interval extend beyond $ [-1,1]$. }
\end{remark}

\subsection{The construction of the conformal Gaussian system}
In this section, we discuss how the congruence of conformal geodesics obtained in Lemma~\ref{Lemma:CGAsymptote} implies the existence of a conformal Gaussian system
in a neighbourhood of the asymptote $\mathcal{H}$. 

\subsubsection{General considerations}
Given a point $p\in \mathcal{S}_\star$ with coordinates $\underline{x}(p)=(\grave{\rho}(p),\grave\theta{}^A(p))$ such that $\grave{\rho}(p)<\grave{\rho}_\bullet$, these coordinates are propagated off $\mathcal{S}_\star$ by requiring them to be constant along the \emph{unique conformal geodesic} with data as given in Lemma \ref{Lemma:StabilityCG} passing through $p$. In order to differentiate between the propagated coordinates and those on $\mathcal{S}_\star$, we denote the former by $(\rho,\theta^A)$. Points along the conformal geodesic passing through $p$ are labelled using the parameter $\tau$ of the curve. In this way, by varying the point $p$ and as long as the congruence of conformal geodesics is non-intersecting, one obtains \emph{conformal Gaussian coordinates} $\bar{x}=(\tau,\rho,\theta^A)$. 

\begin{remark}
{\em It should be stressed that given a point $q$ in the neighbourhood of $\mathcal{H}$ covered by the curves of Lemma \ref{Lemma:StabilityCG} and described, respectively, by coordinates $\big(\grave\tau(q),\grave\rho(q),\grave\theta^A(q)\big)$ and $\big( \tau(q),\rho(q),\theta^A(q) \big)$, one has, in general, that $\grave\rho(q)\neq \rho(q)$, $\grave\theta^A(q)\neq \theta^A(q)$. }
\end{remark}

\subsubsection{Analysis of the Jacobian}
In order to ensure that the collection $(\tau,\rho,\theta^A)$ gives rise to a well-defined coordinate system in a neighbourhood of $\mathcal{H}$, we need to consider the Jacobian determinant of the change of coordinates
\[
(\tau,\rho,\theta^A)\mapsto (\grave\tau,\grave\rho,\grave\theta^A).
\]
In order to ease the presentation in the following, we restrict the discussion to the transformation between non-angular coordinates $(\grave\tau,\grave\rho)\mapsto (\tau,\rho)$. The full analysis follows in a similar manner at the expense of lengthier computations. Writing 
\[
\grave\tau =\grave\tau(\tau,\rho), \qquad \grave\rho=\grave\rho(\tau,\rho),
\] 
the associated Jacobian determinant is given by
\[
\frac{\partial (\grave\tau,\grave\rho)}{\partial(\tau,\rho)} =
\left|
\begin{array}{cc}
\displaystyle\frac{\partial\grave\tau}{\partial\tau} &
                                         \displaystyle \frac{\partial\grave\tau}{\partial\rho}\\
\displaystyle\frac{\partial\grave\rho}{\partial\tau} & \displaystyle\frac{\partial\grave\rho}{\partial\rho} 
\end{array}
\right|.
\]
Now, from Lemma \ref{Lemma:CGAsymptote}, it follows that
\[
\frac{\partial\grave\tau}{\partial\tau}\bigg|_{\mathcal{H}} =1, \qquad \frac{\partial\grave\tau}{\partial\rho}\bigg|_{\mathcal{H}}=0,
\]
so that, in fact, one has
\[
\frac{\partial
  (\grave\tau,\grave\rho)}{\partial(\tau,\rho)}\bigg|_{\mathcal{H}} =
\frac{\partial\grave\rho}{\partial\rho}\bigg|_{\mathcal{H}}. 
\]
Differentiating the second equation in \eqref{gbarCGEqn1Expanded} one
obtains 
\[
\frac{\mathrm{d}}{\mathrm{d}\tau}\left(
  \frac{\partial\grave\rho}{\partial\rho}\right) = v^\bmone
\frac{\partial\grave\rho}{\partial\rho} +\grave{\rho} \frac{\partial
  v^\bmone}{\partial \rho},
\]
from where it follows 
\[
\frac{\mathrm{d}}{\mathrm{d}\tau}\left(
  \frac{\partial\grave\rho}{\partial\rho}\right)\bigg|_{\mathcal{H}} =0.
\]
In other words, the partial derivative $\partial\grave\rho/\partial\rho$ is constant  on $\mathcal{H}$. To evaluate the constant, it is observed that by construction $\grave\rho=\rho$ on $\mathcal{S}_\star$. Accordingly, one concludes that
\[
 \frac{\partial\grave\rho}{\partial\rho} \bigg|_{\mathcal{H}}=1.
\]
It then follows from the above argument that 
\[
\frac{\partial (\grave\tau,\grave\rho)}{\partial(\tau,\rho)}\bigg|_{\mathcal{H}} =1.
\]
By continuity, it follows that the Jacobian determinant is non-zero in a neighbourhood of $\mathcal{H}$. In order to ensure that this condition holds, it may be necessary to reduce the constant $\grave\rho_\bullet$ in Lemma~\ref{Lemma:StabilityCG}. 
\begin{remark}
{\em It follows from the fact that the Jacobian of the transformation between the coordinates $(\grave\tau,\grave\rho,\grave\theta^A)$ and $(\tau,\rho,\theta^A)$ is non-zero that the congruence of conformal geodesics in a neighbourhood of $\mathcal{H}$ is non-intersecting as the intersection of curves would indicate a singularity in the coordinate system given by $(\tau,\rho,\theta^A)$. }
\end{remark}

\subsubsection{The conformal factor associated with the congruence}
The existence of a congruence of conformal geodesics in a neighbourhood of the asymptote $\mathcal{H}$ ensures that the family of \emph{conformal factors} given by Proposition \ref{canconical-confromalFactor} is well defined. This, in turn, completes the construction of a conformal Gaussian system in a neighbourhood of $\mathcal{H}$. 
\medskip
Consistent with the discussion of the initial data for the congruence
of conformal geodesics in Subsection \ref{Subsubsection:CGData}, one has that
\[
\Theta_\star =\Omega_\star =\rho.
\]
Moreover, from \eqref{gbarCGEqnData} and \eqref{gbarCGEqn3Data}, it follows that 
\[
\dot{\Theta}_{\star} = \rho \langle \grave{\bmbeta}_{\star}, \bmv_{\star} \rangle = \rho v^{\bmone}_{\star} = o(\rho^3).
\]
For $\ddot{\Theta}_{\star}$, let $\grave{\bmh}^{\sharp}$ denote the intrinsic contravariant metric on $\mathcal{S}_{\star}$, written explicitly as
\begin{equation*}
    \grave{\bmh}^{\sharp} = - \rho^2 \bmpartial_{\rho} \otimes \bmpartial_{\rho} - \bmsigma^{\sharp} + \bmQ_{\star}.
\end{equation*}
Then, one has
\begin{equation*}
    \ddot{\Theta}_{\star} = 2 \rho \grave{\bmh}^{\sharp}(\grave{\bmbeta}_{\star},\grave{\bmbeta}_{\star}).
\end{equation*}
The initial data for $\grave{\bmbeta}$ implies
\begin{equation*}
    \grave{\bmh}^{\sharp}(\grave{\bmbeta}_{\star},\grave{\bmbeta}_{\star}) = - \left( 1- \frac{1}{\rho^2} \bmQ_{\star}(\bmd{\rho},\bmd{\rho}) \right).
\end{equation*}
Hence, 
\begin{equation*}
    \ddot{\Theta}_{\star} = -2 \rho  \left( 1- \frac{1}{\rho^2} \bmQ_{\star}(\bmd{\rho},\bmd{\rho}) \right).
\end{equation*}
Observe that 
\begin{equation*}
    \lim_{\rho \to 0}\frac{\bmQ_{\star}(\bmd{\rho},\bmd{\rho})}{\rho} =0.
\end{equation*}
Then, the canonical conformal factor $\Theta$, defined in the neighbourhood of $\mathcal{H}$ where the congruence is non-intersecting, is given by
\begin{eqnarray*}
    && \Theta = \Theta_{\star} + \dot{\Theta}_{\star} \tau + \frac{1}{2} \ddot{\Theta}_{\star} \tau^2, \\
    && \phantom{\Theta} = \rho \left( 1+ v^{\bmone}_{\star} \tau - \left( 1-\frac{1}{\rho^2} \bmQ_{\star}(\bmd{\rho},\bmd{\rho}) \right) \tau^2 \right), 
\end{eqnarray*}
The conformal boundary at $\Theta=0$ can be identified by the conditions
\begin{equation}
    \rho=0, \quad \text{or } \quad 1+ v^{\bmone}_{\star} \tau - \left( 1-\frac{1}{\rho^2} \bmQ_{\star}(\bmd{\rho},\bmd{\rho}) \right) \tau^2 =0.
\end{equation}
Solving for $\tau$ gives 
\begin{equation*}
    \tau = \frac{v^{\bmone}_{\star} \pm \sqrt{(v^{\bmone }_{\star})^{2} + 4 \left( 1-(1/\rho^2) \bmQ_{\star}(\bmd{\rho},\bmd{\rho}) \right)}}{2 \left( 1- (1/\rho^2) \bmQ_{\star}(\bmd{\rho},\bmd{\rho}) \right)}.
\end{equation*}
It can be shown that 
\begin{equation*}
    \tau = \pm 1 + O(\rho), 
\end{equation*}
so that $\tau \to \pm 1$ as $\rho \to 0$. Now, let
\begin{eqnarray*}
    && \tau_{-} = \frac{v^{\bmone}_{\star} - \sqrt{(v^{\bmone }_{\star})^{2} + 4 \left( 1-(1/\rho^2) \bmQ_{\star}(\bmd{\rho},\bmd{\rho}) \right)}}{2 \left( 1- (1/\rho^2) \bmQ_{\star}(\bmd{\rho},\bmd{\rho}) \right)}, \\
    && \tau_{+} = \frac{v^{\bmone}_{\star} + \sqrt{(v^{\bmone }_{\star})^{2} + 4 \left( 1-(1/\rho^2) \bmQ_{\star}(\bmd{\rho},\bmd{\rho}) \right)}}{2 \left( 1- (1/\rho^2) \bmQ_{\star}(\bmd{\rho},\bmd{\rho}) \right)}.
\end{eqnarray*}
Thus, one defines, in analogy to the case of Minkowski spacetime
in the F-gauge (see Section~\ref{Section:MinkowskiCylinder}) the set
\begin{equation*}
    \mathcal{M}_{\rho_{\bullet}} \equiv \{ p \in \mathbb{R}^{4} | \tau_{-} \leq \tau(p) \leq \tau_{+}, 0 \leq \rho(p) < \rho_{\bullet}  \},
\end{equation*}
and 
\begin{eqnarray*}
    && I \equiv \{ p \in \mathcal{M}_{\rho_{\bullet}} | |\tau(p)| <1, \rho(p) =0 \}, \\
    && I^{\pm} \equiv \{ p \in \mathcal{M}_{\rho_{\bullet}} | \tau(p) = \pm 1, \rho(p) =0 \}.
\end{eqnarray*}
Finally, future and past null infinities $\mathscr{I}^{\pm}$ can be defined as
\begin{equation*}
    \mathscr{I}^{\pm} \equiv \{ p \in \mathcal{M}_{\rho_{\bullet}} | \tau(p) = \tau_{\pm}, | 0 < \rho(p) < \rho_{\bullet} \}.
\end{equation*}
\begin{remark}
{\em The set $I$ (the cylinder at spatial infinity) coincides with the asymptote $\mathcal{H}$.}
\end{remark}
\begin{remark}
{\em Observe that in contrast to the representation of Minkowski in Section \ref{Section:MinkowskiCylinder}, the representation obtained in this section \emph{is not horizontal} in the sense that the location of null infinity is not given by the condition $\tau=\pm 1$. A horizontal representation can be obtained by considering a more general initial conformal factor of the form $\Theta_\star = \varkappa\rho $, $\varkappa|_{\rho=0}=1$,  with $\varkappa$ suitably chosen. }
\end{remark}


\subsubsection{Main statement}
We summarise the discussion of the previous subsections in the following theorem, which is a more detailed version of the statement presented in the introductory section:

\begin{theorem}
\label{Theorem:MainPreciseFormulation}
Let $(\mathcal{\tilde{M}},\tilde{\bmg})$ denote a spacetime satisfying Definition \ref{Definition:AMSI} of an asymptotically Minkowskian spacetime at spatial infinity with asymptote $\mathcal{H}$ which, in addition, satisfies Assumption \ref{Assumption:Extra}. Then there exists a neighbourhood $\mathcal{M}_{\rho_\bullet}$ of
$\mathcal{H}$ which can be covered by a conformal Gaussian coordinate system $(x^\mu)\equiv (\tau,\rho,\theta^A)$. The domain $\mathcal{M}_{\rho_\bullet}$ includes  portions of future and past null infinity which meet with the asymptote $\mathcal{H}$. In this gauge, the asymptote $\mathcal{H}$ coincides with the $1+2$-dimensional Einstein static cylinder (Friedrich's cylinder). 
\end{theorem}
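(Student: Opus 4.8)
The statement is essentially a consolidation of the lemmas proved in this section, so the plan is to chain them together in the right order. First I would recall the sequence of conformal representations leading to $\grave{\bmg}$: starting from the physical metric $\tilde\bmg$ one passes, via $\Omega=\zeta$, to the singular conformal metric \eqref{UnphysicalAshtekarMetric}; after the coordinate change $\grave\rho=\zeta\cosh\chi$, $\grave\tau=\tanh\chi$ and the rescaling by $\varpi^{-2}=(1-\grave\tau^2)$ one arrives at $\grave{\bmg}=\bar\bmeta+\varpi^{-2}\breve{\bmq}$ of \eqref{bmggrave}. I would then invoke Assumption \ref{Assumption:Extra} to guarantee that the correction $\varpi^{-2}\breve{\bmq}$ has a sufficiently regular limit as $\grave\tau\to\pm1$ and that $\grave{\bmg}(\mathbf{d}\grave\tau,\mathbf{d}\grave\tau)$ vanishes there (so the conformal boundary is null), and Lemma \ref{Lemma:RegularityAsymptoteDetails} to guarantee that in the frame $\{\bmc_\bma\}$ the components $\grave{g}_{\bma\bmb}$, $\grave{g}^{\bma\bmb}$, $\grave\Gamma_\bma{}^\bmb{}_\bmc$ and $\grave{L}_{\bma\bmb}$ extend smoothly across $\mathcal{H}$.

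With this regularity in hand, the next step is the ODE analysis. Lemma \ref{Lemma:CGAsymptote} gives the exact solution of the $\grave{\bmg}$-conformal geodesic system \eqref{gbarCGEqn1Expanded}, \eqref{gbarCGEqn2Expanded}--\eqref{gbarCGEqn3Expanded} with data \eqref{gbarCGEqn1Data}--\eqref{gbarCGEqn3Data} on $\mathcal{H}$, namely $x^\mu(\tau)=(\tau,0,\theta^A)$, $v^\bmzero=1$, $v^\bmone=v^A=0$, $\grave\beta_\bmone=1$, $\grave\beta_\bmzero=\grave\beta_A=0$, which is smooth for all $\tau\in\mathbb{R}$ and in particular extends past $[-1,1]$. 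Feeding the smooth extendability of the coefficients into the stability theorem for ODEs (Theorem \ref{Theorem:ODEStability} of Appendix \ref{Appendix:ODETheory}), Lemma \ref{Lemma:StabilityCG} upgrades this to a uniform statement on a slab $\grave{\rho}_\star\in[0,\grave{\rho}_\bullet)$: the solutions exist, stay smooth and retain an existence interval strictly containing $[-1,1]$ (e.g. $[-\tfrac{3}{2},\tfrac{3}{2}]$), so the curves pass beyond the sets $\tau=\pm1$. Varying the starting point $p\in\mathcal{S}_\star$ with $\grave{\rho}_\star(p)<\grave{\rho}_\bullet$ then covers a whole neighbourhood of $\mathcal{H}$ by this congruence.

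Third, I would turn the congruence into a coordinate system. Propagating the coordinates $(\grave\rho,\grave\theta^A)$ of $\mathcal{S}_\star$ as constants along the curves and using the curve parameter $\tau$ gives candidate coordinates $(\tau,\rho,\theta^A)$; to see these are genuine one computes the Jacobian of $(\tau,\rho,\theta^A)\mapsto(\grave\tau,\grave\rho,\grave\theta^A)$. From Lemma \ref{Lemma:CGAsymptote} one reads off $\partial\grave\tau/\partial\tau|_{\mathcal{H}}=1$, $\partial\grave\tau/\partial\rho|_{\mathcal{H}}=0$; differentiating the transport equation for $\grave\rho$ shows $\partial\grave\rho/\partial\rho$ is constant along the curves on $\mathcal{H}$, hence equal to its value $1$ on $\mathcal{S}_\star\cap\mathcal{H}$, so the Jacobian equals $1$ on $\mathcal{H}$ and stays nonzero nearby (shrinking $\grave{\rho}_\bullet$ if needed), which also shows the congruence is non-intersecting. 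Proposition \ref{canconical-confromalFactor} then attaches the canonical quadratic conformal factor; with $\Theta_\star=\rho$ and $\tilde\bmbeta_\star=\tfrac{2}{\zeta}\mathbf{d}\zeta$ one gets $\dot\Theta_\star=0$ and, using the pull-back $\bar\bmh$ of $\grave{\bmg}$ to $\mathcal{S}_\star$, $\ddot\Theta_\star=-2\rho+\tfrac{2}{\rho}\breve{\bmk}^\sharp(\mathbf{d}\rho,\mathbf{d}\rho)$, whence
\[
\Theta = \rho\left(1 - \Big(1 - \frac{\breve{\bmk}^\sharp(\mathbf{d}\rho,\mathbf{d}\rho)}{\rho^2}\Big)\tau^2\right),
\]
whose zero set is $\mathscr{I}^\pm$ and lies inside $\bar{\mathcal{M}}_{\rho_\bullet}$. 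Finally, restricting to $\rho=0$ reduces $\grave{\bmg}$ to $\bar\bmeta$, and the argument of Remark \ref{Remark:EinsteinUniverse} identifies its induced $3$-metric with $\bar\bmq=\mathbf{d}\tau\otimes\mathbf{d}\tau-\bmsigma$; hence $I=\{\rho=0\}$ coincides with $\mathcal{H}$ and carries the $1+2$-dimensional Einstein static cylinder metric, which is the last claim.

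The genuinely hard inputs sit in the two lemmas already established: the smooth extendability of the conformal fields across $\mathcal{H}$ (Lemma \ref{Lemma:RegularityAsymptoteDetails}), which rests on the $o(\grave\rho)$ decay in Definition \ref{Definition:AMSI} together with Assumption \ref{Assumption:Extra}, and the uniform-in-$\grave{\rho}_\star$ control of the existence interval (Lemma \ref{Lemma:StabilityCG}), which is a perturbative ODE argument around the explicit solution on $\mathcal{H}$. Granting those, the only delicacy left in the proof of the theorem itself is choosing $\rho_\bullet$ small enough that simultaneously the curves reach past $\tau=\pm1$, the Jacobian of the coordinate change stays nonzero, and $\mathscr{I}^\pm$ (the zero set of $\Theta$) lies inside $\mathcal{M}_{\rho_\bullet}$; this bookkeeping is what I expect to be the main, if modest, obstacle.
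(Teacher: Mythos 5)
Your proposal is correct and follows essentially the same route as the paper: the theorem is indeed proved there by chaining the compactified representation \eqref{bmggrave} with Assumption \ref{Assumption:Extra} and Lemma \ref{Lemma:RegularityAsymptoteDetails}, the exact solution of Lemma \ref{Lemma:CGAsymptote}, the ODE stability argument of Lemma \ref{Lemma:StabilityCG}, the Jacobian computation establishing the non-intersecting conformal Gaussian coordinates, and the canonical conformal factor $\Theta$ from Proposition \ref{canconical-confromalFactor} locating $\mathscr{I}^\pm$ and identifying $I=\{\rho=0\}$ with $\mathcal{H}$ carrying the Einstein-cylinder metric. Your reconstruction, including the formulas for $\ddot\Theta_\star$ and $\Theta$ and the observation that $\rho_\bullet$ may need to be shrunk to keep the Jacobian nonzero, matches the paper's argument.
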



\appendix

\section{The basic stability theorem for ordinary differential
  equations depending on a parameter}
\label{Appendix:ODETheory}

In the following let 
\begin{equation}
\mathbf{X}' = \mathbf{F}(t, \mathbf{X}, \lambda), \qquad
\mathbf{X}(0)=\mathbf{X}_\star,
\label{ODESystem}
\end{equation}
denote an initial value problem for a $N$-dimensional vector-value ordinary
differential equation depending on a parameter $\lambda$. One has the
following result ---see \cite{Har87},  Theorem 2.1 in page 94 and
Corollary 4.1 in page 101. 

\begin{theorem}
\label{Theorem:ODEStability}
Let $\mathbf{F}(t, \mathbf{X}, \lambda)$ be continuous on an open set
$\mathcal{E}\subset
\mathbb{R}\times\mathbb{R}^N\times\mathbb{R}$ consisting of points $(t,
\mathbf{X}, \lambda)$ with the property that
for every $(t_\star, \mathbf{X}_\star, \lambda)\in \mathcal{E}$, the
initial value problem \eqref{ODESystem} with $\lambda$ fixed has a
unique solution $\mathbf{X}(t)=
\mathbf{X}(t,t_\star,\mathbf{X}_\star,\lambda)$. Let
$\omega_-<t<\omega_+$ be the maximal interval of existence of
$\mathbf{X}(t,t_\star,\mathbf{X}_\star,\lambda)$. Then
$\omega_+=\omega_+(t_\star,\mathbf{X}_\star,\lambda)$ (respectively
$\omega_-=\omega_-(t_\star,\mathbf{X}_\star,\lambda)$) is a lower
(respectively upper) semicontinuous function\footnote{The lower
  semicontinuity of $\omega_+$ at
  $(t_\bullet,\mathbf{X}_\bullet,\lambda_\bullet)$ means that if
  $t_\sharp < \omega_+(t_\bullet,\mathbf{X}_\bullet,\lambda_\bullet)$
  then $\omega_+(t,\mathbf{X},\lambda)\geq t_\sharp$ for
  all $(t,\mathbf{X},\lambda)$ near
  $(t_\bullet,\mathbf{X}_\bullet,\lambda_\bullet)$. Upper
semicontinuity is similarly defined.} of
$(t_\star,\mathbf{X}_\star,\lambda)\in \mathcal{E}$ and
$\mathbf{X}(t,t_\star,\mathbf{X}_\star,\lambda)$ is continuous on the
set 
\[
\{  \omega_-<t<\omega_+, \; (t_\star, \mathbf{X}_\star,\lambda)\in \mathcal{E} \}.
\]
Moreover, if $\mathbf{F}(t, \mathbf{X}, \lambda)$ is of class $C^m$,
$m\geq 1$ on $\mathcal{E}$, then
$\mathbf{X}(t,t_\star,\mathbf{X}_\star,\lambda)$ is of class $C^m$ on
its domain of existence.
\end{theorem}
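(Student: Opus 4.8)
The plan is to organise the proof into three parts mirroring the three assertions---continuous dependence, semicontinuity of the maximal existence interval, and $C^m$-regularity---after a preliminary reduction that removes the parameter $\lambda$ and the initial time $t_\star$ from special consideration. Concretely, I would adjoin to the system \eqref{ODESystem} the trivial equations $\lambda' = 0$ and $s' = 1$ with $s(0) = t_\star$, replacing the explicit $t$ in $\mathbf{F}$ by the new dependent variable $s$. This yields an autonomous-in-form enlarged system on $\mathbb{R}\times\mathbb{R}^{N}\times\mathbb{R}$ whose state is $(s,\mathbf{X},\lambda)$, in which $\lambda$ becomes a dependent variable with prescribed initial value and the dependence on $t_\star$ is absorbed into the dependence on initial data. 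After this reduction it suffices to prove continuity and smoothness of the flow in the initial point alone.

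For continuous dependence I would avoid Gr\"onwall's inequality, which is unavailable since $\mathbf{F}$ is assumed only continuous (not Lipschitz), and instead run a compactness argument. Fix $(t_\star,\mathbf{X}_\star,\lambda)\in\mathcal{E}$ and a compact subinterval $[a,b]\subset(\omega_-,\omega_+)$ of the maximal interval of the corresponding solution. Given any sequence of data converging to this point, Peano's theorem provides solutions on a common subinterval; the defining integral equation shows these solutions are uniformly bounded and equicontinuous on $[a,b]$, so by Arzel\`a--Ascoli every subsequence has a uniformly convergent sub-subsequence whose limit solves the initial value problem with the limiting data. The assumed uniqueness forces this limit to be the original solution, and since every subsequence shares the same limit, the full sequence converges uniformly. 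This yields joint continuity of $\mathbf{X}(t,t_\star,\mathbf{X}_\star,\lambda)$ on the stated set.

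The semicontinuity of $\omega_\pm$ rests on the escape lemma: a maximal solution eventually leaves every compact subset $K\subset\mathcal{E}$ as $t\to\omega_\pm$. Suppose $t_\sharp<\omega_+(t_\bullet,\mathbf{X}_\bullet,\lambda_\bullet)$. The graph of the base solution over $[t_\bullet,t_\sharp]$ is a compact subset of $\mathcal{E}$; thickening it slightly produces a compact neighbourhood $K$ still contained in $\mathcal{E}$. By the continuous dependence just established, solutions with nearby data remain inside $K$ over $[t_\bullet,t_\sharp]$ and hence cannot have escaped, so their existence interval satisfies $\omega_+\geq t_\sharp$. This is precisely lower semicontinuity of $\omega_+$, and the argument for $\omega_-$ is symmetric.

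Finally, for the $C^m$ statement I would argue by induction using the variational equations. When $\mathbf{F}\in C^1$ it is locally Lipschitz, so uniqueness is automatic, and the difference quotients of the flow in the initial data satisfy, in the limit, the linear variational equation $\mathbf{Y}'=D_{\mathbf{X}}\mathbf{F}(t,\mathbf{X}(t),\lambda)\,\mathbf{Y}$; its unique solution furnishes the first derivatives of the flow and exhibits them as continuous, giving $C^1$. Differentiating the variational equations and invoking the inductive hypothesis that derivatives up to order $m-1$ exist and are continuous then yields the $C^m$ conclusion. I expect the genuine work to lie here: justifying the passage to the limit in the difference quotients---that is, showing they actually converge to a solution of the variational equation rather than merely being bounded---requires applying the continuous dependence of the earlier steps to the enlarged system coupling $\mathbf{X}$ with its variation, together with careful chain-rule bookkeeping at each inductive stage. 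Since the result is classical, at this level of detail I would cite \cite{Har87} for the technical estimates underlying this final step.
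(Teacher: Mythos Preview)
The paper does not prove this theorem at all: it is stated in Appendix~\ref{Appendix:ODETheory} with a bare citation to Hartman \cite{Har87} (Theorem~2.1, p.~94 and Corollary~4.1, p.~101) and used as a black box in the proof of Lemma~\ref{Lemma:StabilityCG}. Your sketch is a correct outline of the standard textbook argument---indeed essentially the one Hartman gives---so in that sense you have supplied what the paper simply imports.

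One organisational remark: as you have written it, Step~2 (continuous dependence on a fixed compact interval $[a,b]$) tacitly presupposes that the nearby solutions already exist on all of $[a,b]$, which is exactly what Step~3 (lower semicontinuity of $\omega_+$) is meant to establish. Peano's theorem only gives you a common \emph{small} subinterval, not $[a,b]$. In the rigorous version these two steps are interleaved via a continuation argument: one shows that, as long as a nearby solution exists, it stays inside the compact tube $K$ about the base solution (this is where Arzel\`a--Ascoli and uniqueness enter), and then the escape lemma forces the maximal interval to contain $[a,b]$. Your ingredients are all present and correct; they just need to be run simultaneously rather than sequentially.
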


So, it follows from the above theorem that if $\mathbf{F}$ satisfies
the conditions of the theorem above, and the problem
\[
\mathbf{X}' = \mathbf{F}(t, \mathbf{X},0), \qquad
\mathbf{X}(0)=\mathbf{X}_\star,
\]
has a solution with existence interval $t\in(\omega_-,\omega_+)$, then
any other solution to \eqref{ODESystem} with $\lambda$ sufficiently
close to $0$ will have, at least the same existence interval. In the
above statement the functions $\omega_+$ and $\omega_-$ can take the
values $\infty$ and $-\infty$. 

\section{Details of the proof of Lemma \ref{Lemma:RegularityAsymptoteDetails}}
\label{Appendix:RegularityAsymptoteDetails}
In this section, we show that the fields $\grave{\Gamma}_\bma{}^\bmb{}_\bmc$ and $\grave{L}_{\bma\bmb}$ extend smoothly to $\mathcal{H}$.  

\subsubsection{Connection coefficients}
The connection coefficients of the Levi-Civita connection $\grave\nabla$ of the metric $\grave\bmg$ with respect to the frame $\{ \bmc_\bma \}$
are defined by the relation
\[
\grave\nabla_\bma \bmc_\bmb = \grave\Gamma_\bma{}^\bmc{}_\bmb \bmc_\bmc.
\]
It can be readily verified that $[\bmc_\bma,\bmc_\bmb]=0$. Thus, it
follows from the Kulkarni formula that
\begin{equation}
\grave\Gamma_\bma{}^\bmb{}_\bmc =\frac{1}{2}\grave{g}^{\bmb\bmd} \big(
\bmc_\bmc(\grave{g}_{\bma\bmd})+\bmc_\bma(\grave{g}_{\bmd\bmc})
-\bmc_\bmb(\grave{g}_{\bma\bmc})\big).
\label{KulkarniFormula}
\end{equation}

\medskip
In the following, it will be shown that the connection coefficients $\grave\Gamma_\bma{}^\bmb{}_\bmc$ are regular at $\grave{\rho}=0$. The analysis of the connection coefficients requires further specification of the coordinates. In the following, we will make use of Gaussian coordinates of the asymptote $\mathcal{H}$ ---that is, the coordinates $(\grave{\tau},\grave{\theta}^A)$ on $\mathcal{H}$ are propagated on the bulk of the spacetime using the relation: 
\begin{equation}
\grave{\nabla}_{\bmpartial_{\grave{\rho}}}\bmpartial_\bmi =0, \qquad
\bmi=\bmzero \, \text{or } \bmA,
\label{GaussianCoordinates}
\end{equation}
where $\bmpartial_{\bmi} \equiv \{ \bmpartial_{\grave{\tau}}, \bmpartial_{\grave{\theta}^A} \}$. Using this condition, one can readily verify that 
\[
\grave{\Gamma}_\bmone{}^\bma{}_\bmi = 0.
\]
The other components of the connection coefficients are not determined by the gauge condition \eqref{GaussianCoordinates} but can be directly computed using Kulkarni's formula \eqref{KulkarniFormula}. One finds that 
\begin{eqnarray*}
&& \grave{\Gamma}_\bmzero{}^\bmzero{}_\bmzero{} = O(\grave{\rho}) \\
&& \grave{\Gamma}_\bmzero{}^\bmone{}_\bmzero{} = -1 + O(\grave{\rho}^2) \\
&& \grave{\Gamma}_\bmzero{}^\bmA{}_\bmzero{} = o(\grave{\rho}^2) \\
&& \grave{\Gamma}_\bmone{}^\bmzero{}_\bmone{} = -  \grave{\tau} (1-\grave{\tau}^2) + O(\grave{\rho})\\
&& \grave{\Gamma}_\bmone{}^\bmone{}_\bmone{}= - \grave{\tau}^2 +O(\grave{\rho}) \\
&& \grave{\Gamma}_\bmone{}^\bmA{}_\bmone{} = O(\grave{\rho}^2) \\
&& \grave{\Gamma}_\bmzero{}^\bmzero{}_\bmA{} = O(\grave{\rho}^3) \\
&& \grave{\Gamma}_\bmzero{}^\bmone{}_\bmA{} = O(\grave{\rho}^2) \\
&& \grave{\Gamma}_\bmA{}^\bmone{}_\bmB{} = O(\grave{\rho}) \\
&& \grave{\Gamma}_\bmzero{}^\bmA{}_\bmB{} = O(\grave{\rho}) \\
&& \grave{\Gamma}_\bmA{}^\bmzero{}_\bmB{} =  O(\grave{\rho}) \\
&& \grave{\Gamma}_\bmA{}^\bmA{}_\bmB{} =  o(\grave{\rho}^3)\\
&& \grave{\Gamma}_\bmB{}^\bmA{}_\bmB{} = o(\grave{\rho}^4) \\
&& \grave{\Gamma}_\bmA{}^\bmA{}_\bmA{} = o(\grave{\rho}^4) \\
\end{eqnarray*}

Thus, the connection coefficients $\grave{\Gamma}{}_\bma{}^\bmb{}_\bmc$
are regular with respect to the coordinates $(\grave{\tau},\grave{\rho})$ in a
neighbourhood of $\mathcal{H}$. 
\subsubsection{The components of the Schouten tensor}
The Schouten tensors $\tilde{L}_{ab}$ and $\grave{L}_{ab}$ are related by the formula
\[
\grave{L}_{ab} - \tilde{L}_{ab} = -\frac{1}{\Theta}\grave{\nabla}_a
\grave{\nabla}_b\Theta + \frac{1}{2\Theta^2}\grave{\nabla}_c\Theta
\grave{\nabla}^c\Theta \grave{g}_{ab}, 
\]
where $\Theta \equiv \grave{\rho} (1-\grave{\tau}^2)$ is the conformal factor relating $\tilde{\bmg}$ and $\grave{\bmg}$. In view of the Einstein field equations, we have that
$\tilde{L}_{ab}=0$ so that 
\begin{equation}
\grave{L}_{ab} =  -\frac{1}{\Theta}\grave{\nabla}_a
\grave{\nabla}_b \Theta+ \frac{1}{2\Theta^2}\grave{\nabla}_c\Theta
\grave{\nabla}^c\Theta \grave{g}_{ab}. 
\label{UnphysicalSchouten}
\end{equation}

\begin{remark}
{\em The previous expression is singular at $\grave{\rho}=0$. We need to compute the components $\grave{L}_{\bma\bmb}$ with respect to $\{ \bmc_\bma \}$.}
\end{remark}

A direct computation shows that
\begin{eqnarray*}
&& \grave{\nabla}_{\bma} \Theta = \bmc_{\bma} (\Theta)  = \begin{cases}
-2 \grave{\tau} \grave{\rho} & \text{if $\bma = \bmzero$}, \\
\grave{\rho} (1-\grave{\tau}^2) & \text{if $\bma = \bmone$}, \\
0 & \text{if $\bma \neq \bmzero, \bmone$},
\end{cases}\\
&& \grave{\nabla}_\bma\grave{\nabla}_\bmb \Theta = \bmc _{\bma} (\bmc_{\bmb} (\Theta)) - \grave{\Gamma}_\bma{}^\bmc{}_\bmb{} \grave{\nabla}_{\bmc} \Theta,
\end{eqnarray*}
where $\bmc_{\bma} (\bmc_{\bmb} (\Theta))$ is given by
\begin{eqnarray*}
\bmc_{\bma} (\bmc_{\bmb} (\Theta)) = \begin{cases}
-2 \grave{\tau} \grave{\rho} & \text{if $\bma \neq \bmb$ and $\bma = \bmzero, \bmone$}, \\
-2 \grave{\rho} & \text{if $\bma = \bmb = \bmzero$}, \\
\grave{\rho} (1-\grave{\tau}^2) & \text{if $\bma = \bmb = \bmone$}, \\
0 & \text{if $\bma$, $\bmb \neq \bmzero , \bmone$},
\end{cases}
\end{eqnarray*}
Thus, the first term on the right-hand side of \eqref{UnphysicalSchouten} is regular. Moreover,
\[
\grave{\nabla}_\bmc\Theta \grave{\nabla}^\bmc \Theta = - \grave{\rho}^2 (1-\grave{\tau}^2)^2,
\]
so that the second term in \eqref{UnphysicalSchouten} is regular. Hence, the components $\grave{L}_{\bma\bmb}$ are regular with respect to the coordinates $(\grave{\tau}, \grave{\rho})$ in a neighbourhood of $\mathcal{H}$. In conclusion, the components of $\grave{L}_{\bma \bmb}$ coincide, on $\mathcal{H}$, with those of the Schouten tensor of Minkowski spacetime in the F-gauge. 



\section{Conformal geodesics on Minkowski spacetime}
\label{Appendix:CGMinkowski}

In this section, we show that the vector $\bmpartial_\tau$ in the
F-gauge representation of the Minkowski spacetime of Section
\ref{Section:MinkowskiCylinder} is a conformal geodesic ---see
Lemma \ref{Lemma:CGMinkowski}. Thus, it should satisfy the equations 
\begin{subequations}
\begin{eqnarray}
&& \bar\nabla_{\bmpartial_\tau} \bmpartial_\tau =
   -2\langle\bar\bmbeta,\bmpartial_\tau\rangle \bmpartial_\tau +
   \bar\bmeta(\bmpartial_\tau,\bmpartial_\tau)\bar\bmbeta^\sharp, \label{MinkowskiCG1}\\
&& \bar\nabla_{\bmpartial_\tau} \bar\bmbeta= \langle \bar\bmbeta,\bmpartial_\tau\rangle
   \bar\bmbeta - \frac{1}{2}\bar\bmeta^\sharp(\bar\bmbeta,\bar\bmbeta)
   (\bmpartial_\tau)^\flat + \bar{\mathbf{L}}(\bmpartial_\tau, \cdot ). \label{MinkowskiCG2}
\end{eqnarray}
\end{subequations}
In particular, one has that
\[
\bar\bmeta(\bmpartial_\tau,\bmpartial_\tau)=1.
\]
In the following, we make use of the expansion
\[
\bar\bmbeta = \bar\beta_0 \mathbf{d}\tau +\bar\beta_1 \mathbf{d}\rho.
\]
A computation gives that
\[
\bar{\mathbf{L}} = \frac{1}{2}\mathbf{d}\tau\otimes\mathbf{d}\tau +
\frac{\tau}{2\rho}(\mathbf{d}\tau\otimes \mathbf{d}\rho +
\mathbf{d}\rho\otimes \mathbf{d}\tau) +
\frac{\tau^2-1}{2\rho^2}\mathbf{d}\rho\otimes\mathbf{d}\rho +\frac{1}{2}\bmsigma.
\]
So, in particular,
\[
\bar{\mathbf{L}}(\bmpartial_\tau,\cdot) = \frac{1}{2}\mathbf{d}\tau + \frac{\tau}{2\rho}\mathbf{d}\rho.
\]
One also has that 
\begin{eqnarray*}
&& \bar\nabla_{\bmpartial_\tau} \mathbf{d}\tau = -\bar\Gamma_0{}^0{}_0
   \mathbf{d}\tau - \bar\Gamma_0{}^0{}_1\mathbf{d}\rho, \\
&& \bar\nabla_{\bmpartial_{\tau}} \bmpartial_{\tau} = \bar\Gamma_{0}{}^{0}{}_{0} \bmpartial_{\tau} + \bar\Gamma_{0}{}^{1}{}_{0} \bmpartial_{\rho}, \\
&& \bar\nabla_{\bmpartial_\tau} \mathbf{d}\rho =
   -\bar\Gamma_0{}^1{}_0\mathbf{d}\tau -\bar\Gamma_0{}^1{}_1\mathbf{d}\rho,
\end{eqnarray*}
so that 
\[
\bar\nabla_{\bmpartial_\tau} \bar\bmbeta = (\bmpartial_{\tau} \bar\beta_{0} - \bar\beta_{0} \bar{\Gamma}_{0}{}^{0}{}_{0}- \bar\beta_{1} \bar{\Gamma}_{0}{}^{1}{}_{0}) \bmd{\tau} + (\bmpartial_{\tau} \bar\beta_{1} - \bar\beta_{0} \bar{\Gamma}_{0}{}^{0}{}_{1}- \bar\beta_{1} \bar{\Gamma}_{0}{}^{1}{}_{1}) \bmd{\rho},
\]
where
\begin{eqnarray*}
&& \bar\Gamma_0{}^0{}_0 = \tau, \qquad \bar\Gamma_0{}^1{}_0 =-\rho, \\
&& \bar\Gamma_0{}^1{}_1 =- \tau.
\end{eqnarray*}

\medskip
Substituting all of the above in equation \eqref{MinkowskiCG1} one
concludes that
\[
\bar\beta_0 =0, \qquad \bar\beta_1 = \frac{1}{\rho},
\]
so that the vector $\bmpartial_\tau$ is tangent to timelike conformal
geodesics with 
\[
\bar\bmbeta = \frac{1}{\rho}\mathbf{d}\rho.
\]
Equation \eqref{MinkowskiCG2} can be shown to be satisfied identically.

\medskip
Finally, let us find the connection to the physical spacetime. Recalling that 
\[
\bar\bmeta = \Theta^2 \tilde\bmeta, \qquad \Theta=\rho(1-\tau^2),
\]
one has that
\[
\tilde\bmbeta = \bar\bmbeta + \mathbf{d}\ln\Theta,
\]
so that 
\[
\tilde\bmbeta = \frac{2}{\rho}\mathbf{d}\rho -
\frac{2\tau}{1-\tau^2}\mathbf{d}\tau.
\]
To transform to the physical coordinates, it is observed that
\[
\tau = \frac{t}{r}, \qquad \rho = \frac{r}{r^2-t^2},
\]
so that
\[
\mathbf{d}\tau =\frac{1}{r}\mathbf{d}t -\frac{t}{r^2}\mathbf{d}r,
\qquad \mathbf{d}r = -\frac{r^2+t^2}{(r^2-t^2)^2}\mathbf{d}r + \frac{2rt}{(r^2-t^2)^2}\mathbf{d}t.
\]
From the above, it follows that
\[
\tilde\bmbeta = \frac{2}{r^2-t^2}\big( t\mathbf{d}t - r\mathbf{d} r  \big).
\]
In particular, one has that
\[
\tilde\bmbeta\big|_{t=0} = -\frac{2}{r}\mathbf{d}r =
\frac{1}{\Omega}\mathbf{d}\Omega, \qquad \Omega\equiv \frac{1}{r^2}.
\]
The conformal factor $\Omega$ realises the \emph{point compactification} of infinity in the Euclidean space. A further
computation shows that
\[
\bmpartial_\tau = \frac{\rho(1+\tau^2)}{\Theta^2}\bmpartial_t + \frac{2\tau\rho}{\Theta^2}\bmpartial_r.
\]
In particular, one has that
\[
\tilde{\bmeta}(\bmpartial_\tau,\bmpartial_\tau) =\frac{1}{\Theta^2}\big(
\rho^2(1+\tau^2)^2 + 4\tau^2\rho^2  \big)\rightarrow \infty, \qquad
\mbox{as} \quad \tau\rightarrow\pm 1. 
\]
The associated integral curves are 
\[
y^\mu(\tau) = \left( \frac{\tau}{\rho_\star(1-\tau^2)},
  \frac{1}{\rho_\star(1-\tau^2)} , \theta_\star, \varphi_\star \right).
\]
As
\[
r(\tau)^2-t(\tau)^2 = \frac{1}{\rho_\star(1-\tau^2)},
\]
one has that these curves do not generate (standard) hyperboloids.



\end{document}